\documentclass{article}

\usepackage{microtype}
\usepackage{graphicx}
\usepackage{subcaption}
\usepackage{booktabs} 
\usepackage{enumitem}

\usepackage{adjustbox}
\usepackage{hyperref}

\usepackage{caption}
\usepackage{subcaption}

\usepackage[preprint]{icml2026}

\usepackage{amsmath}
\usepackage{amssymb}
\usepackage{mathtools}
\usepackage{amsthm}
\newcommand{\calX}{\mathcal{X}}
\newcommand{\calY}{\mathcal{Y}}
\newcommand{\dTV}{d_{\mathrm{TV}}}
\newcommand{\diamP}{\mathrm{diam}_{\mathrm{TV}}(\mathcal{P})}
\newcommand{\E}{\mathbb{E}}

\usepackage[disable,colorinlistoftodos, prependcaption,textsize=tiny]{todonotes}

\usepackage{aliascnt}
\usepackage[capitalize,noabbrev]{cleveref}

\theoremstyle{plain}
\newtheorem{theorem}{Theorem}[section]

\newaliascnt{proposition}{theorem}
\newtheorem{proposition}[proposition]{Proposition}
\aliascntresetthe{proposition}

\newaliascnt{lemma}{theorem}
\newtheorem{lemma}[lemma]{Lemma}
\aliascntresetthe{lemma}

\newaliascnt{corollary}{theorem}
\newtheorem{corollary}[corollary]{Corollary}
\aliascntresetthe{corollary}

\theoremstyle{definition}

\newaliascnt{definition}{theorem}
\newtheorem{definition}[definition]{Definition}
\aliascntresetthe{definition}

\newaliascnt{assumption}{theorem}
\newtheorem{assumption}[assumption]{Assumption}
\aliascntresetthe{assumption}

\theoremstyle{remark}

\newaliascnt{remark}{theorem}
\newtheorem{remark}[remark]{Remark}
\aliascntresetthe{remark}

\DeclareMathOperator*{\argmin}{arg\,min}

\icmltitlerunning{Structured Credal Learning}

\begin{document}

\twocolumn[
  \icmltitle{Structured Credal Learning}
  \icmlsetsymbol{equal}{*}

  \begin{icmlauthorlist}
    \icmlauthor{Varun Venkatesh}{lmu,mcml}
    \icmlauthor{Eyke Hüllermeier}{info,mcml}
    \icmlauthor{Bernd Bischl}{lmu,mcml}
    \icmlauthor{Mina Rezaei}{lmu,mcml}
  \end{icmlauthorlist}

  \icmlaffiliation{lmu}{Institute of Statistics, LMU Munich}
  \icmlaffiliation{mcml}{Munich Center for Machine Learning}
  \icmlaffiliation{info}{Institute of Informatics, LMU Munich}

  \icmlcorrespondingauthor{Varun Venkatesh}{varun.venkatesh@campus.lmu.de}
  \icmlcorrespondingauthor{Mina Rezaei}{Mina.Rezaei@stat.uni-muenchen.de}

  \icmlkeywords{Machine Learning, ICML}

  \vskip 0.3in
]
\printAffiliationsAndNotice{} 

\begin{abstract}

Real-world learning tasks often encounter uncertainty due to covariate shift and noisy or inconsistent labels. However, existing robust learning methods merge these effects into a single distributional uncertainty set. In this work, we introduce a novel structured credal learning framework that explicitly separates these two sources. Specifically, we derive geometric bounds on the total variation diameter of structured credal sets and demonstrate how this quantity decomposes into contributions from covariate shift and expected label disagreement. This decomposition reveals a \emph{gating effect}: covariate modulates how much label disagreement contributes to the joint uncertainty such that seemingly benign covariate shifts can substantially increase the effective uncertainty. We also establish finite-sample concentration bounds in a fixed covariate regime and demonstrate that this quantity can be efficiently estimated. Lastly, we show that robust optimization over these structured credal sets reduces to a tractable discrete min–max problem, avoiding ad-hoc robustness parameters. Overall, our approach provides a principled and practical foundation for robust learning under combined covariate and label mechanism ambiguity.
\end{abstract}

\color{red}

\color{black}

\vspace{-5pt}
\section{Introduction} 
\vspace{-3pt}
Predictive uncertainty is crucial for developing reliable and trustworthy machine learning algorithms used to make critical decisions, such as medical diagnosis~\citep{bordes2010label,shalit2017estimating,wang2025udel}, drug discovery~\citep{svensson2025enhancing}, and autonomous driving \citep{chytas2024pooling}. This uncertainty arises from multiple, fundamentally distinct sources: distributional mismatch and inherent target ambiguity~\citep{hullermeier2005learning,gal2016dropout}. Specifically, \textit{covariate shift} occurs when the input distribution differs between training and test time while the conditional labeling mechanism distribution remains unchanged. We refer to \textit{label mechanism ambiguity} for uncertainty in the conditional label distribution arising from measurement noise, annotator variability, or intrinsic class overlap, and it imposes a limit on achievable prediction accuracy~\citep{bordes2010label,natarajan2013learning,gruber2025revisiting,huang2023genkl}. Crucially, these sources demand different operational remedies—distributional adaptation for the former, improved supervision or label reconciliation for the latter.

Most existing methods primarily address these challenges in isolation and typically do not \color{black} decouple marginal and conditional uncertainty in robustness certificates. Domain adaptation \citep{chen25f,tamang2025handling}, importance weighting \citep{li2020robust,sugiyama2007covariate}, invariant learning methods \citep{li2024learning,nguyen2025casual}, and distributionally robust optimization (DRO) \citep{nguyen2024beyond,jeongmulti,zhu2025leveraging,nguyen25a} have shown strong performance in mitigating covariate shift by aligning feature distributions across environments, typically under the assumption of stable labeling mechanisms. In contrast, research on label mechanism ambiguity focuses on robust training or uncertainty-aware prediction under fixed input distributions~\citep{hullermeier2005learning,hullermeier2014learning, v286cai25c,zhou2022ramifications}. Recent works~\citep{li2024noisy,caprio2024credal} attempt to diagnostically decompose prediction error into marginal and conditional components. For example, the credal learning framework of~\citet{caprio2024credal}, grounded in imprecise probability theory~\citep{augustin2014introduction}, replaces single distributions with convex sets of distributions (credal sets) to more faithfully represent uncertainty, limited knowledge, and data variability. However, it does not provide formal predictive robustness guarantees that can explicitly attribute observed errors to covariate shift versus label mechanism ambiguity.

In this paper, we propose a structured credal learning framework that explicitly distinguishes the sources of uncertainty into covariate shift and label mechanism ambiguity. Instead of modeling distributional uncertainty as an unstructured set of joint distributions, we construct credal sets as convex combinations of product measures formed by pairing plausible covariate distributions with plausible labeling mechanisms. 

Our main theoretical contribution is a geometric characterization of the resulting credal set under total variation (TV) distance. We show that its diameter decomposes into \emph{interpretable contributions from covariate shift} and \emph{expected label mechanism ambiguity}, revealing a precise interaction between environmental variation and annotator ambiguity. We further establish concentration bounds in a fixed covariate regime with uncertain labeling mechanisms and prove that \emph{the credal diameter equals the maximum pairwise annotator disagreement rate}. This transforms an abstract robustness parameter into a directly measurable statistic that admits tight finite-sample concentration guarantees. Finally, we show that robust learning over a structured credal set admits a tractable DRO formulation where the worst-case risk is attained at an extreme point, reducing learning to a finite min--max selection. This avoids extrinsic radius tuning while restricting the adversary to semantically meaningful alternatives.

Taken together, our contributions are: 
(i) We introduce a structured credal learning framework that explicitly decomposes distributional uncertainty into covariate shift and label mechanism ambiguity. 
(ii) We derive explicit bounds on the total-variation diameter of structured credal sets and show that it decomposes along the same two sources of uncertainty linked by a gating interaction. This yields interpretable robust generalization bounds directly attributable to each source.
(iii) In the fixed-covariate regime, we prove the credal diameter equals the maximum pairwise disagreement rate — a directly measurable statistic estimable at parametric rate — and establish a minimax lower bound.
(iv) We show that robust optimization over structured credal sets reduces to a finite min–max problem over extreme-points, with an intrinsic, observable robustness radius that requires no extrinsic tuning. 

\vspace{-5pt}
\section{Related work}
\vspace{-5pt}
Our work bridges credal learning theory and DRO.

\textbf{Credal learning Theory} (CLT)~\citep{caprio2024credal} extends statistical learning theory grounded on imprecise probability~\citep{augustin2014introduction}. Recent work leverages this perspective to derive finite-sample generalization guarantees whose tightness is controlled by the total variation diameter $\diamP$ of the credal set~\citep{lohr2025credal,javanmardi2024conformalized}. Existing CLT formulations~\citep{shariatmadar2025generalized,wang2025credal,mubashar2025random,caprio2025credal}, however, treat uncertainty at the level of the joint distribution over $(X,Y)$, thereby conflating heterogeneity in the covariate marginal $P_X$ with ambiguity in the conditional labeling mechanism $P_{Y|X}$. In contrast, we introduce a structured credal construction that factorizes uncertainty across covariate distributions and labeling mechanisms enabling a geometric decomposition of the induced diameter into interpretable covariate-driven and label-driven components.  


\textbf{Distributionally robust optimization} defines uncertainty sets as Wasserstein or $f$-divergence balls around the empirical distribution~\citep{ben2013robust,ai2024not,cai2025diagnosing,zhu2021kernel,ehyaei2024wasserstein}, requiring specification of an extrinsic radius parameter and often producing worst-case distributions with limited semantic interpretability. In contrast, our uncertainty set is the convex hull of coherent generative ``worlds'' (covariate  $\times$ labeling mechanism), leading to an intrinsic, data-driven robustness quantity via the credal diameter decomposition, This yields robustness certificates that are directly attributable to distinct sources of distributional variability and a finite extreme-point reduction of the robust objective.

\vspace{-4pt}
\section{Structured Credal Learning Framework}
\label{sec:framework}
\vspace{-5pt}
\subsection{Preliminary: Credal Learning Theory}
\vspace{-3pt}
\paragraph{Notation.} Let $\mathcal{X}$ and $\mathcal{Y}$ denote the feature and label spaces, equipped with $\sigma$-algebras $\mathcal{A}_{\mathcal{X}}$ 
and $\mathcal{A}_{\mathcal{Y}}$. The joint space $\Omega := \mathcal{X} \times 
\mathcal{Y}$ carries the product $\sigma$-algebra $\mathcal{A}_\Omega := 
\mathcal{A}_{\mathcal{X}} \otimes \mathcal{A}_{\mathcal{Y}}$.


CLT generalizes empirical risk minimization (ERM) by replacing the assumption of a single data-generating distribution with a \emph{credal set}, i.e., a convex set $\mathcal{P}$ of plausible probability distributions over $\mathcal{X}\times\mathcal{Y}$, establishing generalization guarantees across $\mathcal{P}$.
This framework yields a powerful robustness certificate: for an ERM $\hat{h}$ trained on a source $P \in \mathcal{P}$, the generalization capability extends to any target $Q \in \mathcal{P}$ via the following bound:
\begin{equation} \label{eq:credal_bound}
\mathbb{P}\left[
L_Q(\hat{h}) - L_P(h^*) \;\leq\; \varepsilon^{*}(\delta,n,\mathcal{H}) \;+\; \eta
\right]
\;\geq\; 1 - \delta,
\end{equation}
where $L_Q(\hat{h})$ \todo[inline]{Typically defined on a probability measure $Q$} is the expected population risk on the target distribution under the ERM hypothesis $\hat{h}$ and $L_P(h^*)$ is the optimal risk on the source. Here, $\varepsilon^{*}$ encapsulates the learnable statistical error
\footnote{Under the corresponding realizability and hypothesis class $\mathcal{H}$ complexity assumptions, Corollaries 4.2, 4.6, and 4.11 in  \citet{caprio2024credal}  are unified by the formulation in \cref{eq:credal_bound}.  In particular, $\varepsilon^{*}$ recovers the respective error terms
$\epsilon^{*}$, $\epsilon^{**}$, and $\epsilon^{***}$ }, while $\eta$ represents the \textit{robustness penalty} defined by the credal diameter, $\eta = \diamP = \sup_{P,Q \in \mathcal{P}} d_{\mathrm{TV}}(P,Q) = \sup_{A \in \mathcal{A}_\Omega} |P(A) - Q(A)|$. 

However, credal learning theory compresses all uncertainty sources—covariate shift and label mechanism ambiguity alike—into the single scalar $\diamP$. This conflation obscures whether robustness penalties arise from distributional mismatch or intrinsic labeling uncertainty, precluding targeted interventions. Two settings may yield identical credal diameters yet require different remedies: high covariate shift with deterministic labels calls for domain adaptation, while stable features with ambiguous labels demand improved supervision. 



This naturally leads to the notion of a \emph{structured credal set}, which serves as the central object of analysis in the remainder of our work. We formalize a structured uncertainty model that captures variability in both the covariate distribution and the label mechanism ambiguity. Rather than representing uncertainty by a single undifferentiated collection of joint distributions, we consider a collection of plausible distributions that explicitly preserves the distinction between covariate variation and labeling uncertainty.

\subsection{Covariate--Labeling Uncertainty}
\label{subsec:covariate_label}

We model uncertainty in the data-generating process as arising from two conceptually distinct sources: variation in the environment, captured by the feature distribution (covariate shift), and variation in the labeling mechanism (labeling uncertainty). Operationally, this corresponds to a two-stage generative description: first sample $X \sim P_X$, then sample $Y \mid X \sim P_{Y \mid X}(\cdot \mid X)$.

\begin{definition}[Feature and Label Distributions]
\label{def:families}
We define two finite collections of probability measures:
\begin{itemize}
    \item Let $\mathcal{E} \color{black}:= \{P_{X}^{(i)}\}_{i=1}^{N_{X}}$ be a collection of probability distributions on $(\mathcal{X}, \mathcal{A}_{\mathcal{X}})$, representing plausible feature distributions or \emph{environments} for short (e.g., different demographics or imaging equipments).
    \item Let $ \mathcal{L} := \{P_{Y|X}^{(j)}\}_{j=1}^{N_Y}$ be a finite collection of conditional probability distributions from $(\mathcal{X}, \mathcal{A}_{\mathcal{X}})$ to $(\mathcal{Y}, \mathcal{A}_{\mathcal{Y}})$,
representing plausible labeling mechanisms or \emph{labelers} for short (e.g.\ different annotators).
\color{black}
\end{itemize}
\end{definition}

Each combination of an environment and a labeler defines a coherent ``world'' or data-generating process. We formalize this coupling via the joint product distribution.


\begin{definition}[Joint Distribution]\label{def:joint-from-components} \todo[inline]{however sub-probability of $Y|X$ means this is not longer a joint distribution over a probability measure}
For $(i,j) \in [N_X] \times [N_Y]$, define $P_{ij}$ on $(\Omega, \mathcal{A}_\Omega)$ for $ A \in \mathcal{A}_{\mathcal{X}},\, B \in \mathcal{A}_{\mathcal{Y}}$ by
\[
P_{ij}(A \times B) := \int_A P_{Y|X}^{(j)}(B \mid x)\, \mathrm{d}P_X^{(i)}(x),
\]
When densities exist, $p_{ij}(x,y) = p_X^{(i)}(x) \cdot p_{Y|X}^{(j)}(y|x)$. 

\end{definition}

\subsection{Structured Credal Set}

\begin{definition}[Structured Credal Set]
\label{def:structured-credal-set}
The \emph{structured credal set} is defined as the convex hull of the joint
distributions induced by all environment--labeler pairs:
\begin{equation}
\mathcal P
\;:=\;
\operatorname{Conv}\big(\{P_{ij}\}_{i \in [N_X],\, j \in [N_Y]}\big).
\end{equation}
 \end{definition}
\vspace{-7pt}

\begin{remark}[Justification and Construction]
The convex hull in \cref{def:structured-credal-set} admits a simple generative interpretation: if the environment--labeler pair $(i,j)$ is chosen based on a fixed unknown mixing distribution $\pi$ over $[N_X] \times [N_Y]$, the resulting data-generating  distribution is 
\(
P_\pi = \sum_{i,j} \pi_{ij} P_{ij}.
\)
Since $\pi$ is unknown, the set of all such mixtures is exactly the $\operatorname{Conv}\big(\{P_{ij}\}\big)$. This construction imposes no adversarial assumption yet encodes uncertainty about which coherent generative world (or mixtures thereof) may be encountered at deployment. In settings like crowdsourcing and medical imaging, where each annotator/expert defines a labeler while different hospitals or patient populations define environments, the sets $\mathcal{E}$ and $\mathcal{L}$ arise naturally. When the generative process is parametric, these can be constructed by discretizing plausible parameter ranges. 

\end{remark}
\color{black}

\begin{lemma}[Extremal Supremum Property]
\label{lem:extremum}
Let $\mathcal{P}$ be the structured credal set. Then:
\begin{enumerate}[label=(\roman*), noitemsep, topsep=2pt]
    \item For any continuous convex $f: \mathcal{P} \to \mathbb{R}$,
    \[
        \sup_{Q \in \mathcal{P}} f(Q) = \max_{Q \in \mathrm{ex}(\mathcal{P})} f(Q).
    \]
    \item For any continuous, separately convex $f: \mathcal{P}^2 \to \mathbb{R}$,
    \[
        \sup_{(Q_1, Q_2) \in \mathcal{P}^2} f(Q_1, Q_2) = \max_{(Q_1, Q_2) \in \mathrm{ex}(\mathcal{P})^2} f(Q_1, Q_2).
    \]
\end{enumerate}
\end{lemma}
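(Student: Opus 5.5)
The plan is to exploit the fact that $\mathcal{P}$ is the convex hull of the \emph{finitely many} points $\{P_{ij}\}$. This means we never need Krein--Milman or any topological compactness in the infinite-dimensional space of measures; Jensen-type inequalities on finite convex combinations suffice. First I will note that $\mathrm{ex}(\mathcal{P}) \subseteq \{P_{ij}\}_{i,j}$. Indeed, any $Q \in \mathcal{P}$ has the form $Q = \sum_{i,j} \pi_{ij} P_{ij}$ with $\pi$ in the simplex. If $Q$ is extreme, then it cannot be a nontrivial combination of the points with positive weight unless all of them coincide with $Q$, so $Q = P_{ij}$ for some $(i,j)$. Conversely, every $Q \in \mathcal{P}$ is a convex combination of extreme points. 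Equivalently, it is a convex combination of the distinct points of the finite set $\{P_{ij}\}$ after pruning the non-extreme ones. This pruning is legitimate because in a finite-dimensional polytope (here, the span of finitely many measures) the vertices generate the hull. In particular $\mathrm{ex}(\mathcal{P})$ is a nonempty finite set, so the maxima on the right-hand sides exist.

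For (i), take any $Q = \sum_k \lambda_k E_k$ with $E_k \in \mathrm{ex}(\mathcal{P})$ and $\lambda$ in the simplex. Convexity of $f$, applied inductively through finite Jensen, gives
\[
f(Q) \le \sum_k \lambda_k f(E_k) \le \max_{E \in \mathrm{ex}(\mathcal{P})} f(E).
\]
Taking the supremum over $Q$ yields $\le$, and the reverse inequality is trivial since $\mathrm{ex}(\mathcal{P}) \subseteq \mathcal{P}$. Continuity is not actually needed here; I will remark that it is harmless.

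For (ii), fix $Q_2 \in \mathcal{P}$ and apply (i) to the convex function $Q_1 \mapsto f(Q_1,Q_2)$, giving $f(Q_1,Q_2) \le \max_{E_1} f(E_1,Q_2)$. Then, for each fixed extreme $E_1$, apply (i) again to $Q_2 \mapsto f(E_1,Q_2)$, giving $f(E_1,Q_2) \le \max_{E_2} f(E_1,E_2)$. Chaining these bounds gives $\sup_{\mathcal{P}^2} f \le \max_{\mathrm{ex}(\mathcal{P})^2} f$, and the reverse inequality is again trivial.

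The main obstacle is the characterization of $\mathrm{ex}(\mathcal{P})$: showing that every point is a convex combination of extreme points and that these form a finite set. Since the $P_{ij}$ may coincide or some may lie in the hull of others, I will handle this by working inside the finite-dimensional linear span of $\{P_{ij}\}$, where $\mathcal{P}$ is a polytope. Minkowski's theorem (or Carathéodory plus removal of redundant generators) then gives both facts directly. If one prefers to avoid this step, an alternative is to state the maxima over $\{P_{ij}\}$ directly. This set contains $\mathrm{ex}(\mathcal{P})$, and the Jensen argument works verbatim with the $P_{ij}$ as generators. Because $\mathrm{ex}(\mathcal{P})\subseteq\{P_{ij}\}$ and every non-extreme $P_{ij}$ is dominated by extreme ones via the same inequality, the two maxima agree.
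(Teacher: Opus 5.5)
Your proof is correct, but it takes a different route from the paper's.

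\textbf{The paper's route.} It calls $\mathcal{P}$ a compact convex polytope and proves (i) by the Bauer maximum principle. For (ii) it defines the partial maximum $g(Q_2)=\max_{Q_1\in\mathrm{ex}(\mathcal{P})} f(Q_1,Q_2)$. It then argues that $g$ is continuous and convex, being a pointwise maximum of finitely many continuous convex functions, and applies Bauer a second time.

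\textbf{Your route.} You use finite generation directly. $\mathcal{P}$ sits in the finite-dimensional span of the $P_{ij}$, its extreme points are the irredundant generators, and finite Jensen gives (i). For (ii) you chain two applications of (i): first in $Q_1$, then in $Q_2$ with $Q_1$ frozen at an extreme point. This avoids checking that $g$ is convex or continuous at all.

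\textbf{What each buys.} Your argument is fully elementary and needs no topology. Continuity of $f$ is never used, and you never have to say in which topology on measures $\mathcal{P}$ is compact and $f$ is (upper semi)continuous; the paper leaves that implicit when it invokes Bauer. The paper's argument is shorter. It also extends, with a suitable compact topology, to infinite families of environments and labelers, such as those raised in the future-work section. There your Jensen-over-finitely-many-generators step would no longer apply as stated.

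\textbf{One small point.} Your closing ``alternative'' does not actually avoid the characterization step. Saying that each non-extreme $P_{ij}$ is dominated by extreme points already requires writing it as a convex combination of extreme points. This is harmless, because your main pruning argument (a minimal generating subset consists of extreme points and still generates $\mathcal{P}$) supplies exactly that fact.
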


\color{black}

\section{Theoretical Analysis: Geometry of the Structured Credal Set}
\label{sec:theory}


The diameter $\mathrm{diam}_{\mathrm{TV}}(\mathcal{P}) := \sup_{P,Q \in \mathcal{P}} \dTV(P,Q)$ governs robust generalization bounds by Equation \eqref{eq:credal_bound}. We show that this diameter decomposes into interpretable contributions from covariate shift and labeling uncertainty, with a particularly clean form when $N_X = 1$ (~\cref{sec:nx1}).

\subsection{Total Variation Decomposition}
\label{subsec:tv-decomposition}

By \cref{lem:extremum} and the convexity of $\dTV$, the $\diamP$ is determined by the maximal distance between the extreme points in $\mathcal{P}$, i.e., any two distinct data-generating processes $P_{ij}$ and $P_{i'j'}$ within our set. 

In the following, we analyze three cases: (1) pure labeler uncertainty under a fixed covariate (fixed $i$), (2) pure covariate shift under a shared labeler (fixed $j$), and (3) the general case where both sources of uncertainty vary simultaneously. Detailed proofs for all propositions and theorems in this section are provided in the Appendix \ref{proofs:general}

\begin{proposition}[Label Disagreement Distance]
\label{prp:label-disagreement}
Let $P_{ij}, P_{ij'}$ share the same environment marginal $P_X^{(i)}$. Then
\begin{multline*}
\dTV(P_{ij}, P_{ij'}) = \\
\mathbb{E}_{X \sim P_X^{(i)}} \left[ d_{\mathrm{TV}}\big( P_{Y|X}^{(j)}(\cdot \mid X), P_{Y|X}^{(j')}(\cdot \mid X) \big) \right].
\end{multline*}
\end{proposition}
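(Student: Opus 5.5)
The approach is to pass to densities with respect to a common dominating measure and use the $L^1$ form of total variation, $\dTV(P,Q)=\tfrac12\int |p-q|$. The product structure then lets the integral split into an outer integral over $x$ and an inner integral over $y$.

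\textbf{Step 1 (common dominating measure).} Set $\nu := P_X^{(i)}$ on $\mathcal{X}$. For $\nu$-a.e.\ $x$, the conditionals $P_{Y|X}^{(j)}(\cdot\mid x)$ and $P_{Y|X}^{(j')}(\cdot\mid x)$ are dominated by their average $\lambda_x := \tfrac12\big(P_{Y|X}^{(j)}(\cdot\mid x)+P_{Y|X}^{(j')}(\cdot\mid x)\big)$. It is cleaner, however, to dominate both joint laws globally by $M := \tfrac12(P_{ij}+P_{ij'})$. Then $M$ has $X$-marginal $\nu$ and kernel $\lambda_x$, so by the disintegration built into Definition~\ref{def:joint-from-components}, $M(\mathrm{d}x,\mathrm{d}y)=\nu(\mathrm{d}x)\,\lambda_x(\mathrm{d}y)$.

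\textbf{Step 2 (identify the joint densities).} We want to show
\[
\frac{\mathrm{d}P_{ij}}{\mathrm{d}M}(x,y)=\frac{\mathrm{d}P_{Y|X}^{(j)}(\cdot\mid x)}{\mathrm{d}\lambda_x}(y) =: g_j(x,y),
\]
and similarly for $j'$. On rectangles, $P_{ij}(A\times B)=\int_A\int_B g_j(x,y)\,\lambda_x(\mathrm{d}y)\,\nu(\mathrm{d}x)=\int_{A\times B} g_j\,\mathrm{d}M$. A $\pi$-$\lambda$ argument then extends this identity to all of $\mathcal{A}_\Omega$.

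This step is the main technical obstacle. It requires $g_j$ to be jointly measurable in $(x,y)$, which is guaranteed because $g_j$ is a Radon--Nikodym derivative of measurable kernels; a version can be chosen jointly measurable via the standard kernel density result (e.g.\ via countably generated $\mathcal{A}_\mathcal{Y}$). If one wants to avoid that subtlety, the cleanest route is to define $g_j := \mathrm{d}P_{ij}/\mathrm{d}M$ directly and verify that its $x$-sections are the conditional densities $\nu$-a.e.

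\textbf{Step 3 (compute the distance).} With the densities in hand, Fubini/Tonelli gives
\[
\dTV(P_{ij},P_{ij'})=\tfrac12\int_{\mathcal{X}}\int_{\mathcal{Y}} |g_j-g_{j'}|\,\lambda_x(\mathrm{d}y)\,\nu(\mathrm{d}x).
\]
The inner integral is exactly $\dTV\big(P^{(j)}_{Y|X}(\cdot\mid x),P^{(j')}_{Y|X}(\cdot\mid x)\big)$. Since $\nu=P_X^{(i)}$, the whole expression is the expectation claimed in Proposition~\ref{prp:label-disagreement}.

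As a sanity check on the inequality ``$\le$'', it also follows directly from the definition: for any $C\in\mathcal{A}_\Omega$ with sections $C_x$,
\[
|P_{ij}(C)-P_{ij'}(C)|\le\int\big|P^{(j)}(C_x\mid x)-P^{(j')}(C_x\mid x)\big|\,\mathrm{d}\nu\le \mathbb{E}[\dTV(\cdot)].
\]
Equality is attained by the set $C^*=\{(x,y): g_j(x,y)>g_{j'}(x,y)\}$, whose $x$-sections are the optimal sets for the conditional TV distances. This confirms the formula and shows that the only real work is the joint measurability of $C^*$, i.e.\ Step 2.
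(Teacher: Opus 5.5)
Your argument is correct once $\mathcal{A}_{\calY}$ is assumed countably generated. Its core is the paper's computation. The paper writes both joints with densities $p_X^{(i)}(x)\,p^{(j)}_{Y|X}(y\mid x)$ with respect to a fixed product measure $\mu_X\otimes\mu_Y$, applies $\dTV=\tfrac12\int|p-q|$, pulls $p_X^{(i)}(x)\ge 0$ out of the absolute value, and recognizes the inner $y$-integral as the conditional distance, exactly as in your Step~3.

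The difference is the reference measure. The paper tacitly needs a single $\sigma$-finite $\mu_Y$ dominating every $P^{(j)}_{Y|X}(\cdot\mid x)$, with jointly measurable densities. This is free for finite $\calY$ (counting measure). It fails, e.g., for deterministic labelers $\delta_{f(x)}$ on $\calY=\mathbb{R}$ with $f$ of uncountable range, because a $\sigma$-finite measure has at most countably many atoms. Your $M=\tfrac12(P_{ij}+P_{ij'})=\nu\otimes\lambda_x$ lets the reference measure in $y$ depend on $x$, so no global domination is required. The price is a jointly measurable version of $\mathrm{d}P^{(j)}_{Y|X}(\cdot\mid x)/\mathrm{d}\lambda_x$. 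The kernel form of the Radon--Nikodym theorem (martingale convergence along a countable generating sequence of finite partitions) provides this when $\mathcal{A}_{\calY}$ is countably generated, e.g.\ for any standard Borel label space. So your route covers cases, such as real-valued deterministic labelers, that the paper's argument does not.

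You should, however, state countable generation as a hypothesis rather than call joint measurability ``guaranteed''. It is not automatic, and your fallback does not avoid it. With $g_j:=\mathrm{d}P_{ij}/\mathrm{d}M$ you only obtain $\int_B g_j(x,y)\,\lambda_x(\mathrm{d}y)=P^{(j)}_{Y|X}(B\mid x)$ for $\nu$-a.e.\ $x$, with a null set depending on $B$. A countable generating $\pi$-system is what lets you make this hold for all $B$ simultaneously. Likewise, the substance of your sanity check is not the measurability of $C^*=\{g_j>g_{j'}\}$, which is automatic. It is the optimality of its sections for the conditional distances, together with the measurability of $x\mapsto\dTV\big(P^{(j)}_{Y|X}(\cdot\mid x),P^{(j')}_{Y|X}(\cdot\mid x)\big)$.

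Some regularity is genuinely needed. Take $\calX=\calY=[0,1]$, with $\mathcal{A}_{\calX}$ Borel and $\mathcal{A}_{\calY}$ the countable--cocountable $\sigma$-algebra. Let $P_X^{(i)}$ be Lebesgue measure, $P^{(j)}_{Y|X}(\cdot\mid x)=\delta_x$, and $P^{(j')}_{Y|X}(\cdot\mid x)=\mu$, where $\mu(B)=0$ for countable $B$ and $\mu(B)=1$ for cocountable $B$. Each conditional distance is $1$ (test $B=\{x\}$). Yet $P_{ij}(A\times B)=\mathrm{Leb}(A\cap B)=\mathrm{Leb}(A)\,\mu(B)=P_{ij'}(A\times B)$ for every rectangle, so $P_{ij}=P_{ij'}$ and $\dTV(P_{ij},P_{ij'})=0$. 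Here $g_j=g_{j'}=1$ $M$-a.e., so no choice of versions can have $x$-sections equal to the mutually singular conditional densities. This is precisely where the fallback breaks.
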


When the environment is fixed, the TV distance between two joint distributions reduces to the expected pointwise disagreement between labelers.



\begin{proposition}[Environment Shift Distance]
\label{prp:covariate-shift}

Let $P_{ij}$ and $P_{i'j}$ be joint distributions sharing the same labeler $P_{Y|X}^{(j)}$ but differing in their environment marginals
$P_X^{(i)}$ and $P_X^{(i')}$. Then
\[
\dTV(P_{ij}, P_{i'j})
\;=\;
\dTV\big(P_X^{(i)}, P_X^{(i')}\big).
\] \todo[inline]{holds with equality if $Y|X$ is a proper probability distribution}
\color{black}
\end{proposition}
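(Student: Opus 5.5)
We prove the two inequalities $\dTV(P_{ij}, P_{i'j}) \ge \dTV(P_X^{(i)}, P_X^{(i')})$ and $\dTV(P_{ij}, P_{i'j}) \le \dTV(P_X^{(i)}, P_X^{(i')})$ separately. Throughout we use the set-supremum form $\dTV(P,Q)=\sup_{C\in\mathcal{A}_\Omega}|P(C)-Q(C)|$ from \cref{sec:framework}. We also use that $P_{Y|X}^{(j)}(\cdot\mid x)$ is a genuine probability kernel, so that $P_{ij}(A\times\mathcal{Y}) = P_X^{(i)}(A)$ for every $A\in\mathcal{A}_{\mathcal{X}}$.

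\emph{Lower bound.} Take cylinder sets $C=A\times\mathcal{Y}$. By \cref{def:joint-from-components},
\[
P_{ij}(A\times\mathcal{Y})-P_{i'j}(A\times\mathcal{Y}) = P_X^{(i)}(A)-P_X^{(i')}(A).
\]
The supremum over all $C\in\mathcal{A}_\Omega$ dominates the supremum over cylinders, which gives $\dTV(P_{ij},P_{i'j})\ge \dTV(P_X^{(i)},P_X^{(i')})$. This is simply the data-processing inequality for the projection onto the first coordinate.

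\emph{Upper bound.} Let $\mu := P_X^{(i)}+P_X^{(i')}$, which dominates both marginals, and write $p=\mathrm{d}P_X^{(i)}/\mathrm{d}\mu$ and $p'=\mathrm{d}P_X^{(i')}/\mathrm{d}\mu$. For a general measurable $C\in\mathcal{A}_\Omega$, the definition extends from rectangles to all of $\mathcal{A}_\Omega$ through the kernel. Concretely, $P_{ij}(C)=\int \kappa_j(C_x\mid x)\,\mathrm{d}P_X^{(i)}(x)$, where $C_x$ is the $x$-section of $C$ and $\kappa_j$ denotes $P_{Y|X}^{(j)}$. This extension follows from the standard uniqueness of the measure generated by a kernel, via a $\pi$–$\lambda$ argument on the rectangles. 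Measurability of $x\mapsto \kappa_j(C_x\mid x)$ is part of that same standard construction. Then
\[
|P_{ij}(C)-P_{i'j}(C)| = \Bigl|\int \kappa_j(C_x\mid x)\,(p(x)-p'(x))\,\mathrm{d}\mu(x)\Bigr|.
\]
Since $0\le \kappa_j(C_x\mid x)\le 1$, this is at most $\int (p-p')_+\,\mathrm{d}\mu$. To see this, bound the integrand by keeping only the region where $p>p'$, and symmetrically for the other sign. The quantity $\int (p-p')_+\,\mathrm{d}\mu$ equals $\tfrac12\int|p-p'|\,\mathrm{d}\mu = \dTV(P_X^{(i)},P_X^{(i')})$, where the equality of the two forms uses $\int p\,\mathrm{d}\mu=\int p'\,\mathrm{d}\mu=1$. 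Taking the supremum over $C$ gives the reverse inequality, and combining the two bounds yields the claim.

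The only delicate point is the extension from rectangles to general sets $C$, together with the measurability of sections. I would handle it by citing the standard kernel-measure construction rather than reproving it. The bound also relies on the fact that $\kappa_j$ is a probability kernel and takes values in $[0,1]$. If $\kappa_j$ were only a sub-probability kernel, the lower-bound step would fail, because then $P_{ij}(A\times\mathcal{Y})\neq P_X^{(i)}(A)$ in general. This is exactly why equality requires $P_{Y|X}$ to be a proper probability distribution.
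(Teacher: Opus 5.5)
Your argument is correct, but it is organized differently from the paper's. The paper does a single exact computation. It writes both joints via densities $p_X^{(i)}(x)\,p_{Y|X}^{(j)}(y\mid x)$ with respect to a product measure $\mu_X\otimes\mu_Y$, and expresses $\dTV$ as half the $L^1$ distance. It then pulls the common nonnegative factor $p_{Y|X}^{(j)}(y\mid x)$ out of the absolute value and integrates it out over $y$, which leaves $\tfrac12\int_{\calX}|p_X^{(i)}-p_X^{(i')}|\,\mathrm{d}\mu_X$.

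You instead work with the set-supremum definition and prove two inequalities. The lower bound comes from restricting to cylinders $A\times\calY$, i.e.\ marginalization. The upper bound comes from bounding the section mass $\kappa_j(C_x\mid x)\in[0,1]$ for arbitrary product-measurable $C$ and then invoking Scheff\'e's identity.

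The paper's route is shorter and gives equality in one line. However, it presupposes that every labeler has a conditional density with respect to a common $\sigma$-finite $\mu_Y$, jointly measurable in $(x,y)$. This is harmless for finite $\calY$ with counting measure, which is the classification setting used later, but it is an extra assumption for general kernels.

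Your route needs no dominating measure on $\calY$ at all, only Radon--Nikodym derivatives of the two marginals with respect to $P_X^{(i)}+P_X^{(i')}$. The cost is that you must cite the standard kernel-measure extension and the measurability of $x\mapsto\kappa_j(C_x\mid x)$.

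Your split also shows exactly where properness of the labeler enters. The upper bound uses only $\kappa_j\le 1$, while the lower bound uses $\kappa_j(\calY\mid x)=1$. That is a cleaner account of the paper's side remark that equality requires a proper conditional distribution.
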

\vspace{-5pt}
\paragraph{Interpretation.} Under pure environment shift with a fixed labeler, the joint TV distance exactly equals the distance between the environment marginals. While the data processing inequality \citep{Cover2006} dictates a possible loss of distinguishability when observing only the label space $\mathcal{Y}$, we measure divergence in the fully observable joint space $\Omega$. Because the invariant labeler merely redistributes probability mass across $\mathcal{Y}$ at each $ x $, it does not dilute the inherent discrepancy of the underlying features.


\color{black}

We now combine the results from the fixed-environment and fixed-labeler regimes to bound the distance between two arbitrary worlds $P_{ij}$ and $P_{i'j'}$ where both sources of uncertainty vary.

\begin{theorem}[General Distance Bounds]
\label{prp:general-bounds}
Let $P_{ij}, P_{i'j'} \in \mathcal P$ with $i \neq i'$ and $j \neq j'$. 
Then:
\begin{align*}
&\max_{k \in \{i,i'\}}
\left|
\mathbb{E}_{P_X^{(k)}}\!\left[
\dTV\!\left(P_{Y|X}^{(j)}, P_{Y|X}^{(j')}\right)
\right]
-
\dTV\!\left(P_X^{(i)}, P_X^{(i')}\right)
\right|
\\[4pt]
&\qquad \le
\dTV(P_{ij}, P_{i'j'})
\\[4pt]
&\qquad \le
\dTV\!\left(P_X^{(i)}, P_X^{(i')}\right)
+
\min_{k \in \{i,i'\}}
\mathbb{E}_{P_X^{(k)}}\!\left[
\dTV\!\left(P_{Y|X}^{(j)}, P_{Y|X}^{(j')}\right)
\right].
\end{align*}
\end{theorem}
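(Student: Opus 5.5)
The plan is to route both bounds through an intermediate ``hybrid'' world and combine the two earlier propositions with the triangle inequality for $\dTV$. For each $k \in \{i,i'\}$ there are two natural intermediate distributions: $P_{i j'}$ (environment $i$, labeler $j'$) and $P_{i' j}$ (environment $i'$, labeler $j$). Write $D_X := \dTV(P_X^{(i)}, P_X^{(i')})$ and $D_k := \mathbb{E}_{P_X^{(k)}}[\dTV(P_{Y|X}^{(j)}, P_{Y|X}^{(j')})]$.

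\emph{Upper bound.} First insert $P_{ij'}$:
\[
\dTV(P_{ij}, P_{i'j'}) \le \dTV(P_{ij}, P_{ij'}) + \dTV(P_{ij'}, P_{i'j'}).
\]
The first term equals $D_i$ by \cref{prp:label-disagreement}, since both distributions share environment $i$. The second equals $D_X$ by \cref{prp:covariate-shift}, since both share labeler $j'$. This gives $D_X + D_i$. Next insert $P_{i'j}$ instead. Then $\dTV(P_{ij}, P_{i'j}) = D_X$ (shared labeler $j$) and $\dTV(P_{i'j}, P_{i'j'}) = D_{i'}$ (shared environment $i'$), which gives $D_X + D_{i'}$. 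Taking the smaller of the two yields the stated upper bound with $\min_k D_k$.

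\emph{Lower bound.} I use the reverse triangle inequality through the same hybrids. Through $P_{ij'}$:
\[
\dTV(P_{ij}, P_{i'j'}) \ge |\dTV(P_{ij}, P_{ij'}) - \dTV(P_{ij'}, P_{i'j'})| = |D_i - D_X|.
\]
Through $P_{i'j}$:
\[
\dTV(P_{ij}, P_{i'j'}) \ge |\dTV(P_{i'j}, P_{i'j'}) - \dTV(P_{ij}, P_{i'j})| = |D_{i'} - D_X|.
\]
Taking the maximum over the two hybrids gives the stated lower bound.

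The only delicate points are checking that the hybrids $P_{ij'}$ and $P_{i'j}$ are legitimate members of $\{P_{ij}\}$, which holds because every environment--labeler pair is included, and checking that each intermediate pair matches the hypotheses of the two propositions exactly. I expect no real obstacle beyond bookkeeping which index $k$ goes with which hybrid, so that the $\min$ and $\max$ come out indexed correctly.
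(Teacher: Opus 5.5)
Your proposal is correct and is essentially the paper's proof. The paper routes through the same two hybrid worlds $P_{ij'}$ and $P_{i'j}$. It applies the triangle inequality together with the label-disagreement and environment-shift propositions to get $D_X + D_i$ and $D_X + D_{i'}$, then takes the minimum, and it applies the reverse triangle inequality along the same two paths to get $|D_i - D_X|$ and $|D_{i'} - D_X|$, then takes the maximum.
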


\paragraph{Interpretation.} \cref{prp:general-bounds} reveals that the joint distributional divergence arises not merely as a sum of covariate shift and labeler disagreement, but through an interaction between them. The bounds decompose the distance between $P_{ij}$ and $P_{i'j'}$ via specific interpolation paths creating a dependency structure we term the \emph{Gating Effect}.
\begin{figure}
    \centering
\includegraphics[width=0.99\linewidth]{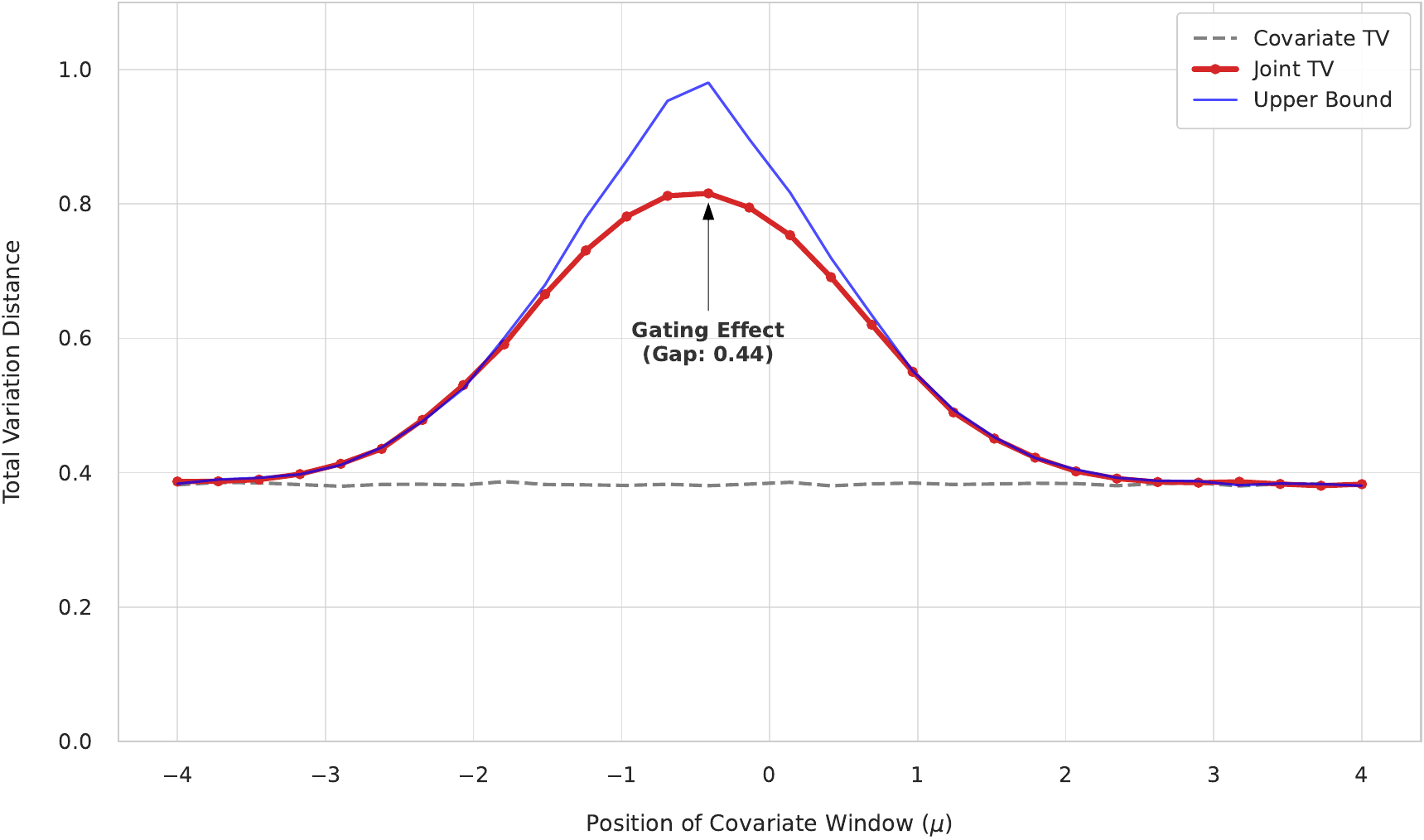}
    \caption{\textbf{The Gating Effect.} Illustration of the interaction between environment and labelers. We simulate two labelers with disjoint decision boundaries (sigmoid thresholds at $x=\pm 1$) and slide a Gaussian covariate window across the input space. While the covariate shift remains constant ($d_{TV}(P_X^{(i)}, P_X^{(i')}) \approx 0.38$, gray dashed line), the \textbf{observable joint distributional distance} (red solid line) spikes significantly when the environment concentrates probability mass in the region where labelers disagree. This demonstrates that the environment acts as a ``gate'', determining the visibility of labeler disagreement; seemingly benign covariate shifts can become catastrophic if they expose previously latent labeler disagreements. The blue line represents the theoretical upper bound derived in  \cref{prp:general-bounds}.}
    \label{fig:gating_effect_1d}
\end{figure}
\vspace{-3pt}
As demonstrated in \cref{fig:gating_effect_1d}, the environment controls the observability of the extent of disagreement. $\mathbb E_{P_X^{(k)}} [ \dTV(P_{Y|X}^{(j)}, P_{Y|X}^{(j')}) ]$ implies that labelers may disagree significantly in region $R \in \mathcal{X}$, yet this disagreement would be visible in the joint space only if the environment places sufficient mass on $R$. 

\textbf{High-Dimensional Gating.} Appendix \cref{fig:app:gating:3d} extends this to $\mathbb{R}^3$ visualizing a trajectory through a disagreement nebula. Even small, constant covariate shifts can induce sharp spikes in joint distributional divergence, indicating that high-dimensional safety is a narrow corridor rather than a continuous region. 

Consequently, training under one environment may therefore underestimate the effective impact of labeler disagreement if deployment occurs in an environment that concentrates mass in regions of high disagreement. The structured credal framework makes this coupling explicit, enabling principled reasoning about generalization under simultaneous covariate shift and labeling uncertainty by explicitly taking combinations over all pairs when bounding the credal-diameter.

\subsection{Diameter Bounds for Product Credal Sets}
\label{subsec:diameter}

\begin{definition}[Component Diameters]
\label{def:component-diameters}
For a structured credal set $\mathcal{P}$, we define: 
\begin{align*}
    &\eta_X := \max_{i, i' \in [N_X]} d_{\mathrm{TV}}\big(P_X^{(i)}, P_X^{(i')}\big), \\
    &\bar{\eta}_{Y|X} := \sup_{x} \max_{j, j' \in [N_Y]} d_{\mathrm{TV}}\big(P_{Y|X}^{(j)}(\cdot|x), P_{Y|X}^{(j')}(\cdot|x)\big) \\ 
    &\eta_{Y|X}^* := \max_{ \substack{i \\ \in [N_X]}} \max_{\substack{j, j' \\ \in [N_Y]}} \mathbb{E}_{P_X^{(i)}} \left[ d_{\mathrm{TV}}\big(P_{Y|X}^{(j)}(\cdot|X), P_{Y|X}^{(j')}(\cdot|X)\big) \right]
\end{align*}
\end{definition}

The $\diamP$ is the key quantity governing the looseness of robust generalization bounds (\cref{eq:credal_bound}). Using the decompositions in \cref{subsec:tv-decomposition}, we can bound the global diameter of $\mathcal{P}$ via \cref{def:component-diameters}. These quantities capture distinct aspects of uncertainty: $\eta_X$ measures the maximal divergence in covariate distributions, while $\{ \eta_{Y|X}^\ast, \bar{\eta}_{Y|X} \}$ capture labeler disagreement at differing levels of aggregation. The interplay between these quantities determines the geometry of the credal set.

\begin{theorem}[Diameter Decomposition]
\label{prp:diameter-decomposition}
Let $\mathcal{P} = \mathrm{Conv}(\{P_{ij}\})$ be the structured credal set, then:
\todo[inline]{Lower Bound $\eta_X$ holds under proper probability distribution}
\begin{equation*}
\label{eq:diameter-bounds}
\max\{\eta_X, \eta_{Y|X}^*\} 
\le 
\mathrm{diam}_{\mathrm{TV}}(\mathcal{P}) 
\le 
\eta_X + \eta_{Y|X}^{\mathrm{eff}} \, ,
\end{equation*}
$ \text{where } \eta_{Y|X}^{\mathrm{eff}} := \min\{\eta_{Y|X}^*,\, (1 - \eta_X)\bar{\eta}_{Y|X}\}$.
\end{theorem}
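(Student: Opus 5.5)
The plan is to prove the lower and upper bounds separately. Both reduce to comparisons between the generating worlds $P_{ij}$: for the lower bound we pick suitable worlds, and for the upper bound we invoke \cref{lem:extremum}.

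\textbf{Lower bound.} Every $P_{ij}$ lies in $\mathcal{P}$, so $\diamP$ is at least the distance between any two of them. Fix a labeler $j$ and choose $i,i'$ attaining $\eta_X$. Then \cref{prp:covariate-shift} gives $\diamP \ge \dTV(P_{ij},P_{i'j}) = \eta_X$. Similarly, choose $(i,j,j')$ attaining $\eta_{Y|X}^*$. Then \cref{prp:label-disagreement} gives $\diamP \ge \dTV(P_{ij},P_{ij'}) = \eta_{Y|X}^*$. Taking the maximum of the two gives the lower bound.

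\textbf{Reduction of the upper bound to extreme points.} The map $(Q_1,Q_2)\mapsto \dTV(Q_1,Q_2)$ is continuous and jointly (hence separately) convex, being a supremum of affine functionals $|Q_1(A)-Q_2(A)|$. By \cref{lem:extremum}(ii), the supremum over $\mathcal{P}^2$ is therefore attained on $\mathrm{ex}(\mathcal{P})^2 \subseteq \{P_{ij}\}^2$. It thus suffices to show that every pair satisfies
\[
\dTV(P_{ij},P_{i'j'}) \le \eta_X + \min\{\eta_{Y|X}^*,(1-\eta_X)\bar\eta_{Y|X}\}.
\]
We split into cases.

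\emph{Same environment, $i=i'$.} By \cref{prp:label-disagreement} the distance equals $\mathbb{E}_{P_X^{(i)}}[\dTV(P^{(j)}_{Y|X},P^{(j')}_{Y|X})]$. This quantity is at most $\eta_{Y|X}^*$, and it is also at most $\bar\eta_{Y|X}$. Moreover
\[
\bar\eta_{Y|X} \le \eta_X + (1-\eta_X)\bar\eta_{Y|X},
\]
since the difference of the two sides is $\eta_X(1-\bar\eta_{Y|X}) \ge 0$. Hence both branches of the minimum are covered.

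\emph{Same labeler, $j=j'$.} \cref{prp:covariate-shift} gives the value $\dTV(P_X^{(i)},P_X^{(i')}) \le \eta_X$, which is at most the right-hand side.

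\emph{General case, $i\neq i'$ and $j\neq j'$.} The upper bound of \cref{prp:general-bounds} gives at most $\eta_X + \eta_{Y|X}^*$. For the second branch we use an overlap decomposition. Take densities $p,p'$ of $P_X^{(i)},P_X^{(i')}$ with respect to $\mu = P_X^{(i)} + P_X^{(i')}$, and conditional densities $k_j(\cdot|x), k_{j'}(\cdot|x)$ with respect to a common dominating measure on $\mathcal{Y}$ (for instance the sum of the two kernels at $x$). Set $m=\min(p,p')$, so that $\int m\,d\mu = 1-d$ with $d := \dTV(P_X^{(i)},P_X^{(i')})$. Writing
\[
p\,k_j - p'\,k_{j'} = m\,(k_j - k_{j'}) + (p-m)\,k_j - (p'-m)\,k_{j'}
\]
and applying the triangle inequality inside $\tfrac12\int|\cdot|$ yields
\[
\dTV(P_{ij},P_{i'j'}) \le \int m(x)\,\dTV\big(k_j(\cdot|x),k_{j'}(\cdot|x)\big)\,d\mu(x) + \tfrac12\Big(\int (p-m)\,d\mu + \int (p'-m)\,d\mu\Big).
\]
The first term is at most $(1-d)\bar\eta_{Y|X}$ and the second equals $d$. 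The resulting bound $d + (1-d)\bar\eta_{Y|X}$ is nondecreasing in $d$ because $\bar\eta_{Y|X}\le 1$. Since $d\le\eta_X$, it is at most $\eta_X + (1-\eta_X)\bar\eta_{Y|X}$. Combining this with the first branch gives the minimum in $\eta^{\mathrm{eff}}_{Y|X}$.

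\textbf{Main obstacle.} I expect the delicate step to be the overlap decomposition. Two technical points need care there. First, we need a measurable choice of conditional densities with respect to a common dominating kernel on $\mathcal{Y}$; if that is awkward, we can work directly with the signed measures and the Hahn decomposition instead. Second, we must justify that the integrand $x\mapsto\dTV(k_j(\cdot|x),k_{j'}(\cdot|x))$ is measurable, so that bounding it pointwise by $\bar\eta_{Y|X}$ is legitimate. Everything else is a direct use of the earlier propositions.
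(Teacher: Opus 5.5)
Your proposal is correct and follows essentially the same route as the paper. The lower bound comes from evaluating Propositions~\ref{prp:covariate-shift} and~\ref{prp:label-disagreement} on suitable generating worlds, and the upper bound is the minimum of the triangle-inequality bound $\eta_X+\eta_{Y|X}^*$ and the overlap decomposition with $m=\min(p,p')$, followed by monotonicity of $d+(1-d)\bar{\eta}_{Y|X}$ in $d$; your case split and more careful choice of dominating measures are harmless extra care, since the paper's two upper-bound arguments already cover every pair $(P_{ij},P_{i'j'})$ and it simply assumes densities with respect to a product measure.
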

\textbf{Interpretation.~}
The \emph{lower bound} $\max\{\eta_X, \eta_{Y|X}^*\}$ establishes that distributional ambiguity cannot be reduced below the dominant source of uncertainty---whether from covariate shift or labeling uncertainty. The \emph{upper bound} through the 
effective labeler disagreement, $\eta_{Y|X}^{\mathrm{eff}}$, shows that the diameter of the credal set is governed by an interaction between environments and labelers, operating through two mechanisms:

\textbf{(i) Compounding regime} $(\eta_{Y|X}^{\mathrm{eff}} = \eta_{Y|X}^*)$: 
Covariate shift and labeling uncertainty compound additively. The environment \emph{gates in} disagreement across most of the feature space, allowing labeler disagreement to contribute its full expected magnitude.

\textbf{(ii) Gating regime} $(\eta_{Y|X}^{\mathrm{eff}} = (1-\eta_X)\bar{\eta}_{Y|X})$: 
Since feature distributions overlap in a region of total
mass $1-\eta_X$ (in TV for the worst-case pair), labeling uncertainty can contribute to the credal-diameter only on this overlapping mass, whereby the environment \emph{gates out} labeler disagreement. Outside the overlap, the uncertainty is already saturated by covariate shift.

Which term is active is determined by $\eta_{Y|X}^{\mathrm{eff}}=\min\{\eta_{Y|X}^*,(1-\eta_X)\bar{\eta}_{Y|X}\}$, i.e. whether the expected disagreement or the overlap-limited supremum disagreement is the binding constraint. 
 
\begin{remark}[Tightness]
The lower bound is attained when the diameter-achieving pair $(P_{ij}, P_{i'j'})$ 
has either $i = i'$ (pure labeling uncertainty) or $j = j'$ (pure covariate shift). 
The upper bound is tight when covariate distributions have disjoint support 
($\eta_X = 1$) or when label disagreement is constant across $\mathcal{X}$.
\end{remark}

\textbf{Connection to Generalization.}
Note that \cref{prp:diameter-decomposition} yields a \emph{factored generalization bound} in Equation \ (\ref{eq:credal_bound}):
\todo[inline]{Holds under proper probability measure}
{\small
\begin{equation} \label{eq:factored-genralization-bound}
\mathbb{P}\!\left[
L_Q(\hat{h}) - L_P(h^*) \;\leq\; \varepsilon^{*}(\delta,n,\mathcal{H}) \;+\; \eta_X + \eta_{Y|X}^{\mathrm{eff}}
\right]
\;\geq\; 1 - \delta \, ,
\end{equation}
}
decomposing the robustness penalty ($\eta = \diamP$) into interpretable components ($\eta_X, \eta_{Y|X}^{\mathrm{eff}}$) allowing us to  interpret the ``safety budget'' allocated to each source of uncertainty. 

While the general structured credal framework allows for simultaneous variations in environment and labelers, a critical sub-regime arises when the environment is stable, but the labeling mechanism is uncertain. This corresponds to the setting of $N_X=1$, which we term the \emph{Pure Labeling Uncertainty} regime.

\section{The Pure Labeling Uncertainty Regime} \label{sec:nx1}


%

This regime deserves separate consideration for three reasons. First, it directly models practical scenarios such as \textbf{crowdsourcing}, where a fixed dataset is labeled by multiple annotators with varying expertise \citep{dawid1979maximum, JMLR:v11:raykar10a}, and \textbf{inter-observer variability}, common in medical imaging where experts may grade the underlying pathology differently \citep{mchugh2012interrater, mandrekar2011measures}.  Second, the covariate shift term vanishes $(\eta_X = 0)$, collapsing the bounded quantities in \cref{prp:diameter-decomposition} -- previously an abstract measure-theoretic quantity to a \emph{directly measurable disagreement statistic} with an exact characterization. Third, this exactness enables finite-sample estimation and minimax optimality results providing a complete operational pipeline from data to robustness certificates.

\textbf{Setting.}
Our analysis relies on the following structural assumptions formalized in Appendix \ref{proof:concentration}. We observe $n$ i.i.d.\ inputs $X_1,\dots,X_n \sim P_X$ and, for each input $X_i$,
labels from $N_Y$ labeling mechanisms $\{P^{(k)}_{Y\mid X}\}_{k=1}^{N_Y}$. 
Given $X_i$, the labelers (deterministic or stochastic) are mutually independent and generate either (i) soft labels
$\mathbf p_k(X_i)=P^{(k)}_{Y\mid X}(\cdot\mid X_i)$, or (ii) hard labels
$Y_i^{(k)}\sim P^{(k)}_{Y\mid X}(\cdot\mid X_i)$ resulting in independent tuples $(X_i, Y_i^{(1)},\dots,Y_i^{(N_Y)})$ across samples. 

These assumptions capture the minimal structure of independent multi-annotator labeling on a fixed population, as in crowdsourcing or expert annotation, without assuming a latent ground truth or parametric noise model. They are required only for estimation and concentration, while the geometric characterization of the credal set and its diameter holds independently of these assumptions. Detailed proofs are provided in Appendix \ref{proof:pure_label_main}

\subsection{Geometric Characterization: Diameter as Disagreement}
\begin{proposition}[Geometric Exactness of the Credal Diameter]
\label{prop:geometry}
 \vspace{-5pt}
Let $\mathcal{P} = \mathrm{Conv}(\{P_{1j}\}_{j=1}^{N_Y})$ be a structured credal set with fixed feature distribution $P_X$. The total-variation diameter is:
 \vspace{-5pt}
\begin{align*}
    \mathrm{diam}_{\mathrm{TV}}(\mathcal{P}) &\;=\; \eta_{Y|X}^* \\ &\;=\; \max_{j, j' \in [N_Y]} \mathbb{E}_{X \sim P_X} \left[ d_{\mathrm{TV}}\big(P_{Y|X}^{(j)}, P_{Y|X}^{(j')}\big) \right].
\end{align*}\end{proposition}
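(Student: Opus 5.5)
The plan is to sandwich $\mathrm{diam}_{\mathrm{TV}}(\mathcal{P})$ between two copies of $\eta_{Y|X}^*$, using \cref{lem:extremum} to reduce to extreme points and \cref{prp:label-disagreement} to evaluate them. With $N_X=1$, the quantity $\eta_{Y|X}^*$ of \cref{def:component-diameters} is exactly the displayed maximum, since the outer maximum over $i$ is trivial. The second equality is therefore definitional, and only the first needs proof.

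\textbf{Upper bound.} The map $(Q_1,Q_2)\mapsto \dTV(Q_1,Q_2)=\sup_{A\in\mathcal{A}_\Omega}|Q_1(A)-Q_2(A)|$ is a supremum of functions that are affine in each argument. Hence it is separately (indeed jointly) convex on $\mathcal{P}^2$. It is also continuous, because it is $1$-Lipschitz in each argument with respect to TV, by the triangle inequality.

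Applying \cref{lem:extremum}(ii) gives $\mathrm{diam}_{\mathrm{TV}}(\mathcal{P})=\max_{(Q_1,Q_2)\in\mathrm{ex}(\mathcal{P})^2}\dTV(Q_1,Q_2)$. Since $\mathcal{P}$ is the convex hull of the finitely many $P_{1j}$, we have $\mathrm{ex}(\mathcal{P})\subseteq\{P_{1j}\}_{j=1}^{N_Y}$. Alternatively, and more elementarily, one can avoid the lemma: for $Q_1=\sum_j\alpha_jP_{1j}$ and $Q_2=\sum_{j'}\beta_{j'}P_{1j'}$, write $Q_1-Q_2=\sum_{j,j'}\alpha_j\beta_{j'}(P_{1j}-P_{1j'})$. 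The triangle inequality then gives $\dTV(Q_1,Q_2)\le\max_{j,j'}\dTV(P_{1j},P_{1j'})$.

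In either case, every extreme pair shares the single environment $P_X$. So \cref{prp:label-disagreement} gives $\dTV(P_{1j},P_{1j'})=\mathbb{E}_{X\sim P_X}[\dTV(P^{(j)}_{Y|X}(\cdot|X),P^{(j')}_{Y|X}(\cdot|X))]$, and the maximum over $j,j'$ is $\eta^*_{Y|X}$.

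\textbf{Lower bound.} Since each $P_{1j}\in\mathcal{P}$, the diameter is at least $\dTV(P_{1j},P_{1j'})$ for every pair. Choosing the maximizing pair and again invoking \cref{prp:label-disagreement} yields $\mathrm{diam}_{\mathrm{TV}}(\mathcal{P})\ge\eta^*_{Y|X}$. This also agrees with the lower bound $\max\{\eta_X,\eta^*_{Y|X}\}$ of \cref{prp:diameter-decomposition} with $\eta_X=0$.

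The main obstacle is the reduction to extreme points, which requires checking continuity and convexity of $\dTV$ on $\mathcal{P}^2$. I would bypass it with the explicit bilinear decomposition above, which is fully elementary and sidesteps topological concerns. Beyond that, everything rests on \cref{prp:label-disagreement}. Its proof compares densities with respect to a common dominating measure $P_X\otimes\mu$ and uses that the $x$-marginals coincide, so $\frac12\int|p_X(p^{(j)}-p^{(j')})|$ factors as $\mathbb{E}_{P_X}$ of the pointwise TV distance. This can be cited rather than reproved, noting that it requires the conditionals to be proper probability kernels so that pointwise TV is well defined.
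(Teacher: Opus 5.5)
Your proposal is correct and follows essentially the same route as the paper. Both reduce the diameter to the maximum of $d_{\mathrm{TV}}(P_{1j},P_{1j'})$ over generating pairs (the paper simply asserts that TV is maximized at extreme points), evaluate each pair with \cref{prp:label-disagreement}, and note that the second equality is definitional since $N_X=1$; your bilinear decomposition $Q_1-Q_2=\sum_{j,j'}\alpha_j\beta_{j'}(P_{1j}-P_{1j'})$ and separate lower bound just make explicit the extreme-point step and the $\ge$ direction that the paper leaves implicit.
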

\vspace{-7pt}
This diameter $\eta_{Y|X}^*$ bridges measure theory and practice, admitting concrete operational forms depending on the observation regime:
 \vspace{-3pt}
\begin{enumerate}[leftmargin=*, noitemsep, topsep=2pt]
    \item \textbf{Deterministic Hard Labels (Exact):} If the labelers are deterministic, the diameter is the \emph{maximum pairwise disagreement rate}:
    \begin{equation} \label{eq:diameter-is-disagreement}
         \eta_{Y|X}^* = \max_{j, j'} \mathbb{P}\left( f^{(j)}(X) \neq f^{(j')} (X) \right)\Big.
    \end{equation}
   
    \vspace{-5pt}
    \item \textbf{Soft Label Regime (Exact):} If probability vectors $\mathbf{p}^{(j)}$ (e.g. confidence scores) are observed, the diameter is proportional to the expected $L_1$ distance:
    \begin{equation} \label{eq:diameter-is-l1}
        \eta_{Y|X}^* = \tfrac{1}{2} \max_{j, j'} \mathbb{E}_{X}\left[ \| \mathbf{p}^{(j)}(X) - \mathbf{p}^{(j')}(X) \|_1 \right].
    \end{equation}
    
    \vspace{-5pt}
    \item \textbf{Stochastic Hard Labels (Conservative Bound):} If only realizations $y^{(j)}$ are observed, 
    \begin{equation} \label{eq:conservate-stochastic-hard-labels}
    \eta_{Y|X}^* \;\le\; \max_{j, j'} \mathbb{P}\left( y^{(j)} \neq y^{(j')} \right).
    \end{equation}
    This ensures that using empirical disagreement to calibrate robustness is a \emph{safe approximation} that never underestimates the necessary safety budget. See \cref{rem:st-hard-lab}
   \item \textbf{Noisy Labels (Aleatoric Uncertainty):}  Under the binary symmetric noisy-annotator model with latent ground truth $y^*$ and annotator error rates $\varepsilon_j \in [0,0.5]$, the diameter of the resulting structured credal set admits a closed form:
    \begin{equation} \label{eq:diam-is-noise}
    \eta_{Y|X}^* 
    = \max_{j,j'} \mathbb{P}(y^{(j)} \neq y^{(j')})
    = \max_{j,j'} \big( \varepsilon_j + \varepsilon_{j'} - 2 \varepsilon_j \varepsilon_{j'} \big).
    \end{equation}

   Consequently, if all annotators satisfy $\varepsilon_j \le \varepsilon_{\max}$, the induced robustness radius is bounded by
   \begin{equation}
       \eta_{Y|X}^* \le 2\varepsilon_{\max} - 2\varepsilon_{\max}^2,
   \end{equation}
   ensuring non-vacuous robustness even under weak supervision.
\end{enumerate}

\subsection{The Observable Radius: From Hyperparameter to Statistical Estimation}
A key advantage of the structured credal framework in this regime is that the diameter of the uncertainty set is \emph{observable} and \emph{estimable} \color{black}\footnote{Note: We use \emph{observable} to refer to sample-level quantities directly recorded (e.g., $X_{1:n}$ and $y_{1:n}^{(j)}$) or trivially computed (e.g., disagreements), and \emph{estimable}  for population quantities that can be estimated from these with finite-sample guarantees. In particular, $\eta_{Y|X}^*$ is a population diameter, while $\hat\eta_{Y|X}^*$ is its empirical estimator} directly from finite data.
\begin{theorem}[Finite-Sample Concentration]
\label{thm:concentration}
Let $\hat{\eta}_{Y|X}^*$ be the empirical estimator of the diameter computed on a sample of size $n$. For any $\delta \in (0,1)$, with probability at least $1 - \delta$:
\begin{equation}
\label{eq:concentration}
\left|\hat{\eta}_{Y|X}^* - \eta_{Y|X}^*\right| \leq \sqrt{\frac{\ln\big(N_Y(N_Y - 1)/\delta\big)}{2n}}.
\end{equation}
\end{theorem}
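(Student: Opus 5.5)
We will prove \cref{thm:concentration} by combining Hoeffding's inequality with a union bound over labeler pairs, and then transfer the resulting uniform control to the maximum. Work in the setting of \cref{sec:nx1}. For each ordered pair $(j,j')$ with $j\neq j'$, define the per-sample statistic $Z_i^{(j,j')} := \dTV\big(P^{(j)}_{Y|X}(\cdot\mid X_i), P^{(j')}_{Y|X}(\cdot\mid X_i)\big)$ in the soft-label regime. In the deterministic hard-label regime we instead use $Z_i^{(j,j')} := \mathbf 1\{f^{(j)}(X_i)\neq f^{(j')}(X_i)\}$, which coincides with the TV distance between the two point masses. The natural estimator is $\hat\eta^{(j,j')} := \frac1n\sum_{i=1}^n Z_i^{(j,j')}$ and $\hat\eta^*_{Y|X} := \max_{j\neq j'} \hat\eta^{(j,j')}$. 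By \cref{prop:geometry} together with \eqref{eq:diameter-is-disagreement} and \eqref{eq:diameter-is-l1}, the target is $\eta^*_{Y|X} = \max_{j\neq j'} \eta^{(j,j')}$ with $\eta^{(j,j')} := \E[Z_1^{(j,j')}]$. Pairs with $j=j'$ contribute $0$ and can be dropped.

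The key steps are as follows. (1) Each $Z_i^{(j,j')}\in[0,1]$, since TV distances and indicators are bounded by $1$. The $Z_i^{(j,j')}$ are i.i.d.\ across $i$ because the tuples $(X_i,Y_i^{(1)},\dots,Y_i^{(N_Y)})$ are i.i.d. (2) Hoeffding's inequality then gives, for each pair, $\mathbb{P}\big(|\hat\eta^{(j,j')}-\eta^{(j,j')}|>t\big)\le 2e^{-2nt^2}$. (3) We take a union bound. Since $Z^{(j,j')}=Z^{(j',j)}$, there are only $N_Y(N_Y-1)/2$ distinct pairs, so the total failure probability is at most $N_Y(N_Y-1)e^{-2nt^2}$. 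Setting this equal to $\delta$ yields exactly $t=\sqrt{\ln(N_Y(N_Y-1)/\delta)/(2n)}$. (4) On the resulting event, every pair satisfies $|\hat\eta^{(j,j')}-\eta^{(j,j')}|\le t$. We then use the elementary inequality $|\max_k a_k-\max_k b_k|\le\max_k|a_k-b_k|$ to conclude $|\hat\eta^*_{Y|X}-\eta^*_{Y|X}|\le t$.

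The main obstacle is bookkeeping rather than analysis: the stated constant $N_Y(N_Y-1)$ must match a two-sided bound over unordered pairs. We plan to make that accounting explicit so the constant comes out exactly as claimed. A second subtlety concerns the stochastic hard-label regime. There the natural estimator targets $\max_{j,j'}\mathbb P(y^{(j)}\neq y^{(j')})$ rather than $\eta^*_{Y|X}$. We will state that the same argument gives concentration around this conservative upper bound from \eqref{eq:conservate-stochastic-hard-labels}, and restrict the exact statement to the deterministic and soft-label regimes, where the summands are unbiased for $\eta^{(j,j')}$. Under mutual independence of the labelers, the indicators $\mathbf 1\{Y_i^{(j)}\neq Y_i^{(j')}\}$ are still bounded and i.i.d., so steps (1)–(4) go through unchanged for that surrogate.
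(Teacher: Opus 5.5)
Your proposal is correct and follows the paper's proof essentially step for step. Both use Hoeffding's inequality for the bounded i.i.d.\ per-sample TV (or disagreement-indicator) statistics of each pair, then a union bound over the $N_Y(N_Y-1)/2$ unordered pairs, where the two-sided factor $2$ yields the constant $N_Y(N_Y-1)$, and finally the inequality $|\max_k a_k-\max_k b_k|\le\max_k|a_k-b_k|$ to pass to the maximum; your scoping of the exact statement to the soft and deterministic hard-label regimes, with stochastic hard labels treated as concentration around the conservative surrogate, likewise matches the paper's appendix theorem and \cref{rem:st-hard-lab}.
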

\vspace{-7pt}
\textbf{Interpretation and Complexity.}
The bound establishes that the credal diameter can be estimated at the parametric rate $O(n^{-1/2})$ with only logarithmic dependence on the number of labelers $N_Y$. This logarithmic scaling is particularly favorable for crowdsourcing, where adding annotators improves coverage of the label space while incurring negligible sample complexity cost. For instance, with $N_Y=10$ annotators, estimating the diameter to within $\epsilon=0.1$ precision (95\% confidence) requires only $n \approx 375$ samples.

\textbf{Robustness to Assumptions.}
This estimation procedure: (i) does not specify error structures (e.g., confusion matrices); (ii) does not require ground truth labels; and (iii) assumes only conditional independence of annotators. 

\begin{remark}[Remark on Stochastic Hard Labels] \label{rem:st-hard-lab}
In the stochastic hard label regime, the true TV diameter is non-identifiable as the underlying conditional distribution of the labeler is non observable. However, since the TV distance is the infimum over all couplings, the observed \emph{independent coupling} provides a mathematically valid upper bound. In this setting, the concentration bound in Eq.~\eqref{eq:concentration} applies to the \emph{observed pairwise disagreement rate}. Since this upper-bounds the true diameter (\cref{eq:conservate-stochastic-hard-labels}), the finite-sample guarantee ensures the reliable estimation of a \emph{conservative safety certificate}, effectively acting as a valid upper bound on the worst-case risk.
\end{remark}

\subsection{Robust Generalization and Optimality}
Finally, we connect the geometric diameter to learning guarantees. The following result establishes that $\eta_{Y|X}^*$ acts as a safety certificate, bounding the risk transfer between any two plausible interpretations of the data.
\begin{corollary}[Robust Generalization]
\label{corr:robust_generalization}
Let $\hat{h}$ be the empirical risk minimizer on a source distribution $P \in \mathcal{P}$. For any target $Q \in \mathcal{P}$ (representing an unknown true labeling mechanism), with probability $1-\delta$:
\begin{equation}
\label{eq:generalization-bound}
\footnotesize
L_Q(\hat{h}) - L_P(h^*) \;\leq\; \underbrace{\varepsilon^*(\delta, n, \mathcal{H})}_{\text{statistical error}} \;+\; \underbrace{\eta_{Y|X}^*}_{\substack{\text{labeling uncertainty} \\ \text{(irreducible constant)}}}.
\end{equation}
\end{corollary}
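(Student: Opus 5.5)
The plan is to obtain the corollary by specializing the generic credal bound \cref{eq:credal_bound} to the pure labeling uncertainty regime ($N_X=1$), where the robustness penalty $\eta=\diamP$ is already known exactly. Fix $P,Q\in\mathcal P=\mathrm{Conv}(\{P_{1j}\}_{j=1}^{N_Y})$. The only structure \cref{eq:credal_bound} uses is that $\mathcal P$ is a convex set of distributions on $\Omega$ and that $\eta$ is its TV diameter. The structured credal set meets both requirements by \cref{def:structured-credal-set}, since the $P_{1j}$ are genuine probability measures on $(\Omega,\mathcal A_\Omega)$ by \cref{def:joint-from-components} when the labelers are proper Markov kernels. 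Hence, under the realizability and complexity assumptions encoded in $\varepsilon^*(\delta,n,\mathcal H)$, with probability at least $1-\delta$ we have $L_Q(\hat h)-L_P(h^*)\le \varepsilon^*(\delta,n,\mathcal H)+\diamP$.

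The second step replaces $\diamP$ with $\eta^*_{Y|X}$ using \cref{prop:geometry}. For completeness I would recall its proof in outline:
\begin{itemize}
\item By convexity of $\dTV$ in each argument and \cref{lem:extremum}(ii), the supremum of $\dTV$ over $\mathcal P^2$ is attained at extreme points, which lie among $\{P_{1j}\}$.
\item \cref{prp:label-disagreement} then gives $\dTV(P_{1j},P_{1j'})=\E_{P_X}[\dTV(P^{(j)}_{Y|X},P^{(j')}_{Y|X})]$.
\item Maximizing over $j,j'$ yields $\eta^*_{Y|X}$.
\end{itemize}
Substituting this into the bound above gives \cref{eq:generalization-bound}. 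The same conclusion follows from \cref{prp:diameter-decomposition} with $\eta_X=0$, because its lower and upper bounds then coincide: $\eta^{\mathrm{eff}}_{Y|X}=\min\{\eta^*_{Y|X},\bar\eta_{Y|X}\}=\eta^*_{Y|X}$, using $\eta^*\le\bar\eta$.

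The main obstacle is not the algebra but checking that the transfer step inside \cref{eq:credal_bound} is valid here. That step is the inequality $|L_Q(h)-L_P(h)|\le \dTV(P,Q)$, which requires the loss to take values in $[0,1]$. I would state this boundedness assumption explicitly and note that the probability statement concerns only the sample drawn from $P$, so $Q$ may be any element of $\mathcal P$, including an unknown true labeler mixture. The bound is therefore uniform in $Q$, since $\dTV(P,Q)\le\diamP$ for all $Q\in\mathcal P$. For a general bounded loss with range $[0,M]$, the penalty would instead be $M\eta^*_{Y|X}$.
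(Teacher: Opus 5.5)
Your proposal is correct and follows the paper's own route: apply the credal learning bound \cref{eq:credal_bound} with penalty $\diamP$, then substitute $\diamP=\eta^*_{Y|X}$ from \cref{prop:geometry} in the $N_X=1$ regime. Your additions are sound and spell out what the paper leaves implicit: the cross-check via \cref{prp:diameter-decomposition} with $\eta_X=0$ and $\eta^*_{Y|X}\le\bar\eta_{Y|X}$, and the remark that the transfer step $|L_Q(h)-L_P(h)|\le\dTV(P,Q)$ requires a $[0,1]$-valued loss.
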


This decomposition cleanly separates two fundamentally different sources of error. The term $\varepsilon^*$ captures the difficulty of learning the \emph{source} distribution, vanishing as $n \to \infty$. In contrast, $\eta_{Y|X}^*$ captures the fundamental lack of consensus among labelers. It depends solely on the credal geometry and cannot be reduced by collecting more data from the same ambiguous process. This distinction offers a \emph{nuanced view of error}: a loss of magnitude $\eta_{Y|X}^*$ is not necessarily a model failure, but may be a reflection of genuine task ambiguity. Moreover, recent empirical studies with dense annotation consistently report substantial inter-annotator disagreement: in natural language inference, at least 40\% disagreement is common for ambiguous examples~\citep{jiang2022investigating}, and majority-label instability reaches up to 31\% when increasing annotation density~\citep{nie2020can}. In vision benchmarks, per-item agreement can drop to as low as 4\%, corresponding to disagreement rates approaching 96\% in extreme cases~\citep{schmarje2022one}. 

Is this ambiguity penalty tight? We show that no algorithm can circumvent this cost in the worst case.

\begin{theorem}[Minimax lower bound]
\label{thm:minimax}
Fix $0$--$1$ loss and assume $\mathcal H$ contains all measurable classifiers.
For any $\eta\in(0,1)$, there exists a fixed covariate structured credal set
$\mathcal P=\operatorname{Conv}(\{P_{1k}\}_{k=1}^{N_Y})$ with
$\eta^*_{Y|X}=\diamP$
such that for every $n\in\mathbb N$,
\begin{equation}
\inf_{\mathcal A}\ \sup_{P,Q\in\mathcal P}\
\mathbb E_{\mathcal D_n\sim P^{\otimes n}}
\Big[\,L_Q(\mathcal A(\mathcal D_n)) - L_P^*\,\Big]
\ \ge\ \frac{1}{2}\,\eta^*_{Y|X}.
\end{equation}
Consequently, any upper bound with linear dependence on $\eta^*_{Y|X}$ is unimprovable in the worst case.
\end{theorem}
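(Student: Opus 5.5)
We construct the hard instance explicitly and then use a two-point (Le Cam–style) argument in which the learner's data carry no information about which labeler is the target. Fix $\eta\in(0,1)$. Take $\mathcal X=[0,1]$ with $P_X$ uniform and $\mathcal Y=\{0,1\}$. Let $R=[0,\eta]$, so $P_X(R)=\eta$, and take $N_Y=2$ deterministic labelers: $f^{(1)}\equiv 0$, and $f^{(2)}(x)=\mathbf 1\{x\in R\}$. By \cref{prop:geometry} (deterministic hard labels, \cref{eq:diameter-is-disagreement}), the diameter is
\[
\diamP=\eta^*_{Y|X}=\mathbb P\big(f^{(1)}(X)\neq f^{(2)}(X)\big)=\eta ,
\]
so the instance has the prescribed diameter.

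The key point is that the supremum over $(P,Q)$ lets the adversary pick the data source independently of the target. We choose the source $P=P_{11}$ for both candidate targets $Q\in\{P_{11},P_{12}\}$. The sample $\mathcal D_n\sim P_{11}^{\otimes n}$ then has a law that does not depend on which target is evaluated, and since $P_{11}$ is deterministic and realizable, $L_P^*=L_{P_{11}}^*=0$. For any algorithm, let $h=\mathcal A(\mathcal D_n)$ be its (possibly randomized) output and set $a(h):=P_X(\{x\in R: h(x)=1\})\in[0,\eta]$. Off $R$ the two labelers agree, so restricting to $R$ gives
\[
L_{P_{11}}(h)+L_{P_{12}}(h)\ \ge\ a(h)+\big(\eta-a(h)\big)=\eta .
\]

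To finish, we bound the supremum below by the average over the two targets:
\[
\sup_{Q}\ \mathbb E\big[L_Q(h)-0\big]\ \ge\ \tfrac12\,\mathbb E\big[L_{P_{11}}(h)+L_{P_{12}}(h)\big]\ \ge\ \tfrac12\,\eta=\tfrac12\,\eta^*_{Y|X}.
\]
This holds for every $n$ and every $\mathcal A$, hence also after taking the infimum over algorithms. The consequence about linear dependence follows directly: an upper bound $c\,\eta^*_{Y|X}$ with $c<1/2$ would contradict the lower bound on this instance as $n\to\infty$, since it would have to hold for all $n$ and the statistical term vanishes.

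The step needing care is the treatment of $L_P^*$ and of the sup over $(P,Q)$. If the bound were instead read with $P=Q$, or with $L_P^*$ tied to the target, the two-point argument would collapse, because data from $Q$ reveal $Q$. We therefore commit to the reading in which the data come from $P$ and the comparator is $L_P^*$, which matches \cref{corr:robust_generalization}. Under that reading the argument is fully non-asymptotic and information-theoretic: the samples are simply uninformative about the target labeler. We also note that $\mathcal H$ containing all measurable classifiers is needed only so that the infimum ranges over arbitrary outputs.
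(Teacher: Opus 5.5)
Your proposal is correct and follows the paper's proof essentially step for step. You use the same two deterministic labelers $f^{(1)}\equiv 0$ and $f^{(2)}=\mathbf 1\{x\in R\}$ with $P_X(R)=\eta$, the same fixed source $P=P_{11}$ (so $L_P^*=0$ and the data carry no information about the target), the same bound $L_{P_{11}}(h)+L_{P_{12}}(h)\ge\eta$ for every output, and the same max-versus-average step to get $\eta/2$; the only cosmetic differences are your concrete instance $\mathcal X=[0,1]$ with uniform $P_X$ and your $a(h)$ in place of the paper's $\pi_{\mathcal D_n}(x)=\mathbb P(\widehat h(x)=1)$ (for randomized outputs, $a(h)$ should be read as $\int_R \pi(x)\,dP_X(x)$). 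One small correction to your closing remark: the richness of $\mathcal H$ is what guarantees $h\equiv 0\in\mathcal H$ and hence $L_{P_{11}}^*=0$. It is not what makes the infimum range over outputs, since restricting the admissible outputs could only increase the infimum.
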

\vspace{-7pt}
Theorem~\ref{thm:minimax} confirms that $\eta_{Y|X}^*$ represents the \emph{price of ambiguity}. To build intuition, consider two labelers with expected disagreement $\eta_{Y|X}^*$. On the region where they disagree, any hypothesis agreeing with one disagrees with the other. Any stochastic strategy incurs expected error at least $0.5$ on the disagreement region.

\textbf{Summary: A Three-Layer Unity.} 
Together, \cref{sec:nx1}  demonstrates that in the pure labeling uncertainty regime, the credal set is \emph{fully observable}: its diameter is not a hyperparameter to be tuned but a statistic to be measured.  $\eta_{Y|X}^*$  achieves a significant confluence in learning theory and plays a coherent role at three distinct levels of analysis.
(i) \textbf{Geometrically}, it is the \emph{diameter} of the uncertainty set (in TV distance).
(ii) \textbf{Statistically}, it is an \emph{observable parameter} (max pairwise disagreement) estimable at rate $O(n^{-1/2})$.
(iii) \textbf{Theoretically}, it is the \emph{minimax price of ambiguity}, establishing an irreducible floor on robust generalization.
This unity transforms the $\diamP$ from an abstract robustness hyperparameter into a concrete, measurable, and fundamental limit of learning.

\vspace{-3pt}
\section{Optimization}
\label{sec:optimization}
A central motivation for modeling distributional uncertainty is to optimize predictors against worst-case risk. Classical DRO formulates this objective as a minimax problem:
\begin{equation} \label{eq:dro-general}
\footnotesize
    h^{\mathrm{DRO}}
    \;:=\;
    \argmin_{h \in \mathcal{H}}
    \max_{Q \in \mathcal{U}_\rho(\hat{P})}
    L_Q(h),
\end{equation}

where $\mathcal{U}_\rho(\hat{P})$ is typically a Wasserstein or $f$-divergence ball of radius $\rho$ around the empirical distribution. While theoretically robust, this continuous formulation suffers from the \textit{specification problem}—tuning the radius $\rho$ is non-trivial, and worst-case perturbations often drift off the data manifold into non-semantic noise and the \emph{computational problem} of optimizing over an infinite-dimensional space of probability measures, requiring nontrivial duality arguments or approximations ~\citep{duchi2021learning, kuhn2019wasserstein, kuhn2025distributionally}. 

By defining the uncertainty set via the convex hull of coherent generative processes rather than an unstructured geometric ball, the intractable supremum collapses to a discrete, tractable, and interpretable min--max game.
\vspace{-3pt}
\subsection{Reduction to Extreme Points}
Leveraging the geometric structure of $\mathcal{P}$ established in \cref{def:structured-credal-set}, we derive an exact reduction of the optimization objective.

\begin{proposition}[DRO Reduction]\label{prp:dro-reduction}
Let $\mathcal{P} = \mathrm{Conv}(\{P_{ij}\}_{i \in [N_X], j \in [N_Y]})$ be the structured credal set. The infinite-dimensional DRO problem in \cref{eq:dro-general} reduces to a discrete minimax optimization over the extreme points:
\begin{equation}\label{eq:dro-minmax}
    h^{\mathrm{DRO}} = \argmin_{h \in \mathcal{H}} \max_{(i,j) \in [N_X] \times [N_Y]} L_{P_{ij}}(h).
\end{equation}
\end{proposition}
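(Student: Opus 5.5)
Fix $h \in \mathcal{H}$ and show that the inner supremum over the uncertainty set, here taken to be $\mathcal{U} = \mathcal{P}$, equals the maximum over the finitely many generators $P_{ij}$. The reduction in \cref{eq:dro-minmax} then follows by applying $\argmin_h$ to two functions of $h$ that agree pointwise. The key observation is that $Q \mapsto L_Q(h) = \int \ell(h(x),y)\,\mathrm{d}Q(x,y)$ is \emph{affine} in $Q$ under mixtures. For any $Q = \sum_{i,j}\pi_{ij}P_{ij}$ with $\pi$ in the simplex on $[N_X]\times[N_Y]$, linearity of the integral in the measure gives
\[
L_Q(h) = \sum_{i,j} \pi_{ij}\, L_{P_{ij}}(h).
\]
This presumes the loss is measurable and each $L_{P_{ij}}(h)$ is finite, for example a bounded loss as in the 0--1 setting. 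I would state that assumption explicitly.

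Since a convex combination never exceeds its largest term, $L_Q(h) \le \max_{i,j} L_{P_{ij}}(h)$ for every $Q \in \mathcal{P}$. Each $P_{ij}$ lies in $\mathcal{P}$, so equality is attained at a vertex and $\sup_{Q\in\mathcal{P}} L_Q(h) = \max_{(i,j)} L_{P_{ij}}(h)$. The supremum is in fact a maximum, attained at some $P_{ij}$.

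This is also a direct instance of \cref{lem:extremum}(i): an affine functional is convex, and it is continuous in TV for bounded losses. The extreme points of $\mathrm{Conv}(\{P_{ij}\})$ are contained in $\{P_{ij}\}$. I would still give the elementary argument above, because it sidesteps any topology on $\mathcal{P}$ and needs only that $\mathcal{P}$ is the convex hull of a finite set, so every element is a finite mixture.

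Finally, define $F(h) := \sup_{Q\in\mathcal{P}} L_Q(h)$ and $G(h) := \max_{(i,j)} L_{P_{ij}}(h)$. The previous step shows $F \equiv G$ on $\mathcal{H}$, so their argmin sets coincide, including the case where they are empty or contain more than one element. This gives \cref{eq:dro-minmax}.

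The main obstacle is not the argument itself but interpreting \cref{eq:dro-general}: it is stated with a generic ball $\mathcal{U}_\rho(\hat P)$, whereas the proposition concerns the structured set. I would make explicit that the uncertainty set is taken to be $\mathcal{P}$. I would also note the integrability condition on the loss, since without it $L_Q(h)$ may be $+\infty$; the identity still holds in the extended reals but must be handled with care.
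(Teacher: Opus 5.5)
Your proof is correct, and it reaches the conclusion by a more elementary route than the paper.

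The paper fixes $h$ and notes that $P \mapsto L_P(h)$ is linear and, for a loss in $[0,1]$, continuous. It then invokes \cref{lem:extremum}(i), which is the Bauer maximum principle on the compact convex set $\mathcal{P}$, to place the supremum at an extreme point. It finishes with $\mathrm{ex}(\mathcal{P}) \subseteq \{P_{ij}\}$.

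You instead use only the fact that every element of $\mathrm{Conv}(\{P_{ij}\})$ is a finite mixture. This gives $L_Q(h)=\sum_{i,j}\pi_{ij}L_{P_{ij}}(h)\le\max_{i,j}L_{P_{ij}}(h)$, with equality at a generator. Your version needs no topology, compactness, or continuity, so it also covers unbounded nonnegative losses in the extended reals. It also directly gives the maximum over all index pairs $(i,j)$, not just over $\mathrm{ex}(\mathcal{P})$. The paper passes over that last step quickly: one needs that the non-extreme generators, which lie in $\mathcal{P}$, cannot exceed the supremum.

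The paper's route has its own advantage. It reuses a single lemma that also handles non-affine convex functionals, such as the TV distance in the diameter results, and so gives a unified treatment.

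Two points you make explicit are left implicit in the paper: that the uncertainty set in \cref{eq:dro-general} is identified with $\mathcal{P}$, and that pointwise-equal objectives have the same argmin sets.
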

\color{black}
\vspace{-5pt} \noindent \textbf{Significance.} \cref{prp:dro-reduction} reinterprets the optimization as Structured Group DRO \cite{sagawa2019distributionally}, where groups are defined by generative factors (environment $i$, labeler $j$). 
Crucially, this yields a \textbf{hyperparameter-free} framework: the robustness radius is intrinsic—determined by the $ \diamP \le \eta_X + \eta_{Y|X}^{\mathrm{eff}}$—rather than a tuned hyperparameter. The inner maximization admits a clear adversarial interpretation—Nature selects the worst-case combination of environment and labeler—while remaining constrained to semantically plausible generative worlds, avoiding the physically impossible perturbations possible in Wasserstein DRO.

 \vspace{-5pt}

In the pure labeling-uncertainty regime, the DRO objective in \cref{prp:dro-reduction} reduces to enforcing robustness to the most challenging labeler under the current hypothesis, discouraging overfitting to majority opinion while guaranteeing performance on minority labelers—a property particularly important in medical imaging, where rare but critical findings may be identified by only a subset of expert radiologists and can be washed out by majority voting.

\paragraph{Algorithmic strategies.} Because the inner maximization in Equation~\eqref{eq:dro-minmax} is taken over a finite set of extreme points, the resulting minimax problem admits straightforward finite-world optimization via adversarial reweighting over structured worlds. In particular, the learner may be viewed as minimizing a weighted mixture of per-world risks, $\sum_{i,j} w_{ij}\,\hat L_{ij}(h)$, where $w$ is the adversary's strategy over worlds $(i,j)$. A natural hard-minimax strategy is greedy worst-world descent: at each iteration, identify the current worst world $(i^\ast,j^\ast)$ and update using only its gradient (equivalently, a subgradient step on the pointwise maximum), which is computationally cheap but can yield non-smooth dynamics. To stabilize training, one can instead optimize the Log-Sum-Exp surrogate
\begin{equation}\label{eq:lse_surrogate}
    \mathcal{L}_\tau(h)
    =
    \tau \log \sum_{(i,j)}
    \exp\!\left( \frac{\hat{L}_{ij}(h)}{\tau} \right).
\end{equation}
whose gradient is a weighted average of per-world gradients. As $\tau \!\downarrow\! 0$ this recovers hard minimax training; larger $\tau$ interpolates toward uniform averaging across worlds.

\color{black}

\vspace{-3pt}
\section{Synthetic Experiments}
\vspace{-3pt}

We also empirically validated the central geometric claims of the structured credal framework and assessed the statistical behavior of the proposed estimators on controlled synthetic families. These experiments are designed to verify the theoretical bounds in Sections \ref{sec:theory} and \ref{sec:nx1} under known ground truth, where all quantities can be computed analytically. 
\vspace{-4pt}
\noindent \paragraph{~\cref{prp:general-bounds} (General Distance Bounds.)} First, we characterize the empirical tightness of the general distance decomposition by constructing a structured credal set using $20$ Gaussian environments and $10$ labelers, including soft (sigmoid) and deterministic hard (threshold) labelers with parameters sampled on uniform grids resulting in $40{,}000$ joint distribution pairs. Across all evaluated cases, the empirical total variation distances satisfied the theoretical bounds with essentially zero slack (up to numerical precision) in the pure environment-shift and pure labeler-shift regimes and only moderate slack in the simultaneous joint-shift setting. Summarized results are shown in Table \ref{tab:prop-43-tightness} with detailed breakdown in \cref{tab:gen-dist-bounds} in the Appendix. 

\begin{table}[h]
\centering
\caption{Tightness of general distance bounds (\cref{prp:general-bounds}) in the joint-shift regime ($i \neq i',\, j \neq j'$). $\Delta_{\mathrm{low}}$ and $\Delta_{\mathrm{up}}$ denote the mean gap between the true TV distance and the lower/upper bound, respectively.}
\label{tab:prop-43-tightness}
\begin{tabular}{lcc}
\toprule
\textbf{Labeler Regime} & \textbf{$\Delta_{\mathrm{low}} \downarrow$} & \textbf{$\Delta_{\mathrm{up}} \downarrow$} \\ \midrule
Soft             & 0.242  & 0.165  \\
Det.\ Hard       & 0.228  & 0.096 \\ \bottomrule
\end{tabular}
\end{table}

\color{black}

\vspace{-4pt}
\noindent \paragraph{Disagreement-diameter characterization.} We show that in the fixed-covariate regime the empirical diameter estimator is nearly unbiased for deterministic hard labelers (Table~\ref{tab:det_diameter-ablation}), remains accurate in the soft-label setting via the \(L_1\)-based form (Table~\ref{tab:softlabel-kappa-ablation}), and continues to track the closed-form noisy-annotator prediction with low error across increasing noise budgets (Table~\ref{tab:noise-ablation}). 

\vspace{-4pt}
\noindent \paragraph{\cref{thm:concentration} (Finite Sample Concentration \& Labeler complexity).} Finally, the absolute estimation error decays at the predicted \(O(n^{-1/2})\) rate across multiple covariate distributions (Tables \ref{tab:det_n01_samp_complexity}, \ref{tab:det_n02_samp_complexity}, \ref{tab:det_n21_samp_complexity}, \ref{tab:iv21} and \ref{tab:prob_samp_comp}; as well as Figure~\ref{fig:app:concentration}), with negligible empirical violation probability and stable tightness ratios, and we additionally study scaling with the number of labeling mechanisms, observing non-vacuous bounds and robustness even for large \(N_Y\) (Appendix~\ref{sec:app:results}).





\color{black}

\vspace{-3pt}
\section{Conclusion}
\vspace{-3pt}
We introduced a structured credal learning framework for decomposing the distributional uncertainty into covariate distributions and labeling mechanisms: By leveraging structure, we derived sharp geometric bounds on the diameter of the induced credal set and established an exact characterization in the pure label uncertainty regime. This result transforms robustness from a hyperparameter-dependent quantity into a directly measurable statistic governed by annotator disagreement. We further show that robust optimization over structured credal sets admits a tractable discrete formulation, replacing ad-hoc tuned robustness-radius with an intrinsic robustness quantity determined by the structured credal set yielding interpretable worst-case objectives. Together, these results provide a principled theoretical foundation for robust learning under simultaneous covariate shift and label uncertainty, bridging the gap between abstract robustness guarantees and empirically observable uncertainty sources. 
\vspace{-3pt}
\section{Limitations and Future Work} 
\vspace{-3pt}
Our framework currently assumes finite collections of environments and labelers, which may limit scalability to settings with continuous or high-dimensional uncertainty families. While the structured credal construction improves interpretability, it relies on explicit specification or estimation of candidate environments and labelers, which may be challenging in fully unsupervised or highly dynamic deployment scenarios. 

Several directions naturally follow from this work. First, extending the structured credal framework to continuous families of environments and annotators, potentially parameterized by latent variables, would broaden its applicability to large-scale deployment settings. Second, integrating the proposed robustness formulation with modern representation learning pipelines (e.g., self-supervised models) may enable joint optimization of feature invariance and label disagreement robustness. Another promising direction is the incorporation of task-specific structure, such as hierarchical labels or structured outputs, into the labeling mechanism component of the credal set. This may yield finer-grained robustness guarantees in complex prediction tasks.


\bibliography{icml2026}
\bibliographystyle{icml2026}

\newpage
\appendix
\onecolumn

\section{Proofs from Sections \ref{sec:framework} and \ref{sec:theory}}

\subsection{Lemma \ref{lem:extremum}}
\begin{proof}
Since $\mathcal{P}$ is the convex hull of finitely many probability measures $\{P_{ij}\}_{(i,j) \in [N_X] \times [N_Y]}$, it is a compact convex polytope and its finite set of extreme points satisfies $\mathrm{ex}(\mathcal{P}) \subseteq \{P_{ij}\}$.

\medskip\noindent\textbf{Case (i)}
The function $f: \mathcal{P} \to \mathbb{R}$ is continuous and convex. By the Bauer Maximum Principle,
\(
  \sup_{Q \in \mathcal{P}} f(Q) = \max_{Q \in \mathrm{ex}(\mathcal{P})} f(Q).
\)

\medskip\noindent\textbf{Case (ii)}
Let $f\colon \mathcal{P}^2 \to \mathbb{R}$ be continuous and separately convex. Fix the second argument $Q_2 \in \mathcal{P}$. The partial map $Q_1 \mapsto f(Q_1, Q_2)$ is continuous and convex on $\mathcal{P}$. Applying the $k=1$ case, it attains its supremum at an extreme point $Q_1^\star \in \mathrm{ex}(\mathcal{P})$. We define this partial supremum as:
\[
  g(Q_2) = \max_{Q_1 \in \mathrm{ex}(\mathcal{P})} f(Q_1, Q_2).
\]
Because $\mathrm{ex}(\mathcal{P})$ is a finite set, $g(Q_2)$ is the pointwise maximum of a finite number of functions, each of which is continuous and convex in $Q_2$. Since the pointwise maximum of finitely many continuous, convex functions is itself continuous and convex, $g\colon \mathcal{P} \to \mathbb{R}$ satisfies the conditions of the $k=1$ case. 

Applying the Bauer Maximum Principle to $g$ over $Q_2 \in \mathcal{P}$, the supremum is attained at an extreme point $Q_2^\star \in \mathrm{ex}(\mathcal{P})$, yielding:
\[
  \sup_{(Q_1, Q_2) \in \mathcal{P}^2} f(Q_1, Q_2) = \sup_{Q_2 \in \mathcal{P}} g(Q_2) = \max_{Q_2 \in \mathrm{ex}(\mathcal{P})} g(Q_2) = \max_{(Q_1, Q_2) \in \mathrm{ex}(\mathcal{P})^2} f(Q_1, Q_2).
\]
This completes the proof.
\end{proof}

\subsection{Proofs for Propositions \& Theorems \ref{prp:label-disagreement} - \ref{prp:diameter-decomposition}} \label{proofs:general}
\subsubsection{Proof for Proposition \ref{prp:label-disagreement}}
\label{proof:label-disagreement}

Using the density representation with respect to product measure $ \mu_\Omega =  \mu_X \otimes \mu_Y$:
\begin{align*}
\dTV(P_{ij}, P_{ij'}) &= \frac{1}{2} \int_\calX \int_\calY \left|p_X^{(i)}(x) p_{Y|X}^{(j)}(y|x) - p_X^{(i)}(x) p_{Y|X}^{(j')}(y|x)\right| d\mu_Y(y) \, d\mu_X(x).
\end{align*}

Since $p_X^{(i)}(x) \geq 0 \,\forall  \,x \in \mathcal{X}$ as it is a probability density, we can factor it out of the absolute value:
\begin{align*}
&= \frac{1}{2} \int_\calX p_X^{(i)}(x) \Bigg(\int_\calY \left|p_{Y|X}^{(j)}(y|x) - p_{Y|X}^{(j')}(y|x)\right| d\mu_Y(y) \Bigg) \, d\mu_X(x).
\end{align*}

Recognizing that $\frac{1}{2}\int_\calY |p_{Y|X}^{(j)}(y|x) - p_{Y|X}^{(j')}(y|x)| d\mu_Y(y) = \dTV(P_{Y|X}^{(j)}(\cdot|x), P_{Y|X}^{(j')}(\cdot|x))$:
\begin{align*}
&= \int_\calX p_X^{(i)}(x) \cdot \dTV\left(P_{Y|X}^{(j)}(\cdot|x), P_{Y|X}^{(j')}(\cdot|x)\right) d\mu_X(x) 
= \mathbb{E}_{X \sim P_X^{(i)}} \left[\dTV\left(P_{Y|X}^{(j)}(\cdot|X), P_{Y|X}^{(j')}(\cdot|X)\right)\right].
\end{align*}

\subsubsection{Proof for Proposition~\ref{prp:covariate-shift} }
\label{proof:covariate-shift}

Using the density representation with respect to product measure  $\mu_\Omega$, the joint densities factorize as $p_{ij}(x,y) = p_X^{(i)}(x) p_{Y|X}^{(j)}(y|x)$ and $p_{i'j}(x,y) = p_X^{(i')}(x) p_{Y|X}^{(j)}(y|x)$.

Plugging this into the definition of TV distance:
\begin{align*}
\dTV(P_{ij}, P_{i'j}) &= \frac{1}{2} \int_\calX \int_\calY \left| p_X^{(i)}(x) p_{Y|X}^{(j)}(y|x) - p_X^{(i')}(x) p_{Y|X}^{(j)}(y|x) \right| \, d\mu_Y(y) \, d\mu_X(x).
\end{align*}

Since the conditional density $p_{Y|X}^{(j)}(y|x)$ is non-negative, we factor it out:
\begin{align*}
&= \frac{1}{2} \int_\calX \left| p_X^{(i)}(x) - p_X^{(i')}(x) \right| \left( \int_\calY p_{Y|X}^{(j)}(y|x) \, d\mu_Y(y) \right) \, d\mu_X(x).
\end{align*}

As $\int_\calY p_{Y|X}^{(j)}(y|x) \, d\mu_Y(y) = 1$, substituting 1 in the above equation yields:

\begin{align}
\label{eq:cov_ineqality}
&=\frac{1}{2} \int_\calX \left| p_X^{(i)}(x) - p_X^{(i')}(x) \right| \cdot 1 \, d\mu_X(x) \\
&= \dTV\left(P_X^{(i)}, P_X^{(i')}\right).
\end{align}


\subsubsection{Proof for Theorem ~\ref{prp:general-bounds}}

\begin{proof}
We proceed through two decomposition paths and combine the resulting bounds.

\textbf{Lower bound.}

\medskip
\noindent\textbf{Step 1: Establishing exact intermediate distances.}

\noindent By Proposition~\ref{prp:label-disagreement} applied with shared
$P_X^{(i)}$ and $P_X^{(i')}$ respectively, let 
\begin{equation}
    d_{\mathrm{TV}}(P_{ij},\, P_{ij'}) = A_i,
    \qquad
    d_{\mathrm{TV}}(P_{i'j},\, P_{i'j'}) = A_{i'}.
    \label{eq:label-exact}
\end{equation}
By Proposition~\ref{prp:covariate-shift}
(exact covariate-shift distance with shared labeler), applied with shared
$P_{Y|X}^{(j)}$ and $P_{Y|X}^{(j')}$ respectively, let
\begin{equation}
    d_{\mathrm{TV}}(P_{ij},\, P_{i'j}) = C,
    \qquad
    d_{\mathrm{TV}}(P_{ij'},\, P_{i'j'}) = C.
    \label{eq:covariate-exact}
\end{equation}

\medskip
\noindent\textbf{Step 2: Path A (change labeler first, then environment).}

\noindent Consider the intermediate chain
$P_{ij} \to P_{ij'} \to P_{i'j'}$.
By the reverse triangle inequality for the total variation metric:
\begin{equation}
    d_{\mathrm{TV}}(P_{ij}, P_{i'j'})
    \;\geq\;
    \left|
      d_{\mathrm{TV}}(P_{ij}, P_{ij'})
      -
      d_{\mathrm{TV}}(P_{ij'}, P_{i'j'})
    \right|.
    \label{eq:reverse-tri-A}
\end{equation}
Substituting the exact values from~\eqref{eq:label-exact}
and~\eqref{eq:covariate-exact} into~\eqref{eq:reverse-tri-A}:
\begin{equation}
    d_{\mathrm{TV}}(P_{ij}, P_{i'j'})
    \;\geq\;
    \left| A_i - C \right|.
    \tag{Path A}
    \label{eq:path-A}
\end{equation}

\medskip
\noindent\textbf{Step 3: Path B (change environment first, then labeler).}

\noindent Consider the intermediate chain
$P_{ij} \to P_{i'j} \to P_{i'j'}$.
By the reverse triangle inequality:
\begin{equation}
    d_{\mathrm{TV}}(P_{ij}, P_{i'j'})
    \;\geq\;
    \left|
      d_{\mathrm{TV}}(P_{ij}, P_{i'j})
      -
      d_{\mathrm{TV}}(P_{i'j}, P_{i'j'})
    \right|.
    \label{eq:reverse-tri-B}
\end{equation}
Substituting the exact values from~\eqref{eq:label-exact}
and~\eqref{eq:covariate-exact} into~\eqref{eq:reverse-tri-B}:
\begin{equation}
    d_{\mathrm{TV}}(P_{ij}, P_{i'j'})
    \;\geq\;
    \left| C - A_{i'} \right|
    = \left| A_{i'} - C \right|.
    \tag{Path B}
    \label{eq:path-B}
\end{equation}

\medskip

Taking the maximum over both paths and substituting the definitions yields the stated bound.
\color{black}

\textbf{Upper Bound.} By the triangle inequality for total variation distance:
\[
\dTV(P_{ij}, P_{i'j'}) \leq \dTV(P_{ij}, P_{i'j}) + \dTV(P_{i'j}, P_{i'j'}).
\]

Applying \Cref{prp:covariate-shift} to the first term:
\[
\dTV(P_{ij}, P_{i'j}) = \dTV(P_X^{(i)}, P_X^{(i')}).
\]

Applying \Cref{prp:label-disagreement} to the second term:
\[
\dTV(P_{i'j}, P_{i'j'}) = \mathbb{E}_{X \sim P_X^{(i')}}\left[\dTV(P_{Y|X}^{(j)}, P_{Y|X}^{(j')})\right].
\]

Alternatively, using the path through $P_{ij'}$:
\[
\dTV(P_{ij}, P_{i'j'}) \leq \dTV(P_{ij}, P_{ij'}) + \dTV(P_{ij'}, P_{i'j'}),
\]

which gives:
\[
\dTV(P_{ij}, P_{i'j'}) \leq \mathbb{E}_{X \sim P_X^{(i)}}\left[\dTV(P_{Y|X}^{(j)}, P_{Y|X}^{(j')})\right] + \dTV(P_X^{(i)}, P_X^{(i')}).
\]

Taking the minimum over both paths and substituting the definitions yields the stated bound.
\end{proof}

\subsubsection{Proof for Theorem \ref{prp:diameter-decomposition}}
\label{proof:diameter-decomposition}

\begin{definition}[Pointwise Label Disagreement]
\label{def:pointwise-disagreement}
For each $x \in \calX$, the \emph{pointwise label disagreement} is:
\begin{equation}
\eta_{Y|X}(x) := \max_{j, j' \in [N_Y]} \dTV(P_{Y|X}^{(j)}, P_{Y|X}^{(j')}).
\end{equation}
\end{definition}

\begin{definition}[Maximum Expected Label Disagreement under $P_X^{(i)}$]
\label{def:expected-disagreement}
The \emph{max expected label disagreement under $P_X^{(i)}$} is:
\begin{equation}
\eta_{Y|X}^{(i)} := \max_{j, j' \in [N_Y]} \E_{X \sim P_X^{(i)}}\left[\dTV(P_{Y|X}^{(j)}, P_{Y|X}^{(j')})\right].
\end{equation}
\end{definition}

\textbf{Lower Bound 1: $\mathrm{diam}_{\mathrm{TV}}(\mathcal{P}) \geq \eta_X$.}

The diameter is achieved at extreme points. Consider any pair $(i, i')$ with $i \neq i'$ and fix any $j \in [N_Y]$:
\[
\mathrm{diam}_{\mathrm{TV}}(\mathcal{P}) \geq \dTV(P_{ij}, P_{i'j}) = \dTV(P_X^{(i)}, P_X^{(i')}),
\]
where the equality follows from \Cref{prp:covariate-shift}. Taking the maximum over all pairs $(i, i')$:
\[
\mathrm{diam}_{\mathrm{TV}}(\mathcal{P}) \geq \max_{i, i' \in [N_X]} \dTV(P_X^{(i)}, P_X^{(i')}) = \eta_X.
\]

\textbf{Lower Bound 2: $\mathrm{diam}_{\mathrm{TV}}(\mathcal{P}) \geq \eta_{Y|X}^*$}

Fix any $i \in [N_X]$ and consider pairs $(j, j')$ with $j \neq j'$:
\[
\mathrm{diam}_{\mathrm{TV}}(\mathcal{P}) \geq \dTV(P_{ij}, P_{ij'}) = \mathbb{E}_{X \sim P_X^{(i)}}\left[\dTV(P_{Y|X}^{(j)}, P_{Y|X}^{(j')})\right],
\]
where the equality follows from \cref{prp:label-disagreement}. Taking the maximum over $(j, j')$:

\[
\mathrm{diam}_{\mathrm{TV}}(\mathcal{P}) \geq \max_{j, j'} \mathbb{E}_{X \sim P_X^{(i)}}\left[\dTV(P_{Y|X}^{(j)}, P_{Y|X}^{(j')})\right] = \eta_{Y|X}^{(i)}.
\]
Since this holds for all $i$, taking the maximum over $i$:
\[
\mathrm{diam}_{\mathrm{TV}}(\mathcal{P}) \geq \max_{i \in [N_X]} \eta_{Y|X}^{(i)} = \eta_{Y|X}^*
\]

Combining both lower bounds:
\[
\mathrm{diam}_{\mathrm{TV}}(\mathcal{P}) \geq \max\{\eta_X, \eta_{Y|X}^*\}.
\]

\textbf{Upper Bound 1: $ \mathrm{diam}_{TV}(\mathcal{P}) \le \eta_X + \eta_{Y|X}^\ast$.}

Let $P_{ij}$ and $P_{i'j'}$ be arbitrary extreme points of $\mathcal P$. By the triangle inequality,
\[
d_{TV}(P_{ij},P_{i'j'})
\;\le\;
d_{TV}(P_{ij},P_{i'j}) + d_{TV}(P_{i'j},P_{i'j'}).
\]

For the first term, since the conditional distribution is shared, \cref{prp:covariate-shift} yields
\[
d_{TV}(P_{ij},P_{i'j})
\;\le\;
d_{TV}(P_X^{(i)},P_X^{(i')})
\;\le\;
\eta_X.
\]

For the second term, the marginal distribution is shared, and by \cref{prp:label-disagreement},
\[
d_{TV}(P_{i'j},P_{i'j'})
=
\mathbb E_{X\sim P_X^{(i')}}\!\left[
d_{TV}\!\left(
P_{Y|X}^{(j)}(\cdot|X),
P_{Y|X}^{(j')}(\cdot|X)
\right)
\right]
\;\le\;
\eta^{\ast}_{Y|X}.
\]

Combining the two bounds gives us:
\[
d_{TV}(P_{ij},P_{i'j'})
\;\le\;
\eta_X + \eta^{\ast}_{Y|X}.
\]
Since the supremum of total variation over a convex set is attained at extreme points, the result follows.

\textbf{Upper Bound 2: $\mathrm{diam}_{TV}(\mathcal{P}) \le \eta_X + (1-\eta_X) \bar{\eta}_{Y|X}$.}

Using the density representation with respect to product measure  $\mu_\Omega$,  We define the ``overlap'' density $m(x)$ and the ``excess'' densities $\Delta_i(x)$ and $\Delta_{i'}(x)$ as follows:
\[
m(x) := \min\{p_X^{(i)}(x), p^{(i')}_X(x)\}, 
\qquad
\Delta_i(x) := p_X^{(i)}(x) - m(x), 
\qquad
\Delta_{i'}(x) := p^{(i')}_X(x) - m(x)
\]
By construction $\Delta_i(x) \cdot \Delta_{i'}(x)=0$ for all $x$, and
\(
\int_{\calX} m(x)\, d\mu_X(x) = 1 - \dTV(P_X^{(i)}, P_X^{(i')}).
\)
We decompose the Total Variation distance using these densities:
\begin{align*}
\dTV(P_{ij}, P_{i'j'}) &= \frac{1}{2} \int_{\calX} \int_{\calY} |p_i(x)p_{Y|X}^{(j)}(y|x) - p_{i'}(x)p_{Y|X}^{(j')}(y|x)| \, d\mu_Y(y) \, d\mu_X(x) \\
&= \frac{1}{2} \int_{\calX} \int_{\calY} |(m(x) + \Delta_i(x))p_{Y|X}^{(j)} - (m(x) + \Delta_{i'}(x))p_{Y|X}^{(j')}| \, d\mu_Y(y) \, d\mu_X(x)
\end{align*}
Applying the triangle inequality:
\begin{align*}
\dTV(P_{ij}, P_{i'j'}) &\leq \underbrace{\frac{1}{2} \int_{\calX} m(x) \int_{\calY} |p_{Y|X}^{(j)} - p_{Y|X}^{(j')}| \, d\mu_Y(y) \, d\mu_X(x)}_{\text{Term I}} \\
&+ \underbrace{\frac{1}{2} \int_{\calX} \int_{\calY} |\Delta_i(x)p_{Y|X}^{(j)} - \Delta_{i'}(x)p_{Y|X}^{(j')}| \, d\mu_Y(y) \, d\mu_X(x)}_{\text{Term II}}
\end{align*}

\textbf{Analyzing Term I (Label Disagreement on Overlap):}
We regroup the factor of $1/2$ into the inner integral to identify the conditional Total Variation distance:
\begin{align*}
\text{Term I} &= \int_{\calX} m(x) \underbrace{\left( \frac{1}{2} \int_{\calY} |p_{Y|X}^{(j)}(y|x) - p_{Y|X}^{(j')}(y|x)| \, d\mu_Y(y) \right)}_{\dTV(P_{Y|X}^{(j)}(\cdot|x), P_{Y|X}^{(j')}(\cdot|x))} \, d\mu_X(x) \\
&= \int_{\calX} m(x) \dTV(P_{Y|X}^{(j)}(\cdot|x), P_{Y|X}^{(j')}(\cdot|x)) \, d\mu_X(x) \\
&\leq \int_{\calX} m(x) \max_{j, j'} \dTV(P^{(j)}_{Y|X}(\cdot|x),P^{(j')}_{Y|X}(\cdot|x)) \, d\mu_X(x) \\
  &= \int_{\calX} m(x) \eta_{Y|X}(x) \, d\mu_X(x)  \\
  & \leq \int_{\calX} m(x) \sup_{x \in \calX} \eta_{Y|X}(x) \, d\mu_X(x) \\
  &= \int_{\calX} m(x) \bar{\eta}_{Y|X} \, d\mu_X(x) \\
&= \bar{\eta}_{Y|X} (1 - \dTV(P_X^{(i)}, P_X^{(i')})) \\
\end{align*}

\begin{figure}[!h]
     \centering
     \begin{subfigure}{0.8\textwidth}
         \centering
         \includegraphics[width=\linewidth]{chapters/figures/gating_effect_1d.pdf}
         \caption{\textbf{Gating Effect under Covariate Shift.} Joint total variation distance varies with the position of the covariate window, peaking when probability mass overlaps regions of high labeling disagreement. While the covariate-only TV remains nearly constant, the joint TV closely follows the theoretical upper bound, illustrating how covariate shift modulates the observable impact of labeling uncertainty.}
         \label{fig:app:gating:1d}
     \end{subfigure}
     \hfill
     \begin{subfigure}{0.99\textwidth}
         \centering
         \includegraphics[width=\linewidth]{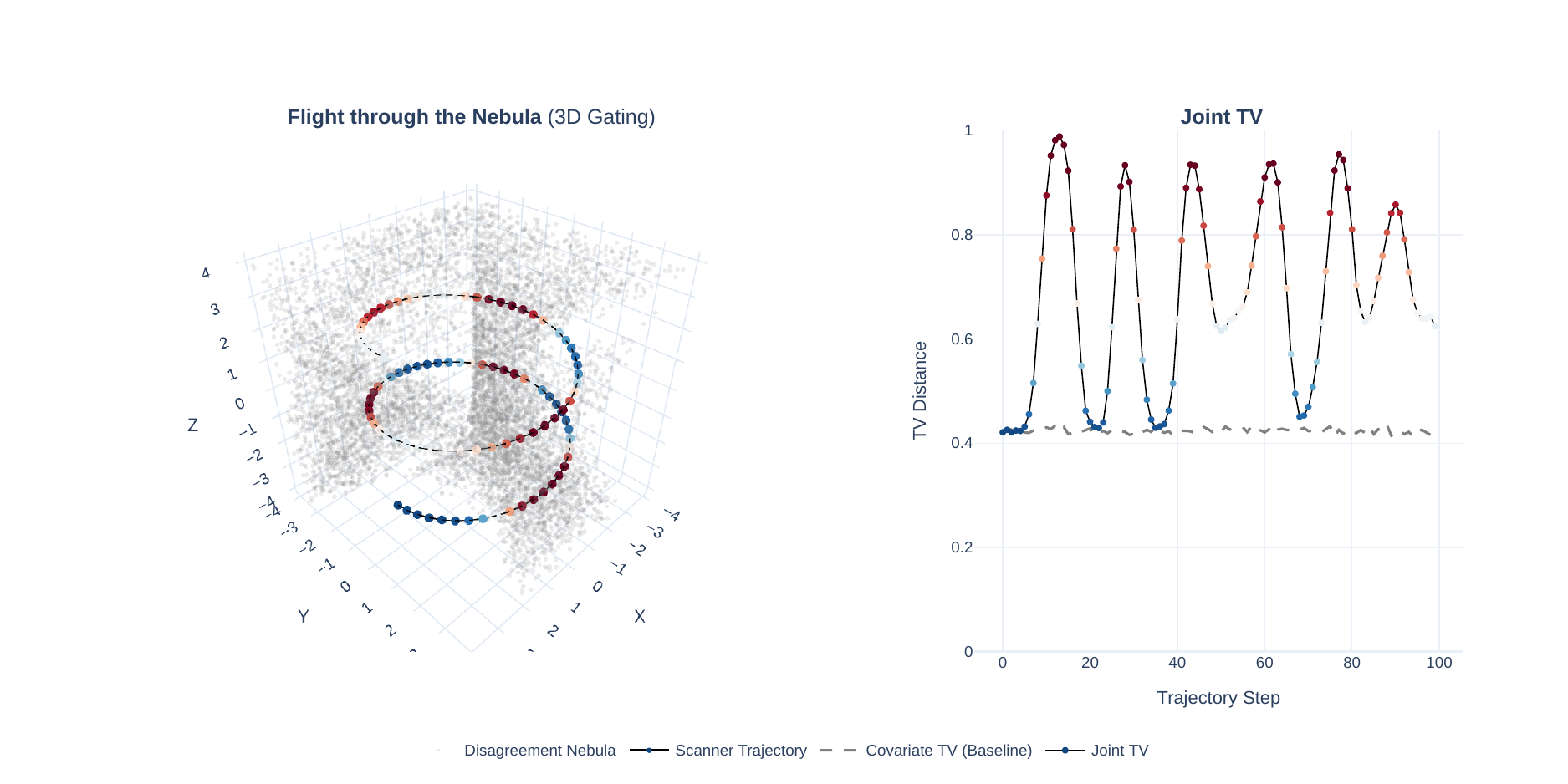}
         \caption{\textbf{3D Gating Visualization}: The left panel illustrates a scanner trajectory passing through a three-dimensional disagreement region, while the right panel shows the corresponding joint total variation (TV) distance along the trajectory. Peaks in joint TV occur when the trajectory intersects high-disagreement regions, whereas outside these regions the divergence remains close to the covariate-only baseline, demonstrating the gating effect of covariate mass on observable labeling uncertainty.}
         \label{fig:app:gating:3d}
     \end{subfigure}
     \caption{Proposition~\ref{prp:general-bounds} shows that joint distributional divergence results from the interaction between covariate shift and labeling disagreement, with bounds corresponding to two interpolation paths between feature and label changes. Label disagreement affects the joint distance only on regions with non-negligible covariate mass, yielding a \textbf{gating effect} in which covariate shift controls the visibility of labeling uncertainty. Consequently, training distributions may underestimate risk when deployment concentrates probability mass in high-disagreement regions, a coupling made explicit by the structured credal framework.}
     \label{fig:app:gating}
\end{figure}

\textbf{Analyzing Term II (Feature Mismatch):}

\begin{enumerate}
    \item \textbf{Step 1: Understanding the Disjoint Supports:}
    For any specific input $x$, only one of three cases can occur:
    \begin{itemize}
        \item \textbf{Case A:} $p^{(i)}_X(x) > p^{(i')}_X(x)$. Here, the minimum is $p^{(i')}_X(x)$. Therefore, $\Delta_{i'}(x) = 0$ and $\Delta_i(x) > 0$.
        \item \textbf{Case B:} $p^{(i')}_X(x) > p^{(i)}_X(x)$. Here, the minimum is $p^{(i)}_X(x)$. Therefore, $\Delta_i(x) = 0$ and $\Delta_{i'}(x) > 0$.
        \item \textbf{Case C:} $p^{(i')}_X(x) = p^{(i)}_X(x)$. Here,  $\Delta_i(x) = \Delta_{i'}(x) = 0$. 
    \end{itemize}
    Crucially, this means the product $\Delta_i(x) \cdot \Delta_{i'}(x)$ is always zero. They never exist simultaneously at the same point $x$.

    \item \textbf{Step 2: Why the Absolute Value Vanishes}
    Consider the expression inside the absolute value brackets:
    \[
    \left| \Delta_i(x) \cdot p_{Y|X}^{(j)}(y|x) - \Delta_{i'}(x) \cdot p_{Y|X}^{(j')}(y|x) \right|
    \]
    Using the property established above:
    \begin{itemize}
        \item In \textbf{Case A} (where $\Delta_{i'}=0$), the term becomes $|\Delta_i \cdot p^{(j)}_{Y|X}(y|x) - 0| = \Delta_i \, p^{(j)}_{Y|X}(y|x)$.
        \item In \textbf{Case B} (where $\Delta_i=0$), the term becomes $|0 - \Delta_{i'} p^{(j')}| = \Delta_{i'} \cdot \, p^{(j')}_{Y|X}(y|x)$.
        \item In \textbf{Case C}, (where $\Delta_i=\Delta_{i'}=0$), the term becomes $|0-0| = 0$
    \end{itemize}
    In all three cases, the result is mathematically identical to the \emph{sum} of the two terms (since at least one term is always zero, adding it changes nothing). Thus, we can drop the absolute value and simply add them:
    \[
    \left| \Delta_i(x) \cdot p^{(j)}_{Y|X} (x) - \Delta_{i'}(x) \cdot p^{(j')}_{Y|X} (x) \right| = \Delta_i(x) \cdot p_{Y|X}^{(j)}(y|x) + \Delta_{i'}(x) \cdot p_{Y|X}^{(j')}(y|x)
    \]

    \item \textbf{Step 3: Splitting the Integral:}
    We substitute this sum back into the main integral:
    \[
    \text{Term II} = \frac{1}{2} \int_{\calX} \int_{\calY} \left( \Delta_i(x)p_{Y|X}^{(j)}(y|x) + \Delta_{i'}(x)p_{Y|X}^{(j')}(y|x) \right) \, d\mu_Y(x) \, d\mu_X(x)
    \]
    By the linearity of integration, we can separate this into two distinct integrals:
    \[
    = \frac{1}{2} \left[ \int_{\calX} \Delta_i(x) \left(\int_{\calY} p_{Y|X}^{(j)}(y|x) d\mu_Y(y)\right) d\mu_X(x) + \int_{\calX} \Delta_{i'}(x) \left(\int_{\calY} p_{Y|X}^{(j')}(y|x) d\mu_Y(y)\right) d\mu_X(x) \right]
    \]

    \item \textbf{Integrating out $y$:}
    Since $p_{Y|X}$ is a probability distribution:
    \[
    \int_{\calY} p_{Y|X}^{(j)}(y|x) \, d\mu_Y(x) \leq 1 \quad \text{and} \quad \int_{\calY} p_{Y|X}^{(j')}(y|x) \, d\mu_Y(x) \leq 1
    \]
    Substituting these 1s into the equation simplifies it to integrals over just $x$:
    \[
    \leq \frac{1}{2} \left[ \int_{\calX} \Delta_i(x) \cdot 1 \, d\mu_X(x) + \int_{\calX} \Delta_{i'}(x) \cdot 1 \, d\mu_X(x) \right]
    \]

    \item \textbf{Final Evaluation:}
    The integral of the excess density $\int \Delta_i(x) d\mu_X(x)$ represents the total probability mass unique to distribution $P_i$. A standard identity of Total Variation distance states that $\int \Delta_i(x) d\mu_X(x) = \dTV(P_X^{(i)}, P_X^{(i')})$.
    \[
    = \frac{1}{2} \left[ \dTV(P_X^{(i)}, P_X^{(i')}) + \dTV(P_X^{(i)}, P_X^{(i')}) \right]
    \]
    \[
    = \dTV(P_X^{(i)}, P_X^{(i')})
    \]
\end{enumerate}

Combining terms, for any pair of distributions:
\[
\dTV(P_{ij}, P_{i'j'}) \leq \dTV(P_X^{(i)}, P_X^{(i')}) + (1 - \dTV(P_X^{(i)}, P_X^{(i')}))\bar{\eta}_{Y|X}.
\]
Let $\delta = \dTV(P_X^{(i)}, P_X^{(i')})$. The function $f(\delta) = \delta + (1-\delta)\bar{\eta}_{Y|X}$ is monotonically increasing in $\delta$ (since $\bar{\eta}_{Y|X} \leq 1$). Thus, we maximize the bound by taking the maximum feature distance $\eta_X$:
\[
\mathrm{diam}_{\mathrm{TV}}(\mathcal{P}) \leq \eta_X + (1 - \eta_X)\bar{\eta}_{Y|X}.
\]

\textbf{Combined Upper Bound: $\mathrm{diam}_{TV}(\mathcal{P}) \le \eta_X + \min\{\eta_{Y|X}^\ast, (1 - \eta_X)\bar{\eta}_{Y|X} \}$}

Because $\mathrm{diam}_{\mathrm{TV}}(\mathcal{P})$ admits both quantities as upper bounds, taking their minimum and applying the identity $\min(a+b,a+c)=a+\min(b,c)$ yields
\[
\eta_X + \min\!\left\{\eta_{Y|X}^\ast,\; (1-\eta_X)\bar{\eta}_{Y|X}\right\},
\]
which establishes the claimed bound.

\section{Proofs for $N_X=1$ Regime}
\label{proof:pure_label_main}

We consider the \emph{pure label uncertainty} regime, where $N_X = 1$. In this case, the covariate distribution is fixed and unique, denoted by $P_X^{(1)} = P_X$, and therefore the covariate diameter satisfies $\eta_X = 0$.

\subsection{Proposition \ref{prop:geometry}: Geometric Exactness} \label{proof:geometric_exactness}

By \cref{prp:label-disagreement}, for any two joint distributions $P_{1j}$ and $P_{1j'}$ in the structured credal set that share the same marginal feature distribution $P_X$ but differ in their conditional label distributions, the total variation distance admits the exact representation
\[
d_{\mathrm{TV}}(P_{1j}, P_{1j'}) 
= 
\mathbb{E}_{X \sim P_X}
\!\left[
d_{\mathrm{TV}}\!\left(
P_{Y|X}^{(j)}(\cdot \mid X),
P_{Y|X}^{(j')}(\cdot \mid X)
\right)
\right].
\]

The diameter of the credal set $\mathcal{P}$ is defined as the maximum total variation distance between any two of its elements. Since $\mathcal{P}$ is the convex hull of the extreme points $\{P_{1j}\}_{j=1}^{N_Y}$ and total variation distance is maximized at extreme points, we obtain
\[
\mathrm{diam}_{\mathrm{TV}}(\mathcal{P})
=
\max_{j,j'}
d_{\mathrm{TV}}(P_{1j}, P_{1j'}).
\]

Substituting the expression above yields
\[
\mathrm{diam}_{\mathrm{TV}}(\mathcal{P})
=
\max_{j,j'}
\mathbb{E}_{X \sim P_X}
\!\left[
d_{\mathrm{TV}}\!\left(
P_{Y|X}^{(j)}(\cdot \mid X),
P_{Y|X}^{(j')}(\cdot \mid X)
\right)
\right]
=
\eta_{Y|X}^*,
\]
which completes the proof.

\subsection{Theorem \ref{thm:concentration}: Finite Sample Concentration} \label{proof:concentration}

To derive concentration bounds for the diameter of the structured credal set as defined in \cref{def:structured-credal-set} in the fixed-feature regime ($N_X=1$), we must first formalize the data-generating process involving multiple labeling mechanisms. Let $\mathcal{X}$ be the feature space and $\mathcal{Y}$ be the label space. We consider a single feature distribution $P_X$ and a set of $N_Y$ distinct labeling mechanisms $\{P_{Y|X}^{(k)}\}_{k=1}^{N_Y}$.

We posit the following assumptions regarding the sampling and annotation process:

\begin{assumption}[i.i.d. Feature Sampling]
\label{ass:iid_features}
The feature vectors are drawn independently and identically distributed from the marginal feature distribution:
$$
X_1, \dots, X_n \sim_{i.i.d.} P_X
$$
\end{assumption}

\begin{assumption}[Labeling Mechanism]
\label{ass:generative_labeling}
Each annotator $k \in \{1, \dots, K\}$ is characterized by a conditional distribution $P_{Y|X}^{(k)}$ over the label space $\mathcal{Y} = \{1, \dots, C\}$. For each sample $i \in \{1, \dots, n\}$ and input realization $X_i = x_i$:
\begin{enumerate}[label=(\alph*)]
    \item The annotator's \emph{belief} or \emph{confidence scores} is given by the probability vector:
    $$
    \mathbf{p}_k(x_i) := \left(P_{Y|X}^{(k)}(1 \mid x_i), \dots, P_{Y|X}^{(k)}(C \mid x_i)\right) \in \Delta^{C-1}
    $$
    where $\Delta^{C-1}$ is the $(C-1)$-dimensional probability simplex.
    
    \item The \emph{realized label} $y_i^{(k)} \in \mathcal{Y}$ is drawn according to:
    $$
    y_i^{(k)} \mid X_i = x_i \sim P_{Y|X}^{(k)}(\cdot \mid x_i)
    $$
\end{enumerate}
\end{assumption}

\begin{assumption}[Observation Model]
\label{ass:observation_model}
We observe one of the following, depending on the regime:
\begin{enumerate}[label=(\alph*)]
    \item \textbf{Soft Label Regime:} The full probability vector $\mathbf{p}_k(x_i) \in \Delta^{C-1}$ is observed for each sample $i$ and labeling mechanism $k$.
    
    \item \textbf{Hard Label Regime:} Only the realized label $y_i^{(k)} \in \mathcal{Y}$ is observed for each sample $i$ and labeling mechanism $k$.
\end{enumerate}
\end{assumption}

\begin{assumption}[Structure of Labeling Mechanism]
\label{ass:labeling_structure}
The conditional distributions $\{P_{Y|X}^{(k)}\}_{k=1}^K$ satisfy one of the following structural conditions:

\begin{enumerate}[label=(\alph*)]
    \item \textbf{Deterministic:} For each annotator $k$, there exists a measurable function $f_k: \mathcal{X} \to \mathcal{Y}$ such that:
    $$
    P_{Y|X}^{(k)}(c \mid x) = \mathbb{I}[f_k(x) = c] \quad \forall c \in \mathcal{Y}, \; x \in \mathcal{X}
    $$
    Equivalently, the conditional probability vector
    \[
    \mathbf{p}_k(x) := \big(P_{Y|X}^{(k)}(1 \mid x), \dots, P_{Y|X}^{(k)}(C \mid x)\big)
    \]
    belongs to $\{e_1, \dots, e_C\}$, where $e_c \in \mathbb{R}^C$ denotes the $c$-th standard basis vector (one-hot encoding).

    \item \textbf{Stochastic:} The conditional distributions are non-degenerate, i.e., there exists at least one $k \in [N_Y]$ and $x \in \mathcal{X}$ such that:
    $$
    P_{Y|X}^{(k)}(c \mid x) \in (0, 1) \quad \text{for some } c \in \mathcal{Y}
    $$
\end{enumerate}
\end{assumption}

\begin{assumption}[Conditional Independence of Labelers]
\label{ass:labeler_independence}
Conditional on the feature realization $X_i$, the labelers operate independently. That is, for any $j \neq k$:
$$
y_i^{(j)} \perp \!\!\! \perp y_i^{(k)} \mid X_i
$$
This implies that the joint conditional distribution of all $N_Y$ labels factorizes:
$$
P(y_i^{(1)}, \dots, y_i^{(K)} \mid X_i) = \prod_{k=1}^{N_Y} P_{Y|X}^{(k)}(y_i^{(k)} \mid X_i)
$$
\end{assumption}

\begin{assumption}[Sample Independence]
\label{ass:sample_independence}
The generation processes for distinct samples $i \neq j$ are mutually independent. Specifically, the tuple $\mathbf{Z}_i = (X_i, y_i^{(1)}, \dots, y_i^{(N_Y)})$ is independent of $\mathbf{Z}_j$.
\end{assumption}

Establishing that the fully observed tuples are i.i.d. allows us to apply standard concentration inequalities (e.g., Hoeffding's Inequality) to the estimation of pairwise disagreement rates.

\begin{lemma}[i.i.d. Structure of Annotated Tuples]
\label{lemma:iid_tuples}
Under Assumptions \ref{ass:iid_features}--\ref{ass:sample_independence}, the joint tuples $\mathbf{Z}_i = (X_i, y_i^{(1)}, \dots, y_i^{(K)})$ for $i=1, \dots, n$ are independent and identically distributed random variables taking values in the product space $\Omega_{joint} = \mathcal{X} \times \mathcal{Y}^K$.
\end{lemma}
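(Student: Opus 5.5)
To prove Lemma~\ref{lemma:iid_tuples}, I will write each tuple $\mathbf{Z}_i$ as a fixed measurable map of independent primitive randomness. Independence across $i$ will then come from Assumption~\ref{ass:sample_independence}. Identical distribution will come from showing that the law of $\mathbf{Z}_i$ is one and the same product-type measure for every $i$.

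\textbf{Step 1: the common law.} On $\Omega_{joint} = \mathcal{X}\times\mathcal{Y}^{N_Y}$ I define a candidate law $\mu$ on measurable rectangles $A\times B_1\times\cdots\times B_{N_Y}$ by
\[
\mu(A\times B_1\times\cdots\times B_{N_Y}) := \int_A \prod_{k=1}^{N_Y} P^{(k)}_{Y|X}(B_k\mid x)\,\mathrm{d}P_X(x).
\]
This is the measure obtained from $P_X$ and the product kernel $x\mapsto \bigotimes_k P^{(k)}_{Y|X}(\cdot\mid x)$. A product of finitely many Markov kernels is again a Markov kernel, so by the Ionescu--Tulcea or kernel-composition theorem $\mu$ extends uniquely to a probability measure on the product $\sigma$-algebra. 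Here I will use that $\mathcal{Y}=\{1,\dots,C\}$ is finite, so measurability of $x\mapsto P^{(k)}_{Y|X}(c\mid x)$ is all that is needed.

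\textbf{Step 2: each $\mathbf{Z}_i$ has law $\mu$.} I condition on $X_i$:
\[
\mathbb{P}(\mathbf{Z}_i\in A\times B_1\times\cdots\times B_{N_Y}) = \mathbb{E}\big[\mathbb{I}[X_i\in A]\,\mathbb{P}(y_i^{(1)}\in B_1,\dots,y_i^{(N_Y)}\in B_{N_Y}\mid X_i)\big].
\]
Assumption~\ref{ass:labeler_independence} factorizes the conditional probability into $\prod_k P^{(k)}_{Y|X}(B_k\mid X_i)$, and Assumption~\ref{ass:generative_labeling}(b) identifies each factor. Since $X_i\sim P_X$ by Assumption~\ref{ass:iid_features}, the expectation equals $\mu(A\times B_1\times\cdots\times B_{N_Y})$. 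Rectangles form a $\pi$-system generating the product $\sigma$-algebra, so Dynkin's $\pi$--$\lambda$ theorem gives $\mathrm{Law}(\mathbf{Z}_i)=\mu$ for all $i$.

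\textbf{Step 3: mutual independence.} Assumption~\ref{ass:sample_independence} is stated pairwise. I will read it as mutual independence of the generation processes, which is its intended meaning. If only pairwise independence were granted, I would instead argue constructively. I would realize $y_i^{(k)} = g_k(X_i,U_{ik})$ with $U_{ik}$ i.i.d.\ uniform and independent of $X_{1:n}$, by inverse-CDF sampling on the finite $\mathcal{Y}$. Then $\mathbf{Z}_i$ is a measurable function of the block $(X_i,U_{i1},\dots,U_{iN_Y})$. These blocks are mutually independent, so the $\mathbf{Z}_i$ are mutually independent by grouping. Together with Step 2 this gives i.i.d.

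\textbf{Expected obstacle.} The main delicacy is Step 3: the assumptions only specify conditional marginals and pairwise independence, so the proof must make precise the joint generative model in which the $\mathbf{Z}_i$ are mutually independent. The randomization construction settles this. In the deterministic case (Assumption~\ref{ass:labeling_structure}(a)), $g_k$ simply ignores $U_{ik}$, and in the soft-label regime $\mathbf{p}_k(X_i)$ is a measurable function of $X_i$, so the same argument applies verbatim.
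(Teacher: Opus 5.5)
Your main line is correct and is essentially the paper's argument. The paper writes the density of $\mathbf{Z}_i$ as $p_X(x)\prod_k p^{(k)}_{Y|X}(y^{(k)}\mid x)$, notes that it does not depend on $i$, and takes mutual independence directly from Assumption~\ref{ass:sample_independence}. Your Steps~1--2 do the same computation on measurable rectangles and finish with $\pi$--$\lambda$, which has the mild advantage of not presupposing densities. Step~1 is actually unnecessary: it suffices that $\mathbb{P}(\mathbf{Z}_i\in R)$ is the same for every $i$ and every rectangle $R$, since laws that agree on rectangles coincide.

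The fallback in Step~3 does not work, so drop it rather than claiming it "settles" the delicacy. Writing $y_i^{(k)}=g_k(X_i,U_{ik})$ with mutually independent blocks is not derived from the assumptions. It \emph{constructs} a new model in which the tuples happen to be mutually independent, and it says nothing about the given $\mathbf{Z}_i$, whose joint law pairwise independence does not determine.

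Under a genuinely pairwise reading, the lemma can in fact fail. Take $n=3$, a labeler with $P_{Y|X}(1\mid x)=\tfrac12$ for all $x$, fair bits $y_1,y_2$ independent of $X_{1:3}$, and $y_3=y_1\oplus y_2$. Every tuple has the correct law, and the tuples are pairwise independent, yet $\mathbf{Z}_1,\mathbf{Z}_2,\mathbf{Z}_3$ are not mutually independent. The only valid resolution is the one you adopt first, which is also the paper's: read Assumption~\ref{ass:sample_independence} as mutual independence of $\mathbf{Z}_1,\dots,\mathbf{Z}_n$.
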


\begin{proof}
Let $\mu$ be the base measure on $\mathcal{X}$ and $\nu$ be the base measure on $\mathcal{Y}$. We analyze the joint probability density (or mass) function of a single tuple $\mathbf{Z}_i$, denoted as $p_{\mathbf{Z}}(x, y^{(1)}, \dots, y^{(K)})$.

By the definition of conditional probability and Assumption \ref{ass:generative_labeling}, the joint density can be decomposed as:
$$
p_{\mathbf{Z}}(x, y^{(1)}, \dots, y^{(K)}) = p_X(x) \cdot p(y^{(1)}, \dots, y^{(K)} \mid x)
$$
Applying Assumption \ref{ass:labeler_independence} (Conditional Independence), the conditional term factorizes:
$$
p(y^{(1)}, \dots, y^{(K)} \mid x) = \prod_{k=1}^K p_{Y|X}^{(k)}(y^{(k)} \mid x)
$$
Thus, the joint distribution for any index $i$ is uniquely determined by the fixed distributions $P_X$ and $\{P_{Y|X}^{(k)}\}_{k=1}^K$:
\begin{equation}
\label{eq:joint_density}
p_{\mathbf{Z}}(x, y^{(1)}, \dots, y^{(K)}) = p_X(x) \prod_{k=1}^K p_{Y|X}^{(k)}(y^{(k)} \mid x)
\end{equation}
Since $P_X$ and $\{P_{Y|X}^{(k)}\}$ do not depend on the index $i$, every tuple $\mathbf{Z}_i$ follows the exact same distribution defined in Eq. \eqref{eq:joint_density}. Furthermore, by Assumption \ref{ass:sample_independence}, the tuples are mutually independent.

Therefore, the sequence $\mathbf{Z}_1, \dots, \mathbf{Z}_n$ consists of independent and identically distributed draws from the joint distribution $P_{\mathbf{Z}}$.
\end{proof}

\begin{definition}[Labeling Regimes]
\label{def:labeling_regimes}
The combination of Assumptions~\ref{ass:observation_model} and \ref{ass:labeling_structure} defines three distinct regimes:

\begin{center}
\begin{tabular}{lcc}
\toprule
\textbf{Regime} & \textbf{Observation} & \textbf{Structure} \\
\midrule
Soft Labels & $\mathbf{p}_k(x_i) \in \Delta^{C-1}$ & - \\
Deterministic Hard Labels & $y_i^{(k)} \in \mathcal{Y}$ & Deterministic \\
Stochastic Hard Labels & $y_i^{(k)} \in \mathcal{Y}$ & Stochastic \\
\bottomrule
\end{tabular}
\end{center}
\end{definition}

In the soft-label regime, we observe the conditional distributions themselves. For a deterministic hard-label regime, the observed label can be mapped to a conditional probability vector (refer to \cref{ass:labeling_structure}). For a classification task with $C$ classes, for each input $x_i$, the labeling mechanism $k$ provides a probability vector $\mathbf{p}_k(x_i) \in \Delta^{C-1}$, where $\mathbf{p}_k(x_i) \equiv P_{Y|X}^{(k)}(\cdot \mid x_i)$.

\begin{definition}[Empirical TV Distance]
The Total Variation distance between two probability vectors $\mathbf{p}, \mathbf{q}$ is defined as $d_{\mathrm{TV}}(\mathbf{p}, \mathbf{q}) := \frac{1}{2} \| \mathbf{p} - \mathbf{q} \|_1$.
For annotators $j$ and $k$, the empirical TV distance based on $n$ samples is:
\begin{equation}
\hat{d}_{TV}^{jk} = \frac{1}{n} \sum_{i=1}^{n} d_{TV}(\mathbf{p}_j(x_i), \mathbf{p}_k(x_i)).
\end{equation}
\end{definition}

\begin{definition}[Population TV Distance]
The true population disagreement is defined as the expected TV distance over the feature distribution $P_X$:
\begin{equation}
d_{TV}^{\star,jk} = \mathbb{E}_{x \sim P_X}\left[ d_{TV}(\mathbf{p}_j(x), \mathbf{p}_k(x)) \right].
\end{equation}
\end{definition}

We define the maximum disagreement over all pairs $(j,k)$ as:
\[ \hat{\eta}_{Y|X} = \max_{j,k} \hat{d}_{TV}^{jk}, \quad \eta^{\star}_{Y|X} = \max_{j,k} d^{\star, jk}_{TV} \]

\begin{theorem}[Concentration Bound]\label{thm:soft-conc}
Let $N_Y$ be the number of labeling mechanisms providing either deterministic hard labels or soft labels. For any $\delta \in (0,1)$, with probability at least $1 - \delta$:
\begin{equation}
|\hat{\eta}_{Y|X} - \eta^{\star}_{Y|X}| \leq \sqrt{\frac{\ln(N_Y(N_Y-1)/\delta)}{2n}}.
\end{equation}
\end{theorem}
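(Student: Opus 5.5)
The plan is to reduce the statement to a union of Hoeffding bounds, one per labeler pair, and then pass from the per-pair estimates to their maxima.

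\textbf{Step 1: one pair of labelers.} Fix $j\neq k$. By Lemma~\ref{lemma:iid_tuples}, the tuples $\mathbf{Z}_i$ are i.i.d. In the soft regime $\mathbf{p}_j(x_i)$ is observed. In the deterministic hard regime $\mathbf{p}_j(x_i)=e_{y_i^{(j)}}$ by Assumption~\ref{ass:labeling_structure}(a), so $d_{\mathrm{TV}}(\mathbf{p}_j(x_i),\mathbf{p}_k(x_i))=\mathbb{I}[y_i^{(j)}\neq y_i^{(k)}]$. Hence in either regime the summands
\[
W_i^{jk}:=d_{\mathrm{TV}}(\mathbf{p}_j(X_i),\mathbf{p}_k(X_i))
\]
are measurable functions of $\mathbf{Z}_i$ (or of $X_i$). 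They are i.i.d., take values in $[0,1]$, and have mean $d_{TV}^{\star,jk}$. Hoeffding's inequality then gives
\[
\mathbb{P}\big(|\hat d_{TV}^{jk}-d_{TV}^{\star,jk}|>t\big)\le 2e^{-2nt^2}.
\]

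\textbf{Step 2: union bound.} Since $d_{\mathrm{TV}}$ is symmetric, $\hat d^{jk}=\hat d^{kj}$, so only unordered pairs with $j<k$ matter. There are $N_Y(N_Y-1)/2$ of them, and diagonal pairs are identically zero on both sides. The union bound yields total failure probability at most
\[
N_Y(N_Y-1)\,e^{-2nt^2}.
\]
Setting this equal to $\delta$ gives
\[
t=\sqrt{\ln(N_Y(N_Y-1)/\delta)/(2n)}.
\]
This is exactly the stated radius: the factor $2$ from the two-sided Hoeffding bound cancels the $1/2$ from counting unordered pairs.

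\textbf{Step 3: pass to the maximum.} On the good event every pair satisfies $|\hat d^{jk}-d^{\star,jk}|\le t$. The map $v\mapsto\max_{jk}v_{jk}$ is $1$-Lipschitz in $\ell_\infty$, since
\[
\Big|\max_a u_a-\max_a v_a\Big|\le\max_a|u_a-v_a|.
\]
Therefore $|\hat\eta_{Y|X}-\eta^\star_{Y|X}|\le t$. By Proposition~\ref{prop:geometry}, $\eta^\star_{Y|X}$ equals the credal diameter, so the same statement holds with the diameter in place of $\eta^\star_{Y|X}$.

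\textbf{Main obstacle.} The only delicate point is getting the constant inside the logarithm right, which depends on the unordered-pair counting above. A further caveat arises if $N_Y=2$ and $\delta$ is close to $1$: then $\ln(2/\delta)$ is still positive, so the bound remains meaningful. The case $N_Y=1$ is degenerate (both sides are $0$) and is excluded. Measurability of $W_i^{jk}$ follows from measurability of the kernels in $x$.
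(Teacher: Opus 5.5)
Your proposal is correct and follows the paper's proof essentially step for step. Like the paper, you apply Hoeffding per pair to the bounded i.i.d. summands $d_{\mathrm{TV}}(\mathbf{p}_j(X_i),\mathbf{p}_k(X_i))$, take a union bound over the $N_Y(N_Y-1)/2$ unordered pairs (so the two-sided factor $2$ cancels), and use the $1$-Lipschitz property of the maximum to pass from per-pair deviations to $|\hat{\eta}_{Y|X}-\eta^{\star}_{Y|X}|$; your remarks on symmetry, the vanishing diagonal pairs, and $d_{\mathrm{TV}}(e_{y^{(j)}},e_{y^{(k)}})=\mathbb{I}[y^{(j)}\neq y^{(k)}]$ in the deterministic hard regime only make explicit what the paper leaves implicit.
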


\begin{proof}
We establish this result through analysis of the concentration of the empirical mean of the pairwise TV distances.

\textbf{Step 1: Define the random variables.}

For a fixed pair of annotators $(j,k)$, we define the random variable corresponding to the disagreement on the $i$-th sample:
\[
Z_i := d_{TV}(\mathbf{p}_j(x_i), \mathbf{p}_k(x_i)) = \frac{1}{2} \| \mathbf{p}_j(x_i) - \mathbf{p}_k(x_i) \|_1.
\]

We analyze the properties of $Z_i$:
\begin{enumerate}[topsep=0pt]
    \item \textbf{Boundedness:} By the definition of Total Variation distance on probability space, $Z_i \in [0, 1]$ for any valid probability vectors.
    \item \textbf{i.i.d.:} The annotator policies $\mathbf{p}_j(\cdot)$ and $\mathbf{p}_k(\cdot)$ are fixed, deterministic functions given $x_i$. Since the inputs $\{x_i\}_{i=1}^n$ are drawn i.i.d. from $P_X$ (Under Assumptions \ref{ass:iid_features}--\ref{ass:sample_independence}), the transformations $Z_i = g(x_i)$ are also i.i.d.
    \item \textbf{Expectation:} By linearity of expectation and the definition of the population parameter:
    \[ \mathbb{E}[Z_i] = \mathbb{E}_{x \sim P_X}[d_{TV}(\mathbf{p}_j(x), \mathbf{p}_k(x))] = d_{TV}^{\star,jk}. \]
\end{enumerate}

\textbf{Step 2: Apply Hoeffding's inequality.}
For bounded i.i.d. random variables $Z_i \in [0,1]$ with mean $\mu = d_{TV}^{\star,jk}$, Hoeffding's inequality states:
\begin{equation}
\mathbb{P}\left( \left| \frac{1}{n}\sum_{i=1}^n Z_i - \mu \right| > \epsilon \right) \leq 2\exp(-2n\epsilon^2).
\end{equation}

Substituting our quantities:

\begin{equation}
\mathbb{P}\left( \left| \hat{d}_{TV}^{jk} - d_{TV}^{\star,jk} \right| > \epsilon \right) \leq 2\exp(-2n\epsilon^2).
\end{equation}

\textbf{Step 3: Apply union bound over all pairs.}

There are $M = \binom{N_Y}{2} = \frac{N_Y(N_Y-1)}{2}$ distinct pairs of annotators. We apply the union bound to control the probability that \textit{any} pair exceeds the error threshold $\epsilon$.

Let $E_{jk}$ be the event $\{ | \hat{d}_{TV}^{jk} - d_{TV}^{\star,jk} | > \epsilon \}$.
\begin{align}
\mathbb{P}\left( \bigcup_{(j,k)} E_{jk} \right) &\leq \sum_{(j,k)} \mathbb{P}(E_{jk}) \nonumber \\
&\leq M \cdot 2\exp(-2n\epsilon^2) \nonumber \\
&= \frac{N_Y(N_Y-1)}{2} \cdot 2\exp(-2n\epsilon^2) \nonumber \\
&= N_Y(N_Y-1)\exp(-2n\epsilon^2).
\end{align}

We set this total probability to $\delta$ and solve for $\epsilon$:
\begin{align}
N_Y(N_Y-1)\exp(-2n\epsilon^2) &= \delta \\
\exp(-2n\epsilon^2) &= \frac{\delta}{N_Y(N_Y-1)} \\
-2n\epsilon^2 &= \ln\left(\frac{\delta}{N_Y(N_Y-1)}\right)N_Y\\
\epsilon &= \sqrt{\frac{\ln(N_Y(N_Y-1)/\delta)}{2n}}.
\end{align}

\textbf{Step 4: Bound on the maxima.}

With probability at least $1 - \delta$, the bound $\epsilon$ holds for all pairs $(j,k)$ simultaneously.
Using the property that for any two sequences, $|\max_i a_i - \max_i b_i| \leq \max_i |a_i - b_i|$, we obtain:
\begin{align}
|\hat{\eta}_{Y|X} - \eta^{\star}_{Y|X}| &= \left| \max_{j,k} \hat{d}_{TV}^{jk} - \max_{j,k} d_{TV}^{\star,jk} \right| \nonumber \\
&\leq \max_{j,k} \left| \hat{d}_{TV}^{jk} - d_{TV}^{\star,jk} \right| \nonumber \\
&\leq \epsilon \nonumber \\
&= \sqrt{\frac{\ln(N_Y(N_Y-1)/\delta)}{2n}}.
\end{align}
\end{proof}

\subsection{Corollary \ref{corr:robust_generalization}: Robust Generalization} \label{proof:robust_generalization}

Credal Learning Theory provides a high-probability bound on the expected risk of the empirical risk minimizer $\hat{h}$ when evaluated under a possibly misspecified
distribution $Q \in \mathcal{P}$, distinct from the distribution $P \in \mathcal{P}$ from which the training data were drawn. In particular, CLT controls the excess risk incurred due to distributional ambiguity within the credal set.

Specifically, with probability at least $1 - \delta$, the following robustness bound holds:
\[
L_Q(\hat{h}) - L_P(h^\ast)
\;\le\;
\epsilon^\ast(\delta, n, \mathcal{H})
\;+\;
\mathrm{diam}_{\mathrm{TV}}(\mathcal{P}),
\]
where $\epsilon^\ast(\delta, n, \mathcal{H})$ denotes the statistical estimation error,
depending on the confidence level $\delta$, the sample size $n$, and the hypothesis class
$\mathcal{H}$, and $\mathrm{diam}_{\mathrm{TV}}(\mathcal{P})$ is the total variation
diameter of the credal set.

Within the structured credal framework, in the pure label uncertainty regime
($N_X = 1$), the diameter admits the exact characterization
\[
\mathrm{diam}_{\mathrm{TV}}(\mathcal{P}) = \eta_{Y \mid X}^\ast.
\]
Substituting this expression into the bound above yields the desired result.

\subsubsection{Theorem \ref{thm:minimax}: Minimax Optimality} \label{proof:minimax_optimality}

\begin{proof}
Fix an arbitrary sample size $n\in\mathbb{N}$ and an arbitrary learning algorithm
$\mathcal{A}$ mapping datasets $\mathcal{D}_n=((X_1,Y_1),\dots,(X_n,Y_n))$ to (possibly
randomized) predictors.

\vspace{0.5em}
\noindent\textbf{Step 1: Construct a $N_X=1$ credal set with diameter $\eta^*_{Y|X}=\eta$.}

Let $(\mathcal{X},\mathcal{A}_X,P_X)$ be any probability space and fix any $\eta\in(0,1)$.
Choose a measurable set $S\in\mathcal{A}_X$ such that $P_X(S)=\eta$.
Define two \emph{deterministic} labeling mechanisms (hard labels)
\[
f_1(x)\equiv 0,
\qquad
f_2(x):=\mathbf{1}\{x\in S\}.
\]
Equivalently, define two conditional label distributions
\[
P^{(1)}_{Y|X}(\cdot\mid x)=\delta_{0},
\qquad
P^{(2)}_{Y|X}(\cdot\mid x)=\delta_{f_2(x)}.
\]
Let $P_{11}$ and $P_{12}$ be the corresponding joint distributions (Definition \ref{def:joint-from-components}):
\[
P_{1j}(A\times B) = \int_A P^{(j)}_{Y|X}(B\mid x)\, dP_X(x),
\qquad j\in\{1,2\}.
\]
Finally, define the (structured) credal set
\[
\mathcal{P} := \mathrm{Conv}\big(\{P_{11},P_{12}\}\big).
\]
We compute its fixed-feature diameter $\eta^*_{Y|X}$.
For $x\notin S$, $P^{(1)}_{Y|X}(\cdot\mid x)=P^{(2)}_{Y|X}(\cdot\mid x)=\delta_0$ so the TV
distance is $0$; for $x\in S$, we have $\delta_0$ vs.\ $\delta_1$ so the TV distance is $1$.
Hence
\[
\eta^*_{Y|X}
=
\mathbb{E}_{X\sim P_X}\!\left[
d_{\mathrm{TV}}\!\big(P^{(1)}_{Y|X}(\cdot\mid X),P^{(2)}_{Y|X}(\cdot\mid X)\big)
\right]
=
\mathbb{E}_{X\sim P_X}\big[\mathbf{1}\{X\in S\}\big]
=
P_X(S)
=
\eta.
\]
Thus $\mathrm{diam}_{\mathrm{TV}}(\mathcal{P})=\eta^*_{Y|X}=\eta$ in this construction
(cf.\ Proposition \ref{prop:geometry}).

\vspace{0.5em}
\noindent\textbf{Step 2: Represent an arbitrary output of the learner and compute risks.}

Given a dataset $\mathcal{D}_n$, let $\widehat{h}_{\mathcal{D}_n}:=\mathcal{A}(\mathcal{D}_n)$.
Allow $\widehat{h}_{\mathcal{D}_n}$ to be randomized; define the induced prediction
probability
\[
\pi_{\mathcal{D}_n}(x)\;:=\;\mathbb{P}\big(\widehat{h}_{\mathcal{D}_n}(x)=1\big)\in[0,1],
\qquad
\mathbb{P}\big(\widehat{h}_{\mathcal{D}_n}(x)=0\big)=1-\pi_{\mathcal{D}_n}(x).
\]
Under $P_{11}$ the true label is $Y=f_1(X)=0$ almost surely, so the (conditional) pointwise
misclassification probability at $x$ equals $\pi_{\mathcal{D}_n}(x)$. Therefore
\begin{equation}
\label{eq:risk_p11}
L_{P_{11}}(\widehat{h}_{\mathcal{D}_n})
=
\mathbb{E}_{X\sim P_X}\big[\pi_{\mathcal{D}_n}(X)\big].
\end{equation}
Under $P_{12}$ the true label is $Y=f_2(X)=\mathbf{1}\{X\in S\}$ almost surely, hence
\[
\mathbb{P}\big(\widehat{h}_{\mathcal{D}_n}(X)\neq Y\mid X=x\big)
=
\begin{cases}
\pi_{\mathcal{D}_n}(x), & x\notin S \ \ (Y=0),\\
1-\pi_{\mathcal{D}_n}(x), & x\in S \ \ (Y=1).
\end{cases}
\]
Therefore
\begin{equation}
\label{eq:risk_p12}
L_{P_{12}}(\widehat{h}_{\mathcal{D}_n})
=
\mathbb{E}_{X\sim P_X}\Big[
\pi_{\mathcal{D}_n}(X)\mathbf{1}\{X\notin S\} + (1-\pi_{\mathcal{D}_n}(X))\mathbf{1}\{X\in S\}
\Big].
\end{equation}

\vspace{0.5em}
\noindent\textbf{Step 3: A two-point testing lower bound}

Add \eqref{eq:risk_p11} and \eqref{eq:risk_p12}. For each fixed dataset $\mathcal{D}_n$,
\begin{align*}
L_{P_{11}}(\widehat{h}_{\mathcal{D}_n}) + L_{P_{12}}(\widehat{h}_{\mathcal{D}_n})
&=
\mathbb{E}_{X\sim P_X}\!\Big[
\pi_{\mathcal{D}_n}(X)
+
\pi_{\mathcal{D}_n}(X)\mathbf{1}\{X\notin S\}
+
(1-\pi_{\mathcal{D}_n}(X))\mathbf{1}\{X\in S\}
\Big]\\
&=
\mathbb{E}_{X\sim P_X}\!\Big[
2\pi_{\mathcal{D}_n}(X)\mathbf{1}\{X\notin S\}
+
\big(\pi_{\mathcal{D}_n}(X) + 1-\pi_{\mathcal{D}_n}(X)\big)\mathbf{1}\{X\in S\}
\Big]\\
&=
\mathbb{E}_{X\sim P_X}\!\Big[
2\pi_{\mathcal{D}_n}(X)\mathbf{1}\{X\notin S\}
+
\mathbf{1}\{X\in S\}
\Big]\\
&\ge
\mathbb{E}_{X\sim P_X}\big[\mathbf{1}\{X\in S\}\big]
=
P_X(S)
=
\eta.
\end{align*}
Thus, for \emph{every} dataset $\mathcal{D}_n$,
\begin{equation}
\label{eq:sum_risks_ge_eta}
L_{P_{11}}(\widehat{h}_{\mathcal{D}_n}) + L_{P_{12}}(\widehat{h}_{\mathcal{D}_n})
\ \ge\ \eta.
\end{equation}

\vspace{0.5em}
\noindent\textbf{Step 4: Take expectations over training data and lower bound the worst-case test risk.}

Now choose the training distribution $P:=P_{11}\in\mathcal{P}$. Taking expectations of
\eqref{eq:sum_risks_ge_eta} over $\mathcal{D}_n\sim P_{11}^{\otimes n}$ yields
\[
\mathbb{E}_{\mathcal{D}_n\sim P_{11}^{\otimes n}}
\big[L_{P_{11}}(\widehat{h}_{\mathcal{D}_n})\big]
+
\mathbb{E}_{\mathcal{D}_n\sim P_{11}^{\otimes n}}
\big[L_{P_{12}}(\widehat{h}_{\mathcal{D}_n})\big]
\ \ge\ \eta.
\]
Hence, the larger of the two expectations is at least half:
\begin{equation}
\label{eq:max_exp_ge_half_eta}
\max\left\{
\mathbb{E}_{\mathcal{D}_n\sim P_{11}^{\otimes n}}[L_{P_{11}}(\widehat{h}_{\mathcal{D}_n})],
\ \mathbb{E}_{\mathcal{D}_n\sim P_{11}^{\otimes n}}[L_{P_{12}}(\widehat{h}_{\mathcal{D}_n})]
\right\}
\ \ge\ \frac{\eta}{2}.
\end{equation}
Since $\{P_{11},P_{12}\}\subseteq\mathcal{P}$, we have
\[
\sup_{Q\in\mathcal{P}}
\mathbb{E}_{\mathcal{D}_n\sim P_{11}^{\otimes n}}
\big[L_Q(\widehat{h}_{\mathcal{D}_n})\big]
\ \ge\
\max\left\{
\mathbb{E}_{\mathcal{D}_n\sim P_{11}^{\otimes n}}[L_{P_{11}}(\widehat{h}_{\mathcal{D}_n})],
\ \mathbb{E}_{\mathcal{D}_n\sim P_{11}^{\otimes n}}[L_{P_{12}}(\widehat{h}_{\mathcal{D}_n})]
\right\}
\ \ge\ \frac{\eta}{2},
\]
where the last inequality is \eqref{eq:max_exp_ge_half_eta}.

\vspace{0.5em}
\noindent\textbf{Step 5: Insert the baseline $L_P^*$ and conclude the minimax lower bound.}
Because $P_{11}$ is realizable by the deterministic classifier $h_1(x)\equiv 0$ (which we may
assume lies in $\mathcal{H}$), we have $L_{P_{11}}^*=0$. Therefore
\[
\sup_{Q\in\mathcal{P}}
\mathbb{E}_{\mathcal{D}_n\sim P_{11}^{\otimes n}}
\Big[L_Q(\widehat{h}_{\mathcal{D}_n}) - L_{P_{11}}^*\Big]
=
\sup_{Q\in\mathcal{P}}
\mathbb{E}_{\mathcal{D}_n\sim P_{11}^{\otimes n}}
\big[L_Q(\widehat{h}_{\mathcal{D}_n})\big]
\ \ge\ \frac{\eta}{2}.
\]
Since $P_{11}\in\mathcal{P}$, the supremum over $P\in\mathcal{P}$ can only increase the value:
\[
\sup_{P,Q\in\mathcal{P}}
\mathbb{E}_{\mathcal{D}_n\sim P^{\otimes n}}
\Big[L_Q(\mathcal{A}(\mathcal{D}_n)) - L_P^*\Big]
\ \ge\ \frac{\eta}{2}.
\]
Finally, the above bound holds for \emph{every} learning algorithm $\mathcal{A}$, hence
taking $\inf_{\mathcal{A}}$ preserves it:
\[
\inf_{\mathcal{A}}
\ \sup_{P,Q\in\mathcal{P}}
\ \mathbb{E}_{\mathcal{D}_n\sim P^{\otimes n}}
\Big[L_Q(\mathcal{A}(\mathcal{D}_n)) - L_P^*\Big]
\ \ge\ \frac{\eta}{2}.
\]
Recalling that $\eta=\eta^*_{Y|X}$ in this construction, we obtain exactly
\[
\inf_{\mathcal{A}}
\ \sup_{P,Q\in\mathcal{P}}
\ \mathbb{E}_{\mathcal{D}_n\sim P^{\otimes n}}
\Big[L_Q(\mathcal{A}(\mathcal{D}_n)) - L_P^*\Big]
\ \ge\ \frac{1}{2}\,\eta^*_{Y|X},
\]
which proves the theorem.
\end{proof}

\subsection{Noisy Annotator Model (Aleatoric Uncertainty)}\label{sec:app:proof:noisylabel}
We specialize the general structured credal learning framework to the classical noisy-annotator setting, where label uncertainty arises from irreducible annotation noise rather than model misspecification. This instantiation yields closed-form expressions for the induced conditional ambiguity radius, enabling direct interpretation of robustness guarantees in terms of annotator error rates.

\begin{itemize}
\item \textbf{Model (Binary Symmetric Noisy Annotators):}
Let $Y=\{0,1\}$ and assume a latent ground-truth label $y^*\in\{0,1\}$.
Annotator $j\in\{1,\dots,K\}$ outputs $y^{(j)}$ through a binary symmetric channel
with error rate $\varepsilon_j\in[0,0.5]$:
\[
\mathbb P\!\left(y^{(j)}=y^*\mid y^*\right)=1-\varepsilon_j,\qquad
\mathbb P\!\left(y^{(j)}\neq y^*\mid y^*\right)=\varepsilon_j.
\]
Assume conditional independence of annotators given $y^*$.

\item \textbf{Annotator--Annotator Disagreement (Closed Form):}
For any pair $(j,j')$, disagreement occurs iff exactly one annotator flips $y^*$.
Thus,
\[
\mathbb P\!\left(y^{(j)}\neq y^{(j')}\right)
=\mathbb P\!\left(y^{(j)}=y^*,\,y^{(j')}\neq y^*\right)
+\mathbb P\!\left(y^{(j)}\neq y^*,\,y^{(j')}=y^*\right)
=(1-\varepsilon_j)\varepsilon_{j'}+\varepsilon_j(1-\varepsilon_{j'})
=\varepsilon_j+\varepsilon_{j'}-2\varepsilon_j\varepsilon_{j'}.
\]

\item \textbf{Credal Diameter / Robustness Radius:}
In the fixed-covariate regime ($N_X=1$), the structured credal diameter equals the
maximum pairwise disagreement across annotators:
\[
\eta_{Y|X}^*
=\max_{j,j'} \mathbb P\!\left(y^{(j)}\neq y^{(j')}\right)
=\max_{j,j'}\big(\varepsilon_j+\varepsilon_{j'}-2\varepsilon_j\varepsilon_{j'}\big).
\]

\item \textbf{Non-vacuous Upper Bound under Weak Supervision:}
If all annotators satisfy $\varepsilon_j\le \varepsilon_{\max}$, then
\[
\eta_{Y|X}^*
\le \max_{j,j'}\big(\varepsilon_{\max}+\varepsilon_{\max}-2\varepsilon_{\max}^2\big)
=2\varepsilon_{\max}-2\varepsilon_{\max}^2.
\]
In particular, for $\varepsilon_{\max}\le 1/2$, this bound is strictly below $1$,
yielding a non-vacuous ambiguity radius even with noisy supervision.
\end{itemize}

\begin{proof}
The disagreement identity follows conditional independence given $y^*$ and a
two-term partition on which the annotator flips. The upper bound is immediate by
monotonicity in each $\varepsilon_j$ over $[0,0.5]$.
\end{proof}

\section{Proof for Proposition \ref{prp:dro-reduction}}
\begin{proof}
For any fixed hypothesis $h \in \mathcal{H}$, the expected risk $L_P(h) = \mathbb{E}_P[\ell((x,y), h)]$ is a strictly linear functional with respect to the probability measure $P$. With respect to the bounded loss-function ($\ell \in [0,1]$), the mapping $P \mapsto L_P(h)$ is continuous. 

Since $L_P(h)$ is continuous and linear (and thus convex) on the compact structured credal set $\mathcal{P}$, we can invoke \cref{lem:extremum} with $k=1$. Thus the supremum risk is attained at an extreme point:
\[
  \sup_{P \in \mathcal{P}} L_P(h) = \max_{P \in \mathrm{ex}(\mathcal{P})} L_P(h).
\]
By construction, $\mathcal{P}$ is the convex hull of the finite collection $\{P_{ij}\}$, meaning its extreme points are necessarily a subset of these generators ($\mathrm{ex}(\mathcal{P}) \subseteq \{P_{ij}\}$). Consequently, evaluating the maximum over the extreme points is equivalent to maximizing over the discrete set of indices $(i,j) \in [N_X] \times [N_Y]$. 

Substituting this finite inner maximum back into the original outer minimization over $\mathcal{H}$ directly yields the discrete minimax formulation in \eqref{eq:dro-minmax}.
\end{proof}

\section{Additional Empirical Analysis, Ablation Studies, Discussion, Limitations, and Future Works}~\label{sec:app:results}

\subsection{Empirical Analysis and Ablation Studies}

\paragraph{~\cref{prp:general-bounds} (General Distance Bounds.)}
We simulated a structured credal set by defining $20$ Gaussian covariate distributions $P_X$, considering $15$ distributions with standard deviation $\sigma = 1$ and means placed on a uniform grid over $[-3, 3]$, together with $5$ additional randomized Gaussians. We further defined $10$ labeling mechanisms. Soft labeling mechanisms were specified via conditional distributions $P_{Y|X}$ parameterized by sigmoid link functions with slope $1.0$, while deterministic hard labeling mechanisms were implemented using threshold classifiers of the form $\mathbf{1}(x > \theta)$. Thresholds (and sigmoid biases) were selected from a grid over $[-4, 4]$. This construction yielded a total of $20 \times 20 \times 10 \times 10 = 40{,}000$ distribution pairs for evaluation. Table~\ref{tab:gen-dist-bounds} presents a detailed breakdown of the tightness of the bounds across various settings. Across all evaluated cases, the empirical total variation distances satisfied the theoretical bounds with essentially zero slack (up to numerical precision) in the pure environment-shift and pure labeler-shift regimes and only moderate slack in the simultaneous joint-shift setting. 



\color{black}

\paragraph{Disagreement-Diameter Characterization.}

We validated the observable diameter characterizations across both the deterministic hard labeling regime (\cref{eq:diameter-is-disagreement}), and the soft label regime (\ref{eq:diameter-is-l1}).  We sampled $500$ random Gaussian distributions with varying means and variances and applied two distinct labeling mechanisms: (1) a threshold classifier (deterministic), (2) Probit link functions with $\kappa \in \{1,3\}$. Across both settings, the empirical total (mean disagreement rate or mean $L_1$ distance) proved to be unbiased and tightly concentrated around the theoretical credal diameter. In the soft label regime, the mean estimation gap was negligible ($3.65 \times 10^{-5}$), and deviations from the ground truth diameter remained within $0.05$ for $99.9\%$ of the trials. Together, these results demonstrate that the diameter of our structured credal set is fully observable in the fixed covariate regime and collapses the abstract geometric and measure-theoretic concept to simply measuring disagreements. The detailed metrics and quantiles are shown in Table~\ref{tab:gen-dist-bounds}. Finally, we also evaluated the noisy labeling mechanism framework (\cref{eq:diam-is-noise} where the labeling mechanisms flipped the ground truth. According to Table \ref{tab:noise-ablation}, when allowing noise rates for the labeling mechanisms up to $\epsilon=0.5$, the estimator remained robust, recovering the theoretical diameter with negligible bias $\approx 0.02$, confirming that the diameter of the credal set can be accurately observed even when the labeling mechanisms are imperfect.

\paragraph{\cref{thm:concentration} (Sample Complexity).}
The empirical results also provide a strong validation for the concentration bounds derived in \cref{thm:concentration} across both deterministic hard labels and probabilistic soft labels. Over three covariate distributions ($\mathcal{N}(0,1), \mathcal{N}(2,1), \mathcal{N}(0,2)$ in the hard-label regime, the empirical diameter converges at the theoretically consistent rate $O(\frac{1}{\sqrt{n}})$ with negligible violation probability and the tightness ratio between the theoretical bound and the $95^{th}$ percentile error remains stable around $\approx 2$ indicating a practically informative nound at moderate sample sizes. In the soft label regime, the convergence rate for a probabilistic labeling mechanism using both the probit link function and the sigmoid function for the covariate distribution $\mathcal{N}(0,1)$ was theoretically consistent, and violations were absent. Moreover, in the soft label regime, the mean empirical diameter already tracks close to the population value at small $n$, reflecting the lower variance signal induced by observing the probabilities themselves.

\paragraph{\cref{thm:concentration} (Labeling Mechanism Complexity).} We evaluate the stability of the diameter estimation as the number of labeling mechanisms $N_Y$ scales from $2$ to $1000$. Since the true diameter of the structured credal set in the fixed covariate regime $N_X=1$ is given by $\eta_{Y|X}^\ast = \mathbb{E}_{X \sim P_X} \left[ d_{\mathrm{TV}}\big(P_{Y|X}^{(j)}, P_{Y|X}^{(j')}\big) \right]$, the \emph{gating effect} plays an important role in determining the diameter of the credal set. To rigorously isolate the impact of mechanism complexity, we constructed a framework where the feature space $\mathcal{X}$ is partitioned into disjoint blocks defined by the quantiles of the marginal distribution $P_X$. Each labeling mechanism is restricted to a specific block, ensuring that any pair of mechanisms from different blocks has disjoint supports. Therefore, any pair of labeling mechanisms across blocks could be the maximal disagreeing pair. By fixing the probability mass of these blocks, we structurally pin the $\eta_{Y|X}^\ast$ at a constant value, preventing it from growing with $N_Y$. Since the blocks are defined via quantiles, this construction renders error bounds and theoretical properties invariant to the specific $P_X$, needing us to only evaluate the effects on a single feature distribution. Additionally, to also consider the possibility that adding a new labeling mechanism might also include an "outlier," we evaluated the mechanism complexity where $\eta_{Y|X}^\ast$ expands with the number of labeling mechanisms as well. Under both scenarios, the bounds remain consistently non-vacuous, with estimation error scaling logarithmically in $N_Y$. Crucially, the tightness ratio stabilizes between $1.4$ and $3.3$ even up to $N_Y=1000$, confirming that the framework scales robustly to a high number of labeling mechanisms without degradation.

\subsection{Discussion}

Tables \ref{tab:gen-dist-bounds},\ref{tab:det_diameter-ablation}, \ref{tab:softlabel-kappa-ablation}, \ref{tab:noise-ablation} collectively validate both the tightness of the proposed theoretical bounds and the practical observability of the structured credal diameter. Table \ref{tab:gen-dist-bounds} shows that the general distance bounds in Theorem \ref{prp:general-bounds} achieve 100\% empirical coverage across all evaluated distribution pairs, with zero gaps in pure covariate or pure label shift regimes and moderate slack only in joint-shift settings, confirming correctness and robustness of the decomposition 
While Table \ref{tab:det_diameter-ablation} demonstrates that, in the deterministic hard-label regime, the empirical estimator of the diameter is essentially unbiased, with mean absolute errors on the order of $10^-2$ and empirical and analytic diameters matching up to three decimal places, indicating fast concentration. Table \ref{tab:softlabel-kappa-ablation} extends this result to soft-label settings, showing similarly small estimation gaps and negligible signed bias across different probit sharpness levels, thereby confirming the accuracy of the L1-based characterization of the diameter. Finally, Table \ref{tab:noise-ablation} verifies the noisy-annotator model, where empirical diameters closely track the closed-form theoretical values across increasing noise budgets, with low RMSE and controlled slack, demonstrating robustness of the framework under aleatoric label noise.

\begin{table}[t]

\caption{Ablation study characterizing the tightness of the general distance bounds in \cref{prp:general-bounds}.
We evaluate $20$ Gaussian covariates and $10$ labeling mechanisms across all pair classes.
The lower-gap and upper-gap are defined as
$\epsilon_{\mathrm{LB}} := d_{\mathrm{TV}}(P_{ij}, P_{i'j'}) - \mathrm{LB}(P_{ij}, P_{i'j'})$
and
$\epsilon_{\mathrm{UB}} := \mathrm{UB}(P_{ij}, P_{i'j'}) - d_{\mathrm{TV}}(P_{ij}, P_{i'j'})$,
respectively.
Entries report the mean gap with $\,[\min, \max]$ over all pairs within each class. All evaluated pairs satisfy the bounds, resulting in a $100\%$ empirical coverage. Slack values on the order of $10^{-8}$ or smaller (numerical precision) and are reported as 0.}
\label{tab:gen-dist-bounds}
\centering
\small
\begin{tabular}{l r c c}
\toprule
Pair class & \#Pairs & $\epsilon_{\mathrm{LB}}$ (mean [min,max]) & $\epsilon_{\mathrm{UB}}$ (mean [min,max]) \\
\midrule
\multicolumn{4}{c}{\textbf{Soft labels (Sigmoid)}} \\
\midrule
Joint Shift & 34,200 & 0.242 [3.95e-04, 0.96] & 0.165 [2.81e-04, 0.803] \\
Fixed Covariate (Conditional Shift only) & 1,800 & 0 [0, 0] & 0 [0, 0] \\
Fixed Labeling Mechanism (Covariate Shift only) & 3,800 & 0 [0, 0] & 0 [0, 0] \\
Identical pair & 200 & 0 [0, 0] & 0 [0, 0] \\
\midrule
\multicolumn{4}{c}{\textbf{Hard labels (Thresholds)}} \\
\midrule
Joint Shift & 34,200 & 0.228 [0, 1] & 0.096 [0, 0.936] \\
Fixed Covariate (Conditional Shift only) & 1,800 & 0 [0, 0] & 0 [0, 0] \\
Fixed Labeling Mechanism (Covariate Shift only) & 3,800 & 0 [0, 0] & 0 [0, 0] \\
Identical pair & 200 & 0 [0, 0] & 0 [0, 0] \\
\bottomrule
\end{tabular}
\end{table}

\begin{table}[!htbp]
\caption{
Finite-sample validation of the exact diameter characterization
in \cref{eq:diameter-is-disagreement} for deterministic labeling mechanisms.
Across $500$ Gaussian covariate distributions and $5$ fixed threshold classifiers, the empirical diameter is estimated from $n=1000$ samples per environment,
repeated over $100$ runs.
Reported metrics summarize the gap between empirical and analytic diameters;
confidence intervals are bootstrap-based.
Errors are small, symmetric, and centered at zero, confirming unbiasedness
and rapid concentration of the empirical estimator.
}\label{tab:det_diameter-ablation}
\centering
\small
\begin{tabular}{l c c}
\toprule
Metric & Estimate & 95\% CI \\
\midrule
Mean absolute gap & 0.0117 & [0.0116, 0.0117] \\
Median absolute gap & 0.0097 & [0.0096, 0.0098] \\
RMSE (gap) & 0.0148 & [0.0147, 0.0149] \\
Mean gap (signed) & 8.98e-06 & [-0.0001, 0.0002] \\
90\% abs. gap quantile & 0.0244 & [0.0242, 0.0246] \\
95\% abs. gap quantile & 0.0293 & [0.029, 0.0295] \\
99\% abs. gap quantile & 0.0384 & [0.038, 0.0389] \\
Mean analytic diameter & 0.465 & \textemdash \\
Mean empirical diameter & 0.465 & [0.4648, 0.4651] \\
\bottomrule
\end{tabular}
\end{table}
\begin{table}[!htbp]
\caption{
Finite-sample validation of the soft-label diameter characterization in \cref{eq:diameter-is-l1} under probit labeling mechanisms.
We compare three probit sharpness settings $\kappa \in \{1,2,3\}$ (with the same Gaussian covariate setup and fixed labelers across runs).
Reported metrics summarize the gap between the empirical diameter (estimated from finite samples) and the analytic diameter, including mean/median absolute gap, RMSE, and tail quantiles of the absolute gap; brackets denote $95\%$ bootstrap confidence intervals.
Mean signed gaps are near zero, indicating negligible bias.
}\label{tab:softlabel-kappa-ablation}
\centering
\small
\begin{tabular}{l c c c}
\toprule
Metric & $\kappa=1$ (est. [CI]) & $\kappa=2$ (est. [CI]) & $\kappa=3$ (est. [CI]) \\
\midrule
Mean absolute gap & 0.0084 [0.0084, 0.0085] & 0.0101 [0.01, 0.0101] & 0.0106 [0.0105, 0.0107] \\
Median absolute gap & 0.007 [0.0069, 0.007] & 0.0084 [0.0083, 0.0084] & 0.0088 [0.0087, 0.0089] \\
RMSE (gap) & 0.0107 [0.0107, 0.0108] & 0.0128 [0.0127, 0.0129] & 0.0135 [0.0134, 0.0135] \\
Mean gap (signed) & 3.60e-05 [-7.00e-05, 0.0001] & 3.63e-05 [-9.37e-05, 0.0002] & 3.65e-05 [-9.91e-05, 0.0002] \\
90\% abs. gap quantile & 0.0176 [0.0175, 0.0178] & 0.0211 [0.0209, 0.0212] & 0.0222 [0.022, 0.0223] \\
95\% abs. gap quantile & 0.0212 [0.0211, 0.0214] & 0.0253 [0.0251, 0.0255] & 0.0265 [0.0263, 0.0267] \\
99\% abs. gap quantile & 0.0281 [0.0278, 0.0284] & 0.0335 [0.0332, 0.0339] & 0.035 [0.0346, 0.0354] \\
Mean analytic diameter & 0.4455 (\textemdash) & 0.4599 (\textemdash) & 0.4627 (\textemdash) \\
Mean empirical diameter & 0.4456 [0.4455, 0.4457] & 0.4599 [0.4598, 0.4601] & 0.4627 [0.4626, 0.4629] \\
\bottomrule
\end{tabular}
\end{table}
\begin{table}[!htbp]
\caption{Noisy-annotator ablation for Eq.~(\ref{eq:diam-is-noise}) across noise budgets $\varepsilon_{\max}\in\{0.1,0.25,0.5\}$. Entries report estimate [95\% bootstrap CI] for the true diameter, empirical diameter, and error/slack metrics.}

\label{tab:noise-ablation}
\centering
\small
\begin{tabular}{l c c c}
\toprule
Metric & $\varepsilon_{\max}=0.1$ & $\varepsilon_{\max}=0.25$ & $\varepsilon_{\max}=0.5$ \\
\midrule
True diameter (closed form) & 0.1764 (\textemdash) & 0.3694 (\textemdash) & 0.4998 (\textemdash) \\
Mean empirical diameter & 0.1783 [0.1782, 0.1783] & 0.3716 [0.3715, 0.3717] & 0.5212 [0.5211, 0.5212] \\
Upper bound & 0.1772 (\textemdash) & 0.3706 (\textemdash) & 0.4998 (\textemdash) \\
Mean gap to true (signed) & 0.0018 [0.0017, 0.0019] & 0.0022 [0.0021, 0.0023] & 0.0214 [0.0213, 0.0215] \\
Mean abs. gap to true & 0.0087 [0.0086, 0.0088] & 0.0108 [0.0108, 0.0109] & 0.0214 [0.0213, 0.0215] \\
RMSE to true & 0.0109 [0.0109, 0.011] & 0.0136 [0.0136, 0.0137] & 0.0235 [0.0234, 0.0235] \\
95\% abs. gap quantile (to true) & 0.0214 [0.0212, 0.0216] & 0.0267 [0.0265, 0.0269] & 0.0381 [0.0379, 0.0384] \\
99\% abs. gap quantile (to true) & 0.0283 [0.0279, 0.0286] & 0.0353 [0.0348, 0.0357] & 0.0456 [0.0452, 0.046] \\
Upper-bound violation rate & 0.5162 [0.5123, 0.5202] & 0.5167 [0.5125, 0.5208] & 0.9935 [0.9927, 0.9942] \\
Mean slack to upper (signed) & 0.001 [0.0009, 0.0011] & 0.001 [0.0009, 0.0011] & 0.0213 [0.0212, 0.0214] \\
Mean abs. slack to upper & 0.0086 [0.0086, 0.0087] & 0.0107 [0.0107, 0.0108] & 0.0213 [0.0213, 0.0214] \\
\bottomrule
\end{tabular}
\end{table}

Tables \ref{tab:det_n01_samp_complexity}, \ref{tab:det_n02_samp_complexity}, \ref{tab:det_n21_samp_complexity}, and \ref{tab:iv21} empirically validate the predicted sample complexity behavior of the diameter estimator in the deterministic hard-label regime across different covariate distributions. As the sample size increases, the median and high-quantile absolute estimation errors consistently decrease at a rate close to $O(n^{-1/2})$, confirming the theoretical concentration guarantees. Across all settings, the empirical violation probability remains effectively zero, indicating that the Hoeffding-based bounds are conservative yet reliable. Moreover, the tightness ratios remain stable and bounded (approximately between 1.7 and 2.5), demonstrating that the proposed concentration bounds are not only asymptotically correct but also practically informative across varying population diameters and covariate shifts.

\begin{table}[!htbp]
\centering
\setlength{\tabcolsep}{4pt}
\renewcommand{\arraystretch}{1.1}
\caption{Sample complexity concentration bounds in the deterministic hard label regime under covariate distribution N(0,1). Population diameter $\eta_{Y|X}^\ast = 0.683$. Repetitions $R = 2000$. Here, $n$ denotes sample size. $\widehat{p}_{\mathrm{viol}}$ is the estimated probability of violations. Let $\epsilon := \left|\eta_{Y|X}^\ast - \hat{\eta}_{Y|X}\right|$ and $q_{\alpha}(\epsilon)$ denote the empirical $\alpha$-quantile of $\epsilon$.}
\label{tab:det_n01_samp_complexity}
\begin{adjustbox}{max width=\textwidth}
\begin{tabular}{rrrrrrrr}
\toprule
n & $q_{0.5}(\epsilon)$ & $q_{0.95}(\epsilon)$ & $\epsilon_{\mathrm{Hoeff}}^{\cup}$ & $\widehat{p}_{\mathrm{viol}}$ & $\mathrm{CI}^{\mathrm{Wilson}}_{0.95,\mathrm{low}}$ & $\mathrm{CI}^{\mathrm{Wilson}}_{0.95,\mathrm{high}}$ & Tightness Ratio \\
\midrule
10 & 0.117 & 0.283 & 0.547 & 0.000 & 0.000 & 0.002 & 1.936 \\
30 & 0.051 & 0.183 & 0.316 & 0.000 & 0.000 & 0.002 & 1.730 \\
100 & 0.033 & 0.093 & 0.173 & 0.000 & 0.000 & 0.002 & 1.867 \\
500 & 0.013 & 0.039 & 0.077 & 0.000 & 0.000 & 0.002 & 1.969 \\
1000 & 0.010 & 0.030 & 0.055 & 0.000 & 0.000 & 0.002 & 1.844 \\
2000 & 0.007 & 0.020 & 0.039 & 0.000 & 0.000 & 0.002 & 1.954 \\
5000 & 0.004 & 0.013 & 0.024 & 0.000 & 0.000 & 0.002 & 1.870 \\
10000 & 0.003 & 0.009 & 0.017 & 0.000 & 0.000 & 0.002 & 1.843 \\
20000 & 0.002 & 0.006 & 0.012 & 0.000 & 0.000 & 0.002 & 1.970 \\
100000 & 0.001 & 0.003 & 0.005 & 0.000 & 0.000 & 0.002 & 1.855 \\
\bottomrule
\end{tabular}

\end{adjustbox}
\end{table}

\begin{table}[!htbp]
\centering
\setlength{\tabcolsep}{4pt}
\renewcommand{\arraystretch}{1.1}
\caption{Sample complexity concentration bounds in the deterministic hard label regime under covariate distribution N(0,2). Population diameter $\eta_{Y|X}^\ast = 0.383$. Repetitions $R = 2000$. Here, $n$ denotes sample size. $\widehat{p}_{\mathrm{viol}}$ is the estimated probability of violations. Let $\epsilon := \left|\eta_{Y|X}^\ast - \hat{\eta}_{Y|X}\right|$ and $q_{\alpha}(\epsilon)$ denote the empirical $\alpha$-quantile of $\epsilon$.}
\label{tab:det_n02_samp_complexity}
\begin{adjustbox}{max width=\textwidth}
\begin{tabular}{rrrrrrrr}
\toprule
n & $q_{0.5}(\epsilon)$ & $q_{0.95}(\epsilon)$ & $\epsilon_{\mathrm{Hoeff}}^{\cup}$ & $\widehat{p}_{\mathrm{viol}}$ & $\mathrm{CI}^{\mathrm{Wilson}}_{0.95,\mathrm{low}}$ & $\mathrm{CI}^{\mathrm{Wilson}}_{0.95,\mathrm{high}}$ & Tightness Ratio \\
\midrule
10 & 0.117 & 0.283 & 0.547 & 0.000 & 0.000 & 0.002 & 1.935 \\
30 & 0.050 & 0.183 & 0.316 & 0.000 & 0.000 & 0.002 & 1.727 \\
100 & 0.033 & 0.097 & 0.173 & 0.001 & 0.000 & 0.003 & 1.783 \\
500 & 0.015 & 0.043 & 0.077 & 0.000 & 0.000 & 0.002 & 1.803 \\
1000 & 0.011 & 0.030 & 0.055 & 0.001 & 0.000 & 0.004 & 1.820 \\
2000 & 0.007 & 0.021 & 0.039 & 0.000 & 0.000 & 0.002 & 1.836 \\
5000 & 0.005 & 0.013 & 0.024 & 0.000 & 0.000 & 0.002 & 1.816 \\
10000 & 0.003 & 0.009 & 0.017 & 0.000 & 0.000 & 0.002 & 1.836 \\
20000 & 0.002 & 0.007 & 0.012 & 0.001 & 0.000 & 0.003 & 1.861 \\
100000 & 0.001 & 0.003 & 0.005 & 0.001 & 0.000 & 0.003 & 1.719 \\
\bottomrule
\end{tabular}

\end{adjustbox}
\end{table}

\begin{table}[!htbp]
\centering
\setlength{\tabcolsep}{4pt}
\renewcommand{\arraystretch}{1.1}
\caption{Sample complexity concentration bounds in the deterministic hard label regime under covariate distribution N(2,1). Population diameter $\eta_{Y|X}^\ast = 0.157$. Repetitions $R = 2000$. Here, $n$ denotes sample size. $\widehat{p}_{\mathrm{viol}}$ is the estimated probability of violations. Let $\epsilon := \left|\eta_{Y|X}^\ast - \hat{\eta}_{Y|X}\right|$ and $q_{\alpha}(\epsilon)$ denote the empirical $\alpha$-quantile of $\epsilon$.}
\label{tab:det_n21_samp_complexity}
\begin{adjustbox}{max width=\textwidth}
\begin{tabular}{rrrrrrrr}
\toprule
n & $q_{0.5}(\epsilon)$ & $q_{0.95}(\epsilon)$ & $\epsilon_{\mathrm{Hoeff}}^{\cup}$ & $\widehat{p}_{\mathrm{viol}}$ & $\mathrm{CI}^{\mathrm{Wilson}}_{0.95,\mathrm{low}}$ & $\mathrm{CI}^{\mathrm{Wilson}}_{0.95,\mathrm{high}}$ & Tightness Ratio \\
\midrule
10 & 0.057 & 0.243 & 0.547 & 0.000 & 0.000 & 0.002 & 2.255 \\
30 & 0.043 & 0.143 & 0.316 & 0.000 & 0.000 & 0.002 & 2.215 \\
100 & 0.027 & 0.077 & 0.173 & 0.000 & 0.000 & 0.002 & 2.239 \\
500 & 0.011 & 0.031 & 0.077 & 0.000 & 0.000 & 0.002 & 2.473 \\
1000 & 0.008 & 0.023 & 0.055 & 0.000 & 0.000 & 0.002 & 2.349 \\
2000 & 0.005 & 0.016 & 0.039 & 0.000 & 0.000 & 0.002 & 2.449 \\
5000 & 0.004 & 0.010 & 0.024 & 0.000 & 0.000 & 0.002 & 2.332 \\
10000 & 0.002 & 0.007 & 0.017 & 0.000 & 0.000 & 0.002 & 2.402 \\
20000 & 0.002 & 0.005 & 0.012 & 0.000 & 0.000 & 0.002 & 2.474 \\
100000 & 0.001 & 0.002 & 0.005 & 0.000 & 0.000 & 0.002 & 2.406 \\
\bottomrule
\end{tabular}

\end{adjustbox}
\end{table}

\begin{table}[!htbp]
\centering
\setlength{\tabcolsep}{4pt}
\renewcommand{\arraystretch}{1.1}
\caption{Labeling mechanism complexity concentration bounds in the deterministic hard label regime (Interval Method) under covariate distribution N(2,1). Fixed sample size $n = 1000$. Repetitions $R = 2000$. Here, $N_Y$ denotes the number of labeling mechanisms. $\widehat{p}_{\mathrm{viol}}$ is the estimated probability of violations. Let $\epsilon := \left|\eta_{Y|X}^\ast - \hat{\eta}_{Y|X}\right|$ and $q_{\alpha}(\epsilon)$ denote the empirical $\alpha$-quantile of $\epsilon$. The union Hoeffding bound is $\epsilon_{\mathrm{Hoeff}}^{\cup}$.}
\label{tab:iv21}
\begin{adjustbox}{max width=\textwidth}
\begin{tabular}{rrrrrrrr}
\toprule
$N_Y$ & $q_{0.5}(\epsilon)$ & $q_{0.95}(\epsilon)$ & $\epsilon_{\mathrm{Hoeff}}^{\cup}$ & $\widehat{p}_{\mathrm{viol}}$ & $\mathrm{CI}^{\mathrm{Wilson}}_{0.95,\mathrm{low}}$ & $\mathrm{CI}^{\mathrm{Wilson}}_{0.95,\mathrm{high}}$ & Tightness Ratio \\
\midrule
2 & 0.003 & 0.009 & 0.043 & 0.000 & 0.000 & 0.002 & 4.605 \\
5 & 0.007 & 0.022 & 0.055 & 0.000 & 0.000 & 0.002 & 2.496 \\
12 & 0.008 & 0.022 & 0.063 & 0.000 & 0.000 & 0.002 & 2.820 \\
20 & 0.008 & 0.022 & 0.067 & 0.000 & 0.000 & 0.002 & 3.004 \\
30 & 0.008 & 0.022 & 0.070 & 0.000 & 0.000 & 0.002 & 3.226 \\
50 & 0.008 & 0.022 & 0.073 & 0.000 & 0.000 & 0.002 & 3.336 \\
80 & 0.008 & 0.022 & 0.077 & 0.000 & 0.000 & 0.002 & 3.409 \\
100 & 0.007 & 0.023 & 0.078 & 0.000 & 0.000 & 0.002 & 3.328 \\
200 & 0.007 & 0.022 & 0.082 & 0.000 & 0.000 & 0.002 & 3.825 \\
500 & 0.008 & 0.022 & 0.088 & 0.000 & 0.000 & 0.002 & 3.986 \\
1000 & 0.008 & 0.023 & 0.092 & 0.000 & 0.000 & 0.002 & 3.991 \\
\bottomrule
\end{tabular}

\end{adjustbox}
\end{table}

\begin{table}[!htbp]
\centering
\setlength{\tabcolsep}{4pt}
\renewcommand{\arraystretch}{1.1}
\caption{Sample complexity concentration bounds in the soft-label regime using the Probit link function under the covariate distribution N(0,1). Population diameter $\eta_{Y|X}^\ast = 0.520$. Repetitions $R = 100$. Here, $n$ denotes sample size. $\widehat{p}_{\mathrm{viol}}$ is the estimated probability of violations. Let $\epsilon := \left|\eta_{Y|X}^\ast - \hat{\eta}_{Y|X}\right|$ and $q_{\alpha}(\epsilon)$ denote the empirical $\alpha$-quantile of $\epsilon$.}
\label{tab:prob_samp_comp}
\begin{adjustbox}{max width=\textwidth}
\begin{tabular}{rrrrrrrr}
\toprule
n & $q_{0.5}(\epsilon)$ & $q_{0.95}(\epsilon)$ & $\epsilon_{\mathrm{Hoeff}}^{\cup}$ & $\widehat{p}_{\mathrm{viol}}$ & $\mathrm{CI}^{\mathrm{Wilson}}_{0.95,\mathrm{low}}$ & $\mathrm{CI}^{\mathrm{Wilson}}_{0.95,\mathrm{high}}$ & Tightness Ratio \\
\midrule
10 & 0.038 & 0.104 & 0.547 & 0.000 & 0.000 & 0.037 & 5.269 \\
30 & 0.022 & 0.057 & 0.316 & 0.000 & 0.000 & 0.037 & 5.514 \\
100 & 0.011 & 0.032 & 0.173 & 0.000 & 0.000 & 0.037 & 5.420 \\
500 & 0.005 & 0.012 & 0.077 & 0.000 & 0.000 & 0.037 & 6.399 \\
1000 & 0.003 & 0.009 & 0.055 & 0.000 & 0.000 & 0.037 & 6.034 \\
2000 & 0.002 & 0.006 & 0.039 & 0.000 & 0.000 & 0.037 & 6.070 \\
10000 & 0.001 & 0.003 & 0.017 & 0.000 & 0.000 & 0.037 & 6.162 \\
20000 & 0.001 & 0.002 & 0.012 & 0.000 & 0.000 & 0.037 & 6.289 \\
100000 & 0.000 & 0.001 & 0.005 & 0.000 & 0.000 & 0.037 & 5.559 \\
\bottomrule
\end{tabular}

\end{adjustbox}
\end{table}

\begin{figure}[h]
     \centering
     \begin{subfigure}[h]{0.45\textwidth}
         \centering
          \includegraphics[width=\linewidth]{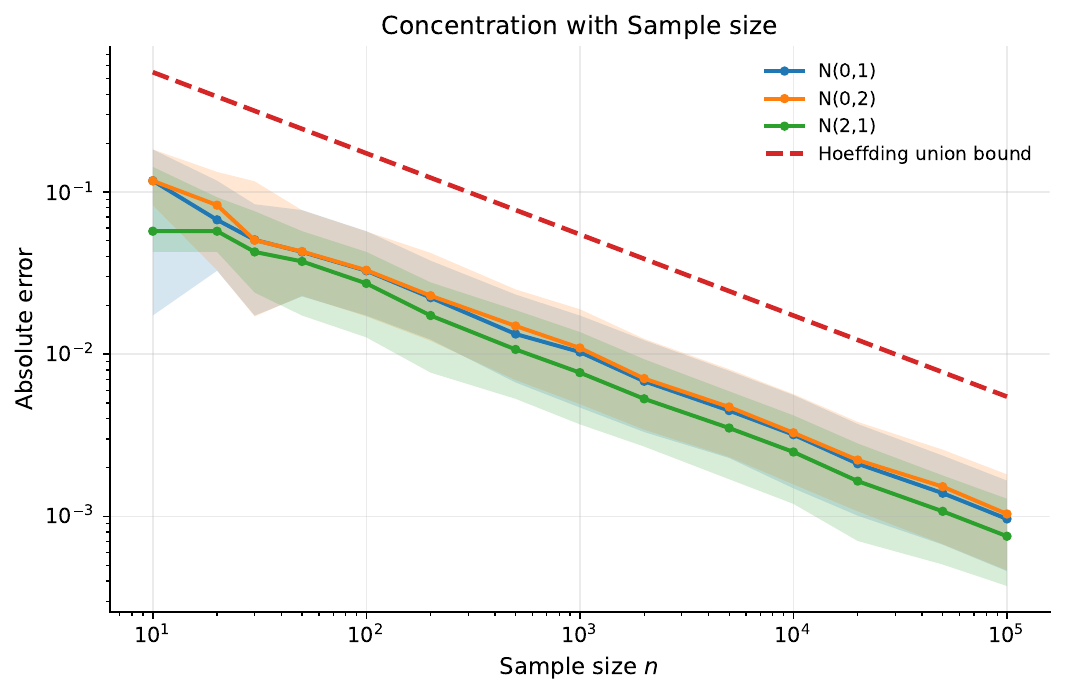}
         \caption{\footnotesize \textbf{Deterministic hard labels.} Absolute estimation error versus sample size (log–log scale) for Gaussian inputs with varying variances. The curves demonstrate the canonical $O(n^{-1/2})$ concentration rate, with empirical errors lying well below the Hoeffding union bound.}
         \label{fig:app:abalation-det-hard-lab-samp-complexity-combined}
     \end{subfigure}
     \hfill
     \begin{subfigure}[h]{0.45\textwidth}
         \centering
         \includegraphics[width=\linewidth]{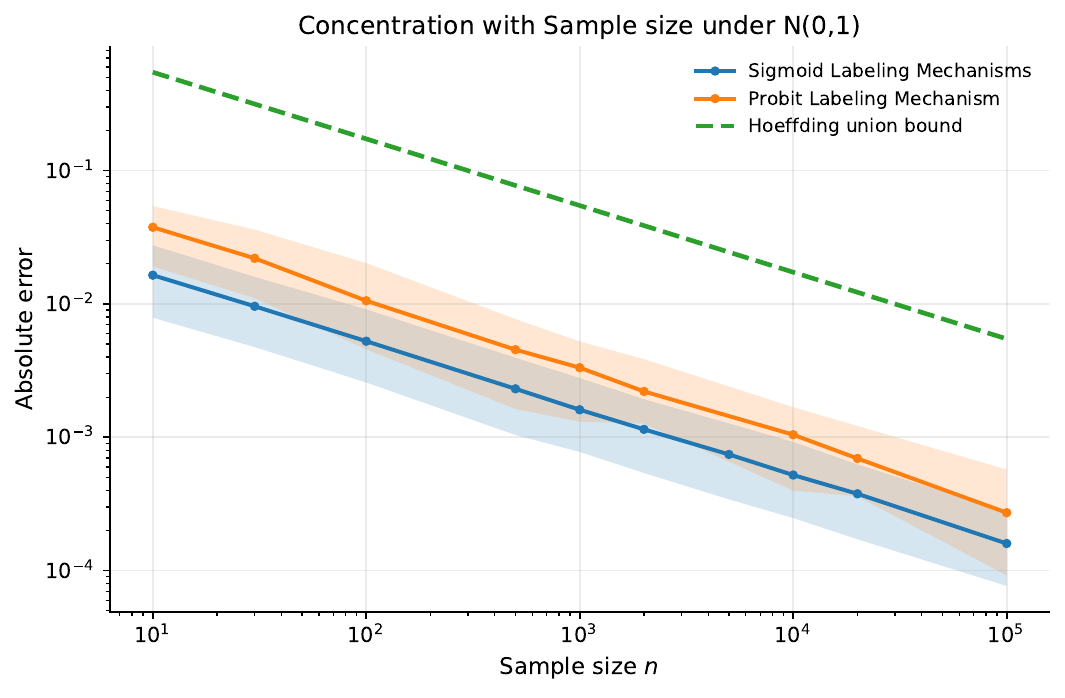}
         \caption{\footnotesize \textbf{Soft (probabilistic) label.} Absolute estimation error versus sample size (log–log scale) under sigmoid and probit labeling mechanisms for $N(0,1)$ inputs. Both mechanisms exhibit $O(n^{-1/2})$ convergence, with sigmoid labeling achieving systematically lower error.}
         \label{fig:app:abalation-soft-lab-samp-complexity-combined}
     \end{subfigure}
          \hfill
     \begin{subfigure}[h]{0.6\textwidth}
         \centering
         \includegraphics[width=\linewidth]{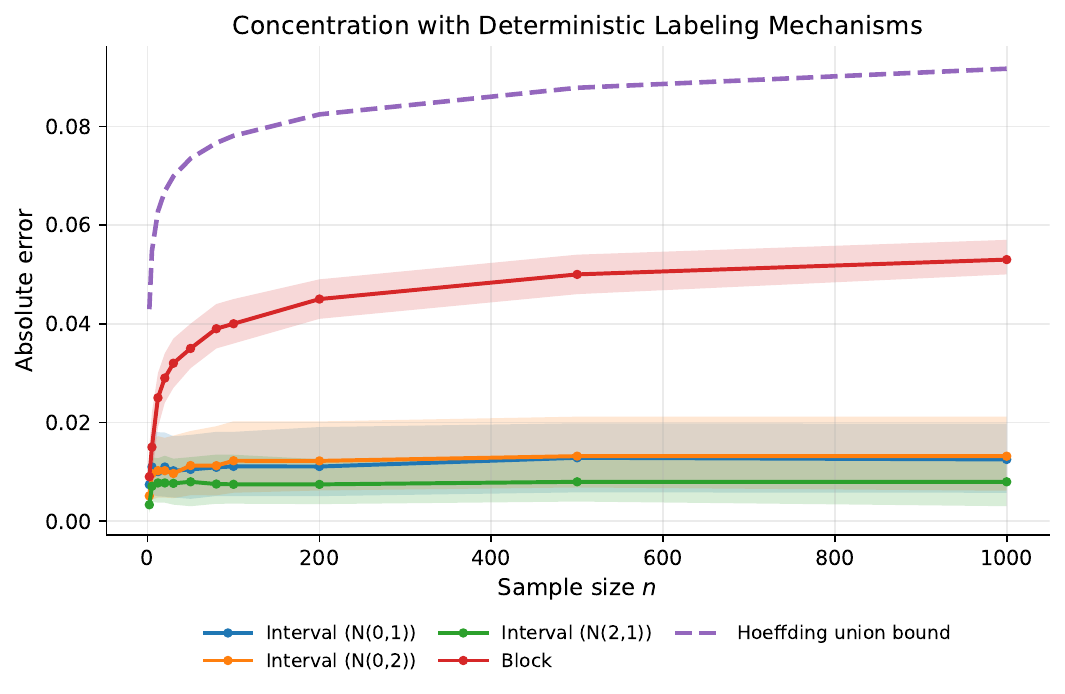}
         \caption{\footnotesize\textbf{Deterministic labeling mechanisms.} Absolute error as a function of sample size (linear scale) for interval-based and block-based deterministic labelers. Interval mechanisms yield stable, low error, while block labeling shows slower improvement and larger asymptotic error; the Hoeffding union bound provides a conservative upper envelope.}
         \label{fig:app:abalation-det-lab-mechanism-complexity}
     \end{subfigure}
     \caption{Empirical concentration behavior of estimation error under different data distributions and labeling mechanisms. Across settings, the observed error exhibits the expected theoretical scaling with sample size, and is compared against Hoeffding-type union bounds to highlight the gap between empirical performance and worst-case guarantees.}
     \label{fig:app:concentration}
\end{figure}

Table~\ref{tab:prob_samp_comp} obtains empirical concentration behavior in the soft-label regime with a Probit link under \(N(0,1)\) covariates. The median and 95th-percentile estimation errors decrease monotonically with sample size, exhibiting the predicted \(O(n^{-1/2})\) scaling. The Hoeffding-based bound remains conservative with zero observed violations across all trials, while the tightness ratio stays stable, indicating a consistent gap between worst-case guarantees and empirical performance.

Figure~\ref{fig:app:concentration} outlines the empirical concentration behavior of the proposed estimators under different labeling regimes. In Fig.~\ref{fig:app:abalation-det-hard-lab-samp-complexity-combined}, corresponding to deterministic hard labels, the absolute error decays approximately at the theoretical rate \(O(n^{-1/2})\) across all Gaussian input distributions, confirming that the estimator exhibits standard parametric concentration while remaining substantially below the Hoeffding union bound. Fig.~\ref{fig:app:abalation-soft-lab-samp-complexity-combined} highlights that soft labeling mechanisms, including sigmoid and probit models, preserve the same asymptotic scaling; however, the sigmoid mechanism consistently achieves lower error, suggesting improved variance properties due to smoother probabilistic supervision. Finally, Fig.~\ref{fig:app:abalation-det-lab-mechanism-complexity} demonstrates the impact of deterministic labeling structure: interval-based mechanisms maintain uniformly low error as the sample size increases, whereas block-based labeling converges more slowly and exhibits a higher error floor. Together, our results indicate that while the asymptotic rate is robust to the choice of labeling strategy, the constant factors and finite-sample behavior are strongly influenced by the labeling mechanism and data distribution.

\end{document}